%% file: lipics-mfcs.tex
\documentclass[a4paper,UKenglish,cleveref, autoref, thm-restate]{lipics-v2021}

\pdfoutput=1
\hideLIPIcs

\input{macros.tex}
\title{Order-invariant cluster first-order logic on graph classes of bounded degree}

\author{Fatemeh Ghasemi}{Univ Paris Est Creteil, LACL, F-94010 Creteil, France }{fatemeh.ghasemi@lacl.fr}{https://orcid.org/0009-0009-6391-1032}{}
\author{Julien Grange}{Univ Paris Est Creteil, LACL, F-94010 Creteil, France}{julien.grange@lacl.fr}{https://orcid.org/0009-0005-0470-1781}{}

\authorrunning{F.~Ghasemi, J.~Grange}

\Copyright{Fatemeh Ghasemi, Julien Grange}

\ccsdesc[500]{Theory of computation~Finite Model Theory}
\ccsdesc[300]{Theory of computation~Complexity theory and logic}

\keywords{Order invariance, cluster first-order logic, model checking}

\nolinenumbers

\EventEditors{John Q. Open and Joan R. Access}
\EventNoEds{2}
\EventLongTitle{42nd Conference on Very Important Topics (CVIT 2016)}
\EventShortTitle{CVIT 2016}
\EventAcronym{CVIT}
\EventYear{2016}
\EventDate{December 24--27, 2016}
\EventLocation{Little Whinging, United Kingdom}
\EventLogo{}
\SeriesVolume{42}
\ArticleNo{23}

\begin{document}

\maketitle

\begin{abstract}
We introduce a new logic, called \emph{cluster first-order logic}, a restricted fragment of first-order logic specifically designed to study order invariance. An order-invariant formula is one on a vocabulary that contains an order; however, whether a structure satisfies it or not is independent of the interpretation of the order.

We show that while order-invariant cluster first-order logic can define properties outside the scope of plain first-order logic in general, its expressive power is included in that of first-order logic when it comes to classes of bounded degree.

We establish this result by explicitly constructing linear orders such that similar structures remain similar when they are expanded with these orders. This similarity-preserving, local-to-global approach is technically involved and somewhat counterintuitive, since adding an order usually reveals distinctions that are otherwise hidden due to the locality of first-order logic.
We believe that this work can be a stepping stone toward applying such techniques to plain first-order logic and toward settling the question of the expressive power of order-invariant plain first-order logic.
\end{abstract}

\input{introduction}
\input{Preliminaries}
\input{BoundedDegree}
\input{ModelChecking}

\input{OtherProperties}

\section{Conclusion}

There remain many open problems and directions left for exploration. The gap between order-invariant cluster first-order logic and order-invariant first-order logic should be studied. One can also examine if the presented order can be tweaked and somehow lifted to plain $\FO$ instead of only $\CFO$. The expressive power of order-invariant cluster $\FO$ on more general classes is also of interest; we conjecture that the same upper bound can be shown for \emph{rooted trees} as well. On trees, it is already known that order-invariant \FO collapses to plain \FO with respect to expressive power. However, in \cite{DBLP:journals/jsyml/BenediktS09}, the authors indirectly use the second method and compute orders maintaining similarity of already \FO-similar trees, but for a chain of unbounded length of intermediate structures between the two structures they began with. Their approach highly relies on the good properties of trees and does not seem to be generalisable to more general classes of structures. Nevertheless, we believe that this result could be recovered through a more direct approach, namely by constructing suitable orders on \FO-similar trees so that their equivalence is preserved, and that establishing this result first for \oicfo would be a stepping stone toward this goal. As stated above, we consider the development of such a method to be of independent interest, as it may be extendable to broader graph classes. We also conjecture that \FO is contained in \oicfo in general.

\bibliography{ref.bib}

\appendix

\end{document}

%% file: macros.tex
\usepackage{algorithm}
\usepackage{algorithmic}
\usepackage{relsize,xspace}
\usepackage[bibliography=common]{apxproof}
 \usepackage{xcolor}
 \usepackage{mathtools}
 \usepackage{comment}
\usepackage{microtype}
\usepackage{amsmath}
\usepackage{amssymb}
\usepackage{amsfonts}
\usepackage{stmaryrd}
\usepackage{bm}
\usepackage{tikz}
\usepackage{refcount}
\usepackage{wrapfig}
\usepackage{thmtools}
\usepackage{thm-restate}
\usepackage{xparse}
\usepackage{hyperref}
\usepackage{dsfont}
\usepackage{float}     
\usepackage{tabularx}  
\usepackage{array}     
\usepackage{booktabs}  

\usepackage{marginnote}

\definecolor{sz}{rgb}{0.1,0.2,0.6}
\definecolor{blue}{rgb}{0.1,0.2,0.5}
\definecolor{brown}{rgb}{0.6,0.6,0.2}

\newtheoremrep{theorem}{Theorem}[section]
\newtheoremrep{lemma}{Lemma}[section]
\newtheoremrep{definition}{Definition}[section]
\newtheoremrep{example}{Example}[section]
\newtheoremrep{claim}{Claim}[section]

\newtheoremrep{observation}{Observation}[section]
\newtheoremrep{corollary}{Corollary}[section]

\usepackage{cleveref}

\crefformat{page}{#2page~#1#3}%
\Crefformat{page}{#2Page~#1#3}%
\crefformat{equation}{#2(#1)#3}%
\Crefformat{equation}{#2(#1)#3}%
\crefformat{figure}{#2Figure~#1#3}%
\Crefformat{figure}{#2Figure~#1#3}%
\crefformat{section}{#2Section~#1#3}
\Crefformat{section}{#2Section~#1#3}
\crefformat{chapter}{#2Chapter~#1#3}
\Crefformat{chapter}{#2Chapter~#1#3}
\crefformat{chapter*}{#2Chapter~#1#3}
\Crefformat{chapter*}{#2Chapter~#1#3}
\crefformat{part}{#2Part~#1#3}
\Crefformat{part}{#2Part~#1#3}
\crefformat{enumi}{#2(#1)#3}
\Crefformat{enumi}{#2(#1)#3}

\usepackage{enumerate}

\usepackage{latexsym}

\newcommand{\CCC}{\mathcal{C}}

\newcommand{\nc}[0]{\mathsf{NC}}

\newcommand{\ntype}[3]{\mathcal{N}^{#1}_{#2}(#3)}
\newcommand{\neightype}[0]{\mathsf{NEIGHTYPE}}
\newcommand{\freq}[0]{\mathsf{FREQ}}
\newcommand{{\cont}}[0]{\mathsf{CONT}}
\newcommand{\allcontext}[0]{\mathsf{CONTEXT}}

\newcommand{\bdist}[0]{\mathsf{Bdist}}

\newcommand{\ie}{i.e.\@ }

\newcommand{\N}{\mathbb{N}}

\renewcommand{\epsilon}{\varepsilon}

\newcommand{\FO}{\ensuremath{\mathsf{FO}}\xspace}
\newcommand{\MSO}{\ensuremath{\mathsf{MSO}}\xspace}
\newcommand{\CFO}{\ensuremath{\mathsf{CFO}}\xspace}

\newcommand{\dist}{\mathsf{dist}}

\renewcommand{\mid}{~:~}

\newcommand{\cfo}[0]{\ensuremath{\mathsf{CFO}}\xspace}
\renewcommand{\leq}{\leqslant}
\renewcommand{\geq}{\geqslant}

\newcommand{\oicfo}{\ensuremath{<\!\!\text{-invariant } \cfo}\xspace}
\newcommand{\oifo}{\ensuremath{<\!\!\text{-invariant } \FO}\xspace}
\newcommand{\qeven}{\ensuremath{Q^{\text{even}}_{\text{atom}}}\xspace}

\newcommand{\con}[0]{\mathds C}
\newcommand{\ocon}[0]{\mathds O}
\newcommand{\econ}[0]{\mathds E}

\newcommand{\bc}[0]{\mathsf{SC}}

\newcommand{\mathcommand}[2]{\newcommand{#1}{\ensuremath{#2}\xspace}}

\mathcommand{\structdom}{A}
\mathcommand{\struct}{\mathcal A}
\mathcommand{\structbisdom}{A'}
\mathcommand{\structbis}{\mathcal A'}
\mathcommand{\tree}{\mathcal T}
\mathcommand{\treebis}{\tree'}

\NewDocumentCommand{\dtp}{O{h}O{t}O{a}O{\tree}}{\ensuremath{\text{dtp}_{#4}^{#1,#2}(#3)}\xspace}
\NewDocumentCommand{\dtpord}{O{h}O{t}}{\ensuremath{\prec^{#1,#2}}\xspace}

\NewDocumentCommand{\atp}{O{a}O{\struct}}{\ensuremath{\text{atp}_{#2}(#1)}\xspace}

\NewDocumentCommand{\boule}{O{k}O{a}O{\struct}}{\ensuremath{N_{#3}^{#1}(#2)}\xspace}
\NewDocumentCommand{\neigh}{O{k}O{a}O{\struct}}{\ensuremath{\mathcal N_{#3}^{#1}(#2)}\xspace}
\NewDocumentCommand{\tneigh}{O{k}O{a}O{\tree}}{\ensuremath{\triangle_{#3}^{#1}(#2)}\xspace}
\NewDocumentCommand{\tneightp}{O{k}O{a}O{\tree}}{\ensuremath{\blacktriangle_{#3}^{#1}(#2)}\xspace}

\NewDocumentCommand{\context}{O{k}O{a}O{(\struct,<)}}{\ensuremath{\con_{#3}^{#1}(#2)}\xspace}
\NewDocumentCommand{\ocontext}{O{k}O{a}O{(\struct,<)}}{\ensuremath{\ocon_{#3}^{#1}(#2)}\xspace}

\NewDocumentCommand{\Int}{O{\boule}O{(\struct,<)}}{\ensuremath{\text{Int}_{#2}(#1)}\xspace}
\NewDocumentCommand{\OInt}{O{\boule}O{(\struct,<)}}{\ensuremath{\text{Int}^+_{#2}(#1)}\xspace}

%% file: introduction.tex
\section{Introduction}

Logical languages like \emph{first-order logic} ($\FO$) are the main way to frame queries in finite model theory and database theory. By default, when writing a query in such a language, one only uses the relations appearing in the signature of the structures at hand. In practice however, structures do not exist in the void and are ordered (albeit in an arbitrary fashion) on tape or in memory, which only allow linear, one-dimensional representations of data. From this observation comes the natural question of whether our formal languages can use this additional information, provided the result of formula evaluation does not depend upon the precise order the input structure is endowed with.

To that end, we authorise the use of an extra binary symbol $<$ in the syntax of formulas,
which is guaranteed to be interpreted as a linear order on the elements of the structure. However, insofar as we should only define properties of the underlying unordered structures, it is asked that whether such a formula holds in the ordered structure does not depend on the choice of the order. Such a formula is called an order-invariant formula. Order invariance stems from the fields of descriptive complexity, in particular the quest for a logic for $\mathsf{PTIME}$,  database theory, where one studies generic queries that should depend only on the underlying unordered database, and finite model theory, where one often allows access to additional relations as technical devices \cite{DBLP:books/aw/AbiteboulHV95,DBLP:conf/dimacs/Vianu96}.  
Order-invariant first-order logic captures exactly this idea: formulas may use an auxiliary linear order, but their truth must be independent of which order is chosen \cite{DBLP:conf/csr/Schweikardt13}.

From the perspective of descriptive complexity, this makes order-invariant logics particularly interesting. Many classical logical characterisations of complexity classes are obtained over ordered structures, where a built-in order provides a canonical way to organise computation \cite{DBLP:books/daglib/0095988}. Studying order-invariant fragments is therefore a way to understand whether access to order genuinely increases expressive power, or whether order merely serves as a technical aid. In this sense, order-invariant logics belong to the broader programme of understanding definability on unordered finite structures \cite{DBLP:conf/csr/Schweikardt13,DBLP:conf/dimacs/Vianu96}.

A natural way to understand these logics is through their expressive power.
Access to a linear order provides a global enumeration of the elements, and
therefore it is not surprising that allowing formulas to access an order
can increase expressive power.
In the realm of finite structures, it has been shown that order-invariant first-order is more expressive than plain first-order logic \cite{DBLP:conf/csr/Schweikardt13}. Prominent separating examples are that of Rossman \cite{DBLP:conf/lics/Rossman03} (which only uses the successor relation induced by the order -- this restriction of order-invariant \FO is known as \emph{successor-invariant \FO}), and the one due to Gurevich (which can be found in \cite[Theorem~5.3]{DBLP:books/sp/Libkin04}). In all of the known examples proving the existence of order-invariant first-order formulas that have no first-order equivalent, the underlying structure is very complex from the eyes of graph structural theory. Hence, one may ask \emph{whether on tamer structures we can see a collapse of order-invariant \FO on plain \FO.} This is the case on some specific classes of trees and on classes of graphs with bounded tree-depth for which first-order logic and its order-invariant extension have the same expressive power.
For the former, the proof technique is to exhibit a chain of intermediate structures and intermediate orders that are pairwise similar (by which we informally mean that these structures agree on all the \FO-formulas up to some quantifier depth) \cite{DBLP:journals/jsyml/BenediktS09}. The latter is shown by proving that \MSO\footnote{Short for \emph{monadic second-order logic}, which is an extension of first-order logic where quantification over sets of vertices is allowed. In general, it is strictly more expressive than \FO.} and \FO have the same expressive power on classes of graphs of bounded tree-depth, and then proving the inclusion of order-invariant FO in MSO \cite{DBLP:journals/tocl/EickmeyerEH17}. To get this inclusion, starting from two FO-similar structures, one constructs orders that maintain this similarity. The same method is used in \cite{bednarczyk2025expressive} on \emph{first-order logic with two variables} ($\FO^2$) to show that order-invariant $\FO^2$ is included in $\FO$ on bounded degree structures. While similar in terms of methodology, the two mentioned works (\cite{DBLP:journals/tocl/EickmeyerEH17, bednarczyk2025expressive}) construct orders in widely different fashions, which are influenced by the very different restrictions (bounded tree-depth and bounded degree). See \Cref{table:summary_order_invariant_fo}, for a summary of known results on the expressive power of order-invariant \FO.

\begin{table}[ht]
\centering
\small
\renewcommand{\arraystretch}{1.15}
\begin{tabularx}{\textwidth}{>{\raggedright\arraybackslash}p{0.34\textwidth} X}
\toprule
\textbf{Setting} & \textbf{Result} \\
\toprule
Arbitrary finite structures
& Every order-invariant $\FO$ formula is Gaifman-local \cite{DBLP:journals/tocl/GroheS00}. \\
\midrule
Strings; trees
& Collapse to $\FO$ \cite{DBLP:journals/jsyml/BenediktS09}. \\
\midrule
Bounded-treewidth graphs; bounded-degree graphs
& Contained in MSO \cite{DBLP:journals/jsyml/BenediktS09}; \\
\midrule
Bounded-degree graphs&
Successor-invariant \FO collapses to \FO~\cite{DBLP:journals/lmcs/Grange21}. \\
\midrule

Bounded-treedepth graphs
& Collapse to $\FO$ \cite{DBLP:journals/tocl/EickmeyerEH17}. \\
\midrule
Planar graphs; more generally, decomposable structures
& Contained in MSO \cite{DBLP:conf/lics/ElberfeldFG16}. \\
\midrule
Hollow trees
& Collapse to $\FO$ \cite{DBLP:conf/csl/GrangeS20}. \\
\midrule
Classes of bounded degree structures
& Order-invariant $\FO^2$ is contained in $\FO$; moreover, the same bounded-degree containment also holds for the two-variable counting extension considered there \cite{bednarczyk2025expressive}. \\
\bottomrule
\end{tabularx}
\caption{Known results on the expressive power of invariant first-order logics.}
\label{table:summary_order_invariant_fo}
\end{table}


To summarise, the methods showing the collapse of order-invariant $\FO$ to plain $\FO$ are:
\begin{itemize}
	\item Showing the inclusion of order-invariant $\FO$ in $\MSO$ and an equivalence in the expressive power of $\MSO$ and $\FO$, on a class $\CCC$.
	\item Showing that for any two $\FO$-similar structures in $\CCC$, one can compute orders such that the expanded structures with said orders remain $\FO$-similar; the proof of why this implies the collapse of order-invariant \FO to plain \FO in given in \Cref{main theorem}.
\end{itemize}
Now, if one wants to extend the current results, bounded-degree graphs seem like a good starting point. We know that on these structures, order-invariant \FO is included in \MSO and successor-invariant \FO collapses to \FO~\cite{DBLP:journals/lmcs/Grange21}. It is therefore natural to ask whether, on such classes, order-invariant \FO collapse to \FO as well. As \MSO is strictly more expressive than \FO on graphs of bounded degree\footnote{\MSO can, for instance, express \textbf{connectivity}, see e.g. \cite[Chapter~4]{DBLP:books/sp/Libkin04}.}, the first method will not work.

In this work, we take a step towards settling the question of \emph{whether the expressive power of order-invariant first-order logic and plain first-order logic on graphs of bounded degree is the same.}

Although our main results concern the expressive power of order-invariant logics, the method used to prove them is also part of the contribution. The central step is to start from two bounded-degree structures that are sufficiently similar from the point of view of plain \FO, and to construct linear orders on them so that the resulting ordered structures remain indistinguishable by \CFO. In this sense, the proof shows how a global object, namely a linear order, can be added without destroying the relevant local similarity.
Understanding when such orders can be constructed, and how they have to be organised, gives structural information about the graphs under consideration. We believe that this order-construction method may also be useful in future attempts to compare order-invariant \FO with plain \FO on broader classes of structures. Thus, beyond the specific results proved here for \CFO, the paper develops a toolkit for studying order-invariant first-order logics.

Computing such good orders for \FO on bounded-degree graphs and even trees (where we already know that order-invariant \FO and plain \FO collapse) seems very hard and out of reach. Therefore, we tackle this problem by introducing a new fragment of first-order logic: \emph{cluster first-order logic} (\CFO).

Cluster first-order logic operates on ordered graphs, and variables of this logic are grouped in clusters. Variables of the same cluster can be compared in an unrestricted fashion (both in terms of edges and precedence in the order), while elements from different clusters can (almost) not be compared. To prevent us from grouping all the variables in the same cluster, variables of a cluster are guarded by the edge-relation, meaning that when a new variable from a cluster is quantified, it has to be adjacent to a previously quantified variable of the same cluster.
Up to this point, the logic is already as expressive as \FO on bounded-diameter graphs, as shown in \Cref{lemma:cfo_equals_fo_bounded_diameter}. To grant more expressive power to \CFO, we allow a single element of each cluster to be compared with the order to every element of one other cluster; the syntax of \CFO prevents this dependency to be circular. Order-invariant \CFO is the set of such sentences whose truth value do not depend on the order.


\emph{We show that order-invariant cluster first-order logic is included in \FO on classes of graphs of bounded degree (Theorem~\ref{main theorem}).} Let us emphasis again that we achieve this by computing good orders on similar bounded degree structures such that they remain similar after expanded by the orders.

We then prove that model-checking\footnote{\emph{Model-checking} is about determining whether a given formula $\varphi$ holds in a given graph $G$. It is customary to distinguish the size of $\varphi$ from the size of $G$ as the latter is often much larger than the former. As such, the complexity of model-checking problems is often looked at through the lense of parametrized complexity, and a model-checking algorithm will be considered satisfactory when the dependency on $G$ is polynomial. In case such an algorithm exists, the model-cheking problem is said to be fixed-parameter tractable.} of an order-invariant cluster first-order formula on classes of bounded degree can be done in fixed-parameter tractable time. Model-checking results for order-invariant and successor-invariant \FO such as those in \cite{EickmeyerEtAl20, DBLP:conf/csl/EickmeyerK16,DBLP:conf/lics/HeuvelKPQRS17} proceed by computing nice orders or successor relations such that when a graph from those classes is expanded to a structure with those relations, the expanded class remains a class for which a fixed-parameter tractable algorithm exists. This way, model-checking an order- or successor-invariant \FO formula is reduced to model-checking a plain \FO formula to a class of structures with order or successor relations. This method shall not work for order-invariant \FO on bounded degree graphs. Indeed, bounded-degree classes of graphs are incomparable with bounded twin-width classes of graphs, and it has been shown that model-checking of plain \FO formulas on ordered structures is fixed-parameter tractable on a class $\CCC$ if and only if $\CCC$ has bounded twin-width \cite{BonnetEtAl24}. Therefore, it is impossible to order every bounded-degree class of graphs so as to get a tame class of structures, in the sense that \FO model-checking on this class is fixed-parameter tractable\footnote{More precisely, there exists a class of bounded degree graphs such that expanded with any order relations, the expanded class will be monadically independent.}. A concrete example of such a class is cubic graphs.

Accordingly, we had to approach the model-checking of order-invariant \CFO in a totally different way. We know that on a class of bounded degree, every sentence of \oicfo has an equivalent \FO-sentence: an idea could be to translate \oicfo to \FO, and then to rely on the fixed-paramater tractability of \FO on graphs of bounded degree~\cite{DBLP:journals/mscs/Seese96}. However, we do not know such a translation to be computable. Instead, we use the structural tools developed for the expressive-power result. More precisely, we show that on bounded-degree graphs, the truth of an \oicfo sentence depends only on finitely many local configurations. Since only boundedly many such configurations can occur, model-checking reduces to checking the formula on a finite summary of the ordered graph. We then show that this summary can itself be computed in fixed-parameter tractable time, which yields a fixed-parameter tractable model-checking algorithm on bounded-degree graphs.

As a last result, we adapt Gurevich's example separating order-invariant \FO from \FO to get an order-invariant \CFO property that falls outside the scope of \FO.

To summarise:
\begin{itemize}
	\item \oicfo is included in plain \FO on bounded-degree classes of graphs (\Cref{main theorem}), and model-checking is fixed-parameter tractable on such classes (\Cref{th:model-checking}), and
	\item it is strictly more expressive than \FO in general (\Cref{theorem: separating example}).
\end{itemize}

\input{overview}


%% file: overview.tex
\section{Preliminaries}
\subsection{Graphs, general notations.} In this setting, we consider \emph{loopless, undirected, coloured simple graphs}. Such graphs can be formally represented as relational structures over the vocabulary $\sigma = \{E, P_1, \dots, P_m\}$, where $E$ is a binary predicate representing edges (whose interpretation is always a symmetric and loopless binary relation) and each $P_i$ is a unary predicate representing a vertex colour. We will also consider ordered graphs, by which we mean structures over vocabulary $\sigma\cup\{<\}$ where $<$ is a binary predicate, interpreted as a linear order on the elements. Graphs and structures are denoted $\mathcal A,\mathcal B,\dots$ while their sets of elements are denoted $A, B,\dots$; if $P\subseteq A$, we write $\mathcal A_{|P}$ for the substructure of $\mathcal A$ induced by $P$. An isomorphism between two structures is a bijection between their element set that respects all predicates of $\sigma$ in both directions.

For standard terminology and further background on the underlying structural properties, we refer to \cite{DBLP:books/sp/Libkin04}
 for a more comprehensive treatment of finite model theory. 

 \subsection{First-order logic.} We use standard first-order logic \FO over relational vocabularies. If $\tau$ is such a vocabulary, formulas over $\tau$ are built from atomic formulas $R(x_1,\dots,x_n)$, for $R\in\tau$, and equalities $x=y$, using Boolean connectives and quantification over elements of the structure. When we want to make the vocabulary explicit, we write $\FO(\tau)$ for first-order logic over $\tau$.

Given a formula $\varphi\in\FO(\tau)$ with free variables in $X$, a $\tau$-structure $\mathcal A$, and a valuation $\nu:X\to A$, we write $(\mathcal A,\nu)\models\varphi$ if $\varphi$ holds in $\mathcal A$ under $\nu$. If $X=\emptyset$, then $\varphi$ is a sentence and we simply write $\mathcal A\models\varphi$.

The \emph{quantifier rank} of an $\FO$ formula is the maximum nesting depth of its quantifiers. We write $\FO[k]$ for the set of $\FO$ sentences of quantifier rank at most $k$.

\subsection{Cluster first-order logic}
We define \emph{cluster first-order logic} (\CFO), a syntactic fragment of \FO over ordered,
coloured graphs. Variables are organised into \emph{clusters} indexed by words
$w \in \Sigma^*$ over a fixed infinite alphabet $\Sigma$.
Inside each cluster, variables are enumerated
$x^0_w, x^1_w, x^2_w, \ldots$.
Except for $x^0_w$, variables may only be introduced by \emph{guarded}
quantification, enforcing adjacency to a previously introduced variable of the
same cluster. Moreover, $E$ and $=$ may only relate variables from the same
cluster. The order relation $<$ may additionally compare a variable from cluster
$w$ with the first variable $x^0_{w\alpha}$ of a child cluster $w\alpha$, where $\alpha\in\Sigma$.

\noindent
\textbf{Indices and variables.}
Fix an infinite alphabet $\Sigma$.
For a finite set of indices $S \subseteq \Sigma^* \times \mathbb{N}$, define
$X_S := \{x^i_w \mid (w,i) \in S\}.$
We say that $S$ is \emph{valid} if for all $w \in \Sigma^*$, $\alpha \in \Sigma$,
and $i \in \mathbb{N}$:
\begin{itemize}
  \item if $(w,i) \in S$, then $(w,j) \in S$ for every $j < i$;
  \item if $(w\alpha,0) \in S$, then $(w,0) \in S$.
\end{itemize}
\noindent
\textbf{Syntax.}
Let $\sigma := \{E, P_1, \ldots, P_N\}$ where $E$ is binary and each $P_\ell$ is
unary, and consider the vocabulary $\sigma \cup \{<\}$ where $<$ will be interpreted
as a linear order. For every valid set $S$, we define $\CFO_S[k]$ (formulas of
quantifier rank at most $k$) inductively.

For $k=0$, $\CFO_S[0]$ is the smallest set containing the following atomic
formulas and closed under Boolean connectives:
\begin{itemize}
  \item $P_\ell(x^i_w)$ for $x^i_w \in X_S$,
  \item $E(x^i_w,x^j_w)$ and $x^i_w \sim x^j_w$ for $x^i_w,x^j_w \in X_S$ and
        ${\sim} \in \{=,<\}$,
  \item $x^0_{w\alpha} \sim x^i_w$ for $x^0_{w\alpha}, x^i_w \in X_S$ and
        ${\sim} \in \{=,<\}$.
\end{itemize}
For $k+1$, $\CFO_S[k+1]$ is the smallest set closed
under Boolean connectives that contains all formulas of the
following forms (where $\psi \in \CFO_{S'}[k]$):
\begin{itemize}
  \item Root introduction (only when $S=\emptyset$):
  $\exists x^0_\epsilon \, \psi
 \quad\text{where } S'=\{(\epsilon,0)\}.$
  \item Opening a child cluster:
  $\exists x^0_{w\alpha} \, \psi
  \quad\text{where } (w,0)\in S,\ (w\alpha,0)\notin S,\ \alpha\in\Sigma,
  \text{ and } S' = S \cup \{(w\alpha,0)\}.$
  \item Guarded continuation inside a cluster:
  $\exists x^i_w \bigl(E(x^i_w,x^j_w) \wedge \psi\bigr),$
  where $(w,j)\in S$ and $i$ is the smallest index such that $(w,i)\notin S$,
  and $S' = S \cup \{(w,i)\}$.
\end{itemize}
Finally, we let $\CFO_S := \bigcup_{k \in \mathbb{N}} \CFO_S[k],$
omit $S$ when $S=\emptyset$, and write simply $\CFO$.
\noindent

\textbf{Semantics.}
A formula $\psi \in \CFO_S$ is interpreted as a first-order formula with free
variables in $X_S$. For an ordered $\sigma$-structure $(\mathcal A,<)$ and a
valuation $\nu : X_S \to A$, satisfaction is defined as usual and denoted
$(\mathcal A,<,\nu) \models \psi$.

\begin{example}
  Formula $\varphi_1$ detects a triangle within a specific cluster $\alpha$ ($\alpha\in\Sigma$).
\begin{equation*}
\varphi_1 :=
\exists x_{\alpha}^1 \Bigl(
  E(x_{\alpha}^1, x_{\alpha}^0)
  \land
  \exists x_{\alpha}^2 \bigl(
    E(x_{\alpha}^2, x_{\alpha}^1)
    \land
    E(x_{\alpha}^2, x_{\alpha}^0)
  \bigr)
\Bigr).
\end{equation*}
This formula belongs (among other) to $\CFO_{\{(\epsilon,0),(\alpha,0)\}}$. In fact, $\varphi_1$ belongs to $\CFO_S$ for any valid set $S$ of indices containing $(\alpha,0)$ but not $(\alpha,1)$. The sub-formula $\exists x^2_{\alpha} (E(x^2_{\alpha}, x^1_{\alpha}) \wedge E(x^2_{\alpha}, x^0_{\alpha}))$ belongs to $\CFO_{\{(\epsilon,0),(\alpha,0),(\alpha,1)\}}$.

Formula $\varphi_2\in\CFO_{\{(\epsilon,0)\}}$ asserts the existence of a neighbour $x^1_\epsilon$ of $x^0_\epsilon$ such that the interval between $x^0_\epsilon$ and $x^1_\epsilon$ (by which we mean all the elements of the structure that are between $x_\epsilon^0$ and $x_\epsilon^1$ with respect to $<$) contains a vertex belonging to some triangle:
\begin{equation*}
\varphi_2 :=
\exists x_\epsilon^1 \Bigl(
  E(x_\epsilon^1, x_\epsilon^0)
  \land
  \exists x_{\alpha}^0 \Bigl(
    (x_\epsilon^0 < x_{\alpha}^0 \land x_{\alpha}^0 < x_\epsilon^1)
    \lor
    (x_\epsilon^1 < x_{\alpha}^0 \land x_{\alpha}^0 < x_\epsilon^0)
  \Bigr)
  \land
  \varphi_1
\Bigr).
\end{equation*}
Crucially, note that it is not possible in \CFO to check whether the two other vertices of the triangle are in the interval between $x_\epsilon^0$ and $x_\epsilon^1$. In $\CFO$, for two variables in the same cluster $w$, we can refer to at most one vertex at a time that lies in the interval between them and is not in the same cluster $w$, by using the first variable of a new cluster $w\alpha$ ($\alpha\in\Sigma$).
\end{example}
As can be expected, the obligation for new variables of a given cluster to be guarded via $E$ to a previous variable does not hinder the expressive power of \CFO on graphs of small diameters:

\begin{remark}
  \label{lemma:cfo_equals_fo_bounded_diameter}
  Let $\mathcal C$ be a class of ordered graphs such that the underlying graph of each $(\mathcal A,<)\in\mathcal C$ has diameter at most $\delta$.

  Then $\FO(\sigma\cup\{<\})$ and \CFO have the same expressive power on $\mathcal C$. In other words, every subclass of $\mathcal C$ definable by a first-order sentence over the ordered vocabulary $\sigma\cup\{<\}$ is definable in \CFO, and conversely.
\end{remark}

\begin{proofsketch}
  One direction needs no proof, insofar as \CFO is a fragment of $\FO(\sigma\cup\{<\})$. For the other direction, consider a sentence $\varphi\in\FO(\sigma\cup\{<\})$. Without loss of generality, using the equivalence between $\forall x\psi$ and $\neg\exists x\neg\psi$, we can thus assume $\varphi$ contains only existential quantifiers.

  We construct a sentence $\widehat\varphi\in\CFO$ equivalent to $\varphi\in\FO(\sigma\cup\{<\})$ as follows, only using cluster $\epsilon$.
  First, we prepend $\varphi$ with the existential quantification $\exists x^0_\epsilon$. Then, moving from the outside in, and assuming the variables $x_\epsilon^0,\dots,x_\epsilon^i$ have been used so far in the construction of $\widehat\varphi$, we replace subformula $\exists x \psi$ via \[\bigvee_{j=0}^{\delta}\exists x_\epsilon^{i+1}E(x_\epsilon^{i+1},x_\epsilon^0)\land \exists x_\epsilon^{i+2}E(x_\epsilon^{i+2},x_\epsilon^{i+1})\land\cdots\land\exists x_\epsilon^{i+j}E(x_\epsilon^{i+j},x_\epsilon^{i+j-1})\land\psi\,.\]
All variables introduced in the translation belong to the same cluster, so all atomic relations of the ordered vocabulary appearing in $\varphi$, including order atoms, are allowed in the resulting \CFO formula.
Since in the underlying graphs of structures from $\mathcal C$, every possible witness to $\exists x \psi$ is at distance at most $\delta$ from the vertex interpreting $x_\epsilon^0$, $\varphi$ and $\widehat\varphi$ are equivalent on $\mathcal C$.
\end{proofsketch}

A sentence $\varphi\in\CFO$ is \emph{order-invariant} (or $<$-invariant) if its truth value is independent of the specific order chosen on a graph, i.e. if for any finite graph $\mathcal A$, and any two linear orders $<_1$ and $<_2$ on its vertex-set $A$,
\[
(\mathcal A, <_1) \models \varphi \iff (\mathcal A, <_2) \models \varphi\,.
\]
In this case, we write $\mathcal A\models\varphi$. We write \oicfo for the set of all order-invariant \CFO-sentences.
This ensures the logic defines properties inherent to the underlying graph $\mathcal A$ rather than an arbitrary arrangement of its elements.

Given a logic $\mathcal{L}\in\{\FO,\CFO,\oicfo\}$ and a set $X$ of variables, two structures $\mathcal A$ and $\mathcal B$ (specifically, ordered graphs in the case of \CFO and graphs in the case of \oicfo), together with valuations $\nu_A:X\to A$ and $\nu_B:X\to B$, are said to be $k$-\emph{equivalent} (or $k$-\emph{similar}) with respect to $\mathcal{L}$, denoted $(\mathcal A,\nu_A) \equiv^k_\mathcal{L}(\mathcal B,\nu_B)$, if they satisfy exactly the same formulas of $\mathcal L$ with quantifier rank at most $k$.
If $X=\emptyset$, we omit the valuations and write $\mathcal A \equiv^k_\mathcal{L} \mathcal B$.

To reason about indistinguishability in \CFO, we use a game characterisation called the
\emph{cluster Ehrenfeucht--Fra\"{i}ss\'{e} game} (CEF), which captures exactly the expressive
power of \CFO. The game is played for $k$ rounds between two players, \emph{Spoiler}
and \emph{Duplicator}, on two ordered graphs. In each round, Spoiler selects an
element in a \CFO-consistent way, either opening a new cluster
or introducing a new neighbour inside an existing cluster, and Duplicator must
respond in the other graph with a matching choice that preserves adjacency,
equality, and the restricted order comparisons allowed in \CFO. Duplicator aims to
maintain a partial isomorphism respecting the cluster discipline, while Spoiler tries
to break this correspondence. Duplicator wins if they can maintain these conditions for
all $k$ rounds. The precise connection with \CFO-equivalence is the following.
\begin{restatable}[CEF captures $\cfo$]{theorem}{CEFcapturesCFO}
\label{th:CEF_captures_CFO}
Let $(\mathcal A,<_A)$ and $(\mathcal B,<_B)$ be two ordered graphs. Then $(\mathcal{A},<_A)\equiv^k_{\cfo}(\mathcal{B},<_B)$ if and only if Duplicator has a winning strategy in the $k$-round cluster EF game between $(\mathcal A,<_A)$ and $(\mathcal B,<_B)$.
\end{restatable}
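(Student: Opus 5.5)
We follow the standard Ehrenfeucht--Fra\"{i}ss\'{e} argument, carried out for formulas with free variables. The game is defined for an arbitrary valid index set $S$, starting from positions $(\mathcal A,<_A,\nu_A)$ and $(\mathcal B,<_B,\nu_B)$ with $\nu_A:X_S\to A$ and $\nu_B:X_S\to B$. A position is \emph{winning for Duplicator at depth $0$} when $\nu_A$ and $\nu_B$ satisfy the same atomic formulas of $\CFO_S[0]$. These are the colour atoms, the atoms $E$, $=$ and $<$ within a cluster, and the atoms $x^0_{w\alpha}\sim x^i_w$ between a child root and its parent cluster. A round consists of Spoiler choosing one of the three moves allowed by the syntax: root introduction if $S=\emptyset$, opening a child cluster $w\alpha$ with $(w,0)\in S$, or a guarded continuation $x^i_w$ adjacent to a chosen $x^j_w$. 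Spoiler plays the move in either structure, and Duplicator answers with the same move type on the same index, respecting the guard. We then prove by induction on $k$ a stronger claim: for every valid $S$, $(\mathcal A,<_A,\nu_A)\equiv^k_{\CFO_S}(\mathcal B,<_B,\nu_B)$ iff Duplicator wins the $k$-round game from that position. The theorem is the case $S=\emptyset$.

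The first preliminary step is finiteness: for every valid $S$ and every $k$, $\CFO_S[k]$ contains only finitely many formulas up to logical equivalence. This is needed because the alphabet $\Sigma$ is infinite, so there are infinitely many child-cluster labels $\alpha$. I would handle this by observing that a formula's truth depends on $\alpha$ only through whether $w\alpha$ already occurs in $S$, and $(w\alpha,0)\notin S$ is required when opening. Renaming a fresh label to a canonical fresh one (for instance, the least letter not used in $S$ under a fixed enumeration) is therefore a semantics-preserving bijection, so finitely many labels suffice. The same reasoning shows that the index $i$ of a guarded continuation is forced, and the guard $j$ ranges over a finite set. Induction on $k$ then gives finitely many formulas up to equivalence, and hence, for each position, a characteristic formula (a ``Hintikka'' formula) $\chi^k_{S,\nu_A}\in\CFO_S[k]$ axiomatising its $\CFO_S[k]$-type.

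For the direction from equivalence to a Duplicator win, we argue inductively. Suppose Spoiler opens cluster $w\alpha$ with $a\in A$. The extended position $\nu_A[x^0_{w\alpha}\mapsto a]$ satisfies $\chi^{k-1}_{S'}$, so $\exists x^0_{w\alpha}\,\chi^{k-1}_{S'}\in\CFO_S[k]$ holds in $\mathcal A$. By equivalence it also holds in $\mathcal B$, which yields a response $b$, and the induction hypothesis applies. For a guarded move we use $\exists x^i_w(E(x^i_w,x^j_w)\wedge\chi)$ in the same way, and the witness is automatically adjacent to $\nu_B(x^j_w)$, so the guard is respected.

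For the converse direction, we do induction on the structure of formulas in $\CFO_S[k]$. Boolean connectives are immediate. For a formula of each of the three quantified forms, a witness in $\mathcal A$ is played as a Spoiler move, and Duplicator's winning answer provides a witness in $\mathcal B$ satisfying $\psi$ by the induction hypothesis at rank $k-1$. The guarded case uses the fact that Duplicator's answer respects the guard edge. I expect the main obstacle to be the finiteness step, specifically the handling of the infinite alphabet. The remaining care point is aligning the depth-$0$ winning condition exactly with the restricted atoms, so that the order is compared only between $x^0_{w\alpha}$ and variables of cluster $w$, and never across unrelated clusters.
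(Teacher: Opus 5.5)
Your proposal is correct and is essentially the paper's proof. Like you, the paper strengthens the statement to arbitrary valid $S$ with valuations, builds characteristic formulas in $\CFO_S[k]$ from finiteness up to equivalence, and runs the same induction on $k$, with the three move types mirroring the three quantified forms.

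The only divergence is the finiteness step. The paper gets it directly from $\CFO_S[k]\subseteq\FO[k]$ with free variables in the finite set $X_S$, since the standard finiteness lemma for $\FO[k]$ already handles infinitely many bound-variable names. Your renaming argument for the infinite alphabet is therefore not needed. If you keep it, phrase it as a transposition $w\alpha u\leftrightarrow w\beta u$ on all descendant clusters, because $w\beta$ may itself be opened inside $\psi$.
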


Game characterisations of logical equivalence are classical in finite model theory,
and typically follow a common template. We use such a game as a technical tool in
our proofs to establish indistinguishability between ordered graphs.

We then move to showing our main result.
\section{Main result}
Order invariance means that formulas are allowed to use an additional linear
order $<$ on the vertices as a technical aid, but the property they define
should not depend on which particular order is chosen. In other words, even
though the formula may refer to $<$, its truth must be the same for every
possible ordering of a given graph.

When comparing two logics on a class of graphs, we say that
one is included in the other if every property definable on that class by the first logic can be expressed as well by the second one. Two graphs $\mathcal A$ and $\mathcal B$ are $k$-equivalent with respect to a logic $\mathcal L$, denoted $\mathcal A\equiv^k_{\mathcal L}\mathcal B$, if they satisfy exactly the same $\mathcal L$-formulas of quantifier rank at most $k$.

We can now formally state the main result of this work.
\begin{restatable}{theorem}{Maintheorem}
\label{main theorem}
Let $\mathcal{C}$ be a class of graphs of bounded degree. Order-invariant \CFO is included in \FO on $\mathcal C$.
\end{restatable}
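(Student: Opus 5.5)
The plan follows the second method from the introduction. Fix $d$ bounding the degrees in $\mathcal C$ and an order-invariant sentence $\varphi\in\CFO$ of quantifier rank $k$. It suffices to find $f(k,d)$ with the following property: whenever $\mathcal A,\mathcal B\in\mathcal C$ satisfy $\mathcal A\equiv^{f(k,d)}_{\FO}\mathcal B$, there are linear orders $<_A,<_B$ with $(\mathcal A,<_A)\equiv^k_{\CFO}(\mathcal B,<_B)$. Granting this, order invariance gives $\mathcal A\models\varphi \iff (\mathcal A,<_A)\models\varphi\iff(\mathcal B,<_B)\models\varphi\iff\mathcal B\models\varphi$. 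Hence the class defined by $\varphi$ on $\mathcal C$ is a union of $\equiv^{f(k,d)}_{\FO}$-classes. There are finitely many such classes, each definable by a sentence of $\FO[f(k,d)]$, so $\varphi$ is equivalent on $\mathcal C$ to a finite disjunction of these sentences.

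To construct the orders, I would use Hanf's theorem for bounded degree. For $f$ large enough, $\mathcal A\equiv^{f}_{\FO}\mathcal B$ implies that, for a radius $r=r(k)$ (roughly $2^k$) and a threshold $t=t(k,d)$, the two graphs have the same number of realisations of every isomorphism type of $r$-neighbourhood, counted up to $t$. Each neighbourhood type is then either \emph{rare} (fewer than $t$ occurrences, and the same exact number in both graphs) or \emph{frequent} (at least $t$ occurrences in both).

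I would build $<_A$ and $<_B$ in the same way. First, place the rare elements, together with a large enough ball around them, identically at the beginning of both orders; they are matched by an isomorphism. Next, lay out the remaining elements so that every frequent element is surrounded, inside large order-intervals, by many copies of every frequent type. Concretely, I would list the elements in an order that tracks the graph's local structure, for instance by concatenating blocks, each being a BFS-like enumeration of a ball, with the blocks interleaved so that every local "order-context" pattern appears either in both orders or with multiplicity at least $t$ in each.

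The key invariant is an ordered local type (the paper's $\context$ macros suggest this notion). It records the ordered $r$-ball of an element, plus the multiset, counted up to threshold, of neighbourhood types lying in each order-interval between consecutive ball elements. The orders must be chosen so that the multiset of these ordered local types agrees between the two graphs up to threshold.

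Finally, I would prove via Theorem~\ref{th:CEF_captures_CFO} that Duplicator wins the $k$-round cluster EF game. The strategy keeps, for each cluster, a partial isomorphism between the ordered balls around the pebbles of that cluster. This is possible because guarded moves stay within distance $k$ of the cluster root $x^0_w$. When Spoiler opens a child cluster $w\alpha$, the new root is constrained only by its neighbourhood type and its position relative to the pebbles of the parent cluster $w$, that is, by which order-interval between those pebbles it falls in. The interval-content component of the ordered type, together with the threshold and the rare/frequent split, guarantees a matching answer of the same type that also lies far from the already-used balls. Because order comparisons in the syntax are only allowed between a cluster and its children, no cross-cluster consistency beyond this tree of interval constraints is needed, which is why the strategy can be local.

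The main obstacle is the order construction itself. The order must be built so that ordered contexts are faithfully reproduced, with intervals carrying the same threshold counts of types, while rare elements are treated exactly and frequent types are interleaved densely enough. The radii and thresholds also have to be chosen in decreasing order across the $k$ rounds. I would first try the simplest candidate: rare part first, then a round-robin interleaving of disjoint balls of frequent types. I would then check that every interval between two elements of the same small ball contains either an identical rare configuration or at least $t$ copies of each frequent type.
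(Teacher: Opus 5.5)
Your outer reduction matches the paper's: order invariance, together with the finitely many $\equiv^{f(k)}_{\FO}$-classes, reduces everything to finding orders with $(\mathcal A,<_A)\equiv^k_{\CFO}(\mathcal B,<_B)$. The invariant you propose for the cluster game, however, is too weak. When Spoiler opens $x^0_{w\alpha}$, the new root is \emph{not} constrained only by its neighbourhood type and by which interval between the current $w$-pebbles it falls in. Spoiler can make three further kinds of moves.
\begin{itemize}
\item Guarded moves in cluster $w\alpha$, compared with $<$. So the \emph{ordered} ball of the new root matters.
\item Opening $w\alpha\beta$ and comparing its root with the $w\alpha$-pebbles. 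So what lies in each interval of that ordered ball matters, and so on recursively.
\item Continuing cluster $w$ later and comparing new $w$-pebbles with $x^0_{w\alpha}$. So the position counts relative to the whole remaining ball of $x^0_w$, including the two unbounded outer intervals.
\end{itemize}
What must be preserved is the recursive notion. The pebble placed in round $j$ must have the same ordered $(k-j)$-context, where an $i$-context is the ordered $i$-ball together with, for each interval, the \emph{set} of $(i-1)$-contexts realised in it. Your one-level ``ordered local type'', which records neighbourhood-type counts per interval, does not control this. Threshold counting is also beside the point: sibling clusters cannot be compared, so only the existence of suitable contexts matters.

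The decisive gap is the order construction, which you leave open; the candidate you sketch would not work. Vertices whose balls straddle several of your blocks get contexts dictated by the global structure, which is exactly where $\mathcal A$ and $\mathcal B$ differ, and you give no way to reproduce them in $\mathcal B$. Your proposed check already fails inside a BFS block, where consecutive ball elements have empty intervals. And ``at least $t$ copies of each frequent type'' in an interval does not determine the set of ordered contexts there.

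The missing idea in the paper is to give up on matching the bulk. The bulk goes into a jungle $J$ ordered arbitrarily, and a move in $J$ is never answered inside the other jungle. The order is built as follows.
\begin{itemize}
\item The rare vertices go first, in the extremity $X$.
\item Near both ends of the order sit $k^2$ universal segments, each containing a copy of every context realisable with frequent types. They are stratified by context depth according to their distance from the extremity.
\item These are interleaved with, and followed by, neighbouring segments that buffer them from $J$; this yields contraction and tameness.
\end{itemize}
Everything outside $J$ has size at most $\bar J_{k,d}$, so $\mathcal A\equiv^{\bar J_{k,d}+1}_{\FO}\mathcal B$ copies it isomorphically into $\mathcal B$, with the extra round giving contraction there (\Cref{transfer lemma}). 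Duplicator's strategy at round $r+1$ is then:
\begin{itemize}
\item a move in the safety part $S^{k-r-1}$ is answered through this isomorphism;
\item a child root in an inner interval of its parent's ball is answered in the corresponding interval, using equality of the parent's contexts;
\item a child root in an outer interval (or the first root), outside the safety part, is answered by an element with the same $(k-r-1)$-context in the universal segment just outside $S^{k-r-1}$, which lies in the correct outer interval of the parent's ball.
\end{itemize}
Finally, these copies must sit on vertices pairwise far apart and far from every rare vertex. This requires the adaptive threshold of \Cref{lemma: lemma 5.6}; a threshold fixed in advance from Hanf's theorem does not provide it.
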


\textbf{Proof sketch.}
To prove the inclusion $<$-invariant $\CFO \subseteq \FO$ on $\CCC$, we compare how well the two logics can distinguish graphs. For a fixed quantifier rank $k$, it suffices to find a function $f(k)$ such that whenever two graphs $A$ and $B$ cannot be distinguished by any $\FO$ sentence of rank $f(k)$, they also cannot be distinguished by any $<$-invariant $\CFO$ sentence of rank $k$. Intuitively, this means that $\FO$ is at least as expressive: if $\FO$ sees no difference, then $\CFO$ sees none either.
Another way of looking at this is to notice that both logics partition $\CCC$ into finitely many indistinguishability classes, each one definable in that logic, and each $\oicfo[k]$ class is a union of $\FO[f(k)]$ classes. Since a formula of $\oicfo[k]$ simply covers a union of $\oicfo[k]$ classes, it can in turn be defined in $\FO[f(k)]$.

Finally, because $<$-invariant formulas do not depend on the chosen order, it is enough to equip $A$ and $B$ with suitable orders $<_A$ and $<_B$ and show $(A,<_A) \equiv^k_{\CFO} (B,<_B)$ whenever $A \equiv^{f(k)}_\FO B$. 
It therefore remains to show that whenever two graphs are sufficiently similar,
we can choose linear orders for each of them such that, after equipping the
graphs with these orders, they become indistinguishable in $\CFO$. In other
words, by picking the right orders, we can reduce the problem to $\CFO$
equivalence on ordered graphs.

We prove this by introducing a special family of orders, called $(k,F)$-orders,
formally defined in \Cref{definition:order}, which satisfy several useful
structural properties. First, every graph that is sufficiently rich admits such
an order (\Cref{good graphs admit good orders}). Next, any two graphs that are
similar enough admit corresponding orders that are themselves similar
(\Cref{transfer lemma}). Finally, these orders allow us to establish
$\CFO$-equivalence: by equipping the two graphs with their respective orders,
Duplicator obtains a winning strategy in the CEF game. This strategy is given
explicitly in \Cref{main technical theorem}.
Our construction relies on two complementary ways of describing the local
structure of a bounded-degree graph: one tailored to $\FO$ and one to $\CFO$.
For $\FO$, we use \emph{neighbourhood types}. Fix a radius $r$. The
$r$-neighbourhood type of a vertex $v$ is the isomorphism type of its
radius-$r$ neighbourhood $N_r(v)$. Since $\FO$ formulas of bounded quantifier
rank can only explore bounded neighbourhoods, vertices with the same
$r$-neighbourhood type are indistinguishable to $\FO$. Hence, on bounded-degree
graphs, only finitely many such types can occur, and from the point of view of
$\FO$, a graph is essentially determined by how many vertices realise each type.

For $\CFO$, we use \emph{contexts}, which play an analogous role for the
clustered, order-based setting. Contexts capture exactly the local configurations
that $\CFO$ can distinguish. Thus, neighbourhood types describe what $\FO$ can
see, while contexts describe what $\CFO$ can see, and on bounded-degree graphs
both notions give rise to only finitely many possibilities and are
functionally equivalent in their counts (\Cref{corollary: all contexts in finite}).
Consequently, if two graphs are $\FO$-equivalent for a sufficiently large
quantifier rank, they realise the same neighbourhood types with the same
multiplicities, and therefore also admit matching contexts. This correspondence
allows us to align their vertices and construct matching $(k,F)$-orders, which is
the key step toward proving $\CFO$-equivalence.

Since contexts are central to the proof, we briefly describe them here and refer to
\Cref{def:context} for the formal definition. A $k$-context consists of the
$k$-neighbourhood of a vertex equipped with a linear order that partitions it
into intervals, together with the $(k-1)$-contexts occurring within these
intervals, thereby recording how smaller local configurations are arranged along
the order.

In bounded-degree graphs, if a $k$-neighbourhood type occurs sufficiently many
times--say at least $g(k,r,d,m)$ times, where this means that there are
$g(k,r,d,m)$ distinct vertices with that type, for a function depending only on
$k$, the degree bound $d$, and the parameters $r$ and $m$--then there exist $m$
such vertices whose pairwise distances are greater than $r$. This property extends to several neighbourhood types simultaneously. In other
words, if each of a fixed collection of neighbourhood types occurs sufficiently
many times, where the required number depends only on the neighbourhood depth,
the maximum degree, the desired pairwise distance, and the number of types under
consideration, then we can select many vertices of these types whose pairwise
distances are all greater than the prescribed bound. Thus, we can obtain many
mutually far-apart occurrences across all the chosen types.


Now let $F$ be a fixed collection of $k$-neighbourhood types.  
We say that a graph is \emph{rich for $(k,F)$} if each type in $F$ occurs
sufficiently many times, namely at least a threshold $M$.  This value $M$ is
precisely the bound discussed in the previous paragraph: it depends only on $k$
and the degree bound and guarantees many well-separated occurrences of each
type.  Its exact numerical value is computed in
\Cref{main technical theorem}, but it is not important for this introductory
discussion.  What matters is the consequence that every type in $F$ appears more
than $M$ times, which allows us to select many occurrences that are pairwise far
apart and hence can be used independently to build a structured order.

We construct a $(k,F)$-order for a graph that is rich for $(k,F)$ as follows.
\begin{enumerate}
    \item We first place all vertices whose $k$-neighbourhood types do not belong
    to $F$ at the beginning of the order.  This initial segment is called the
    \emph{extremity} and is denoted by $X$.

    \item Next, we introduce $2k^2$ segments distributed along the order, called
\emph{universal segments}.  In each such segment we place one copy of every
context that can be constructed using only vertices whose $k$-neighbourhood types
belong to $F$.  These segments are called universal because each of them contains
representatives of all such contexts.  Recall that a context is simply a small
ordered local configuration, consisting of a few vertices together with their
bounded neighbourhoods arranged according to the order.  Constructing a context
therefore amounts to selecting appropriate vertices with the required
neighbourhood types and arranging them in the prescribed order.

The role of the threshold $M$ becomes crucial here.  Since every type in $F$
occurs many times, we can choose occurrences that are pairwise far apart.  This
allows us to build the contexts inside different universal segments using
disjoint sets of vertices and neighbourhoods, ensuring that they do not
interfere with one another.
    \item Between consecutive universal segments we insert
    \emph{neighbouring segments}, which contain the vertices in the neighbourhood
    of the previous universal segment.  Because the chosen contexts are
    well-separated, these neighbourhoods do not overlap and the construction can
    be carried out independently around each segment.

    \item Finally, all remaining vertices are placed in a central segment called
    the \emph{jungle}.
\end{enumerate}
Thus the order has a fixed global structure consisting of the extremity,
several universal segments containing all contexts one can construct with the neighbourhood types in $F$, their neighbouring segments,
and the jungle. This global structure comes with an order $\prec$ which is the order between the segments. More precisely, when comparing elements $a,b$ from different segments, $b$ is greater than $a$ if it belongs to a segment to the right of the segment of $a$. This order is depicted in \Cref{fig:ktl-order}.

Within each universal segment we fix an arbitrary but consistent order on the
contexts.  Concretely, if $c_1,\dots,c_\ell$ are all contexts that can be
constructed using neighbourhood types from $F$, we place their chosen copies in
the order $c_1 \prec' \cdots \prec' c_\ell$, meaning that every vertex belonging to
$c_i$ precedes every vertex belonging to $c_j$ whenever $i<j$, in this short introduction on the order by $a\in c_i$ we mean that the element $a$ belongs to the occurrence of the context $c_i$.  On the extremity
$X$, the neighbouring segments, and the jungle $J$, we choose arbitrary orders.
For a graph $\mathcal A$, we denote the resulting order by $<_A$
and call it a $(k,F)$-order. The existence of such an order relies precisely on the richness
assumption, see \Cref{good graphs admit good orders}.  
Recall that each context $c$ comes with a built-in order $<_{c,\mathcal A}$. Therefore, the order $<_A$ compares elements as follows.

\[
a <_A b
\quad \Longleftrightarrow \quad
\begin{cases}
\text{segment of }a \prec \text{segment of }b, & \text{if } \text{segment of }a \neq \text{segment of }b,\\[0.4em]
c_i\prec' c_j, & \text{if } a,b\text{ belong to the same universal segment},\\& a\in c_i,\ b\in c_j,\ \text{and } c_i\neq c_j,\\[0.4em]
a<_{c_i,\mathcal A}b, &\text{if } a,b\text{ belong to the same universal segment},
\\& \text{and } a,b \text{ belong to the same context } c_i,\\[0.4em]
a<b& \text{if } a,b\text{ belong to }X\text{ or }J\text{ or a neighbouring segment}\\[0.4em]
&\text{and }< \text{is the order in that segment.}
\end{cases}
\]


Moreover, if two graphs $\mathcal A$ and $\mathcal B$ are rich for the same
$(k,F)$ and have matching counts for the neighbourhood types outside $F$, we
first construct a $(k,F)$-order $<_A$ on $\mathcal A$ as above and then transfer
this order to $\mathcal B$, obtaining an order $<_B$ with the same segment
decomposition (see \Cref{transfer lemma}).  Finally, we show that Duplicator has a
winning strategy in the CEF game on $(\mathcal A,<_A)$ and $(\mathcal B,<_B)$ (See \Cref{main technical theorem}),
yielding $(\mathcal A,<_A)\equiv^k_{\CFO}(\mathcal B,<_B)$ and completing the
proof of \Cref{main theorem}.

\definecolor{mygreen}{RGB}{36, 200, 100}
\definecolor{myyellow}{RGB}{220, 220, 30}
\usetikzlibrary{patterns, decorations.pathreplacing, shapes}

\begin{figure*}[!ht]
\label{the order}
  \centering
  \begin{tikzpicture}[scale=.98]

    \newcommand{\topy}{.5}
    \newcommand{\boty}{.5}

    \foreach \b/\e in { 3/4, 5/6, 6.5/7.5, 8/9, 10/11,, 11.4/12.4, 13/14}
    \draw[pattern=north west lines, pattern color=green!50] (\b,-\boty) rectangle (\e,\topy);

    \foreach \b/\e in {2/3, 4/5, 14/15}
    \draw[pattern=dots, pattern color=black!20] (\b,-\boty) rectangle (\e,\topy);

    \foreach \b/\e in {1/2, 9/10}
    \draw[pattern=crosshatch dots, pattern color=red!50] (\b,-\boty) rectangle (\e,\topy);

    \foreach \a in {6.3, 7.8,11.2,12.7}
    \node at (\a,0) {\small $\cdots$};

    \node at (1.5,0) {\small $X$}; 
    \node at (2.5,0) {\small $LU_1$}; 
    \node at (3.5,0) {\small $LN_1$}; 
    \node at (4.5,0) {\small $LU_2$}; 
    \node at (5.5,0) {\small $LN_2$};

    \node at (7,0) {\small $LN_{k^2+1}$}; 
    \node at (8.5,0) {\small $LN_{2k^2}$};
    \node at (9.5,0) {\small $J$}; 
    \node at (10.5,0) {\small $RN_{2k^2}$}; 
    \node at (13.5,0) {\small $RN_1$}; 
    \node at (12,0) {\small $LN_{k^2+1}$};
    \node at (14.5,0) {\small $RU_1$};

       \foreach \b/\e in {
       1.66/3.33, 3.66/5.16, 2.66/3.33, 4.66/5.16, 5.33/6.2, 6.3/7, 7.16/7.83, 
      7.9/8.66, 8.83/9.33, 9.66/10.33, 10.66/11.2, 11.3/11.9, 12/12.6, 12.7/13.5, 13.6/14.5
    }
    \draw[-] (\b,\boty) edge[out=70,in=110,-] (\e,\topy);

    \draw [decorate,decoration={brace,amplitude=3pt}] (6.5,-0.6) -- (2,-0.6) node [black,midway,yshift=-0.4cm] {$a$};
    \draw [decorate,decoration={brace,amplitude=3pt}] (9,-0.6) -- (6.5,-0.6) node [black,midway,yshift=-0.4cm] {$b$};
    \draw [decorate,decoration={brace,amplitude=3pt}] (12.4,-0.6) -- (10,-0.6) node [black,midway,yshift=-0.4cm] {$c$};
    \draw [decorate,decoration={brace,amplitude=3pt}] (15,-0.6) -- (12.4,-0.6) node [black,midway,yshift=-0.4cm] {$d$};
      
  \end{tikzpicture}
  \caption{
 (a) $k^2$ universal and neighbouring left segments, (b) $k^2$ only left neighbouring segments, (c) $k^2$ only right neighbouring segments, and (d) $k^2$ universal and neighbouring right segments. On the left side, $LN_i$ contains the neighbours of the vertices in $LU_i\cup LN_{i-1}$ for $2\leq i\leq k^2$, only $LN_{i-1}$ for $k^2+1\leq i$, and $LN_1:= N(LU_1\cup X)$. The same patterns appear on the right side. The extremity $X$ contains all those vertices whose $k$-neighbourhood types does not belong to $F$. The Jungle $J$, is filled with the remaining vertices after all other segments are filled.
 \  }
  \label{fig:ktl-order}
\end{figure*}
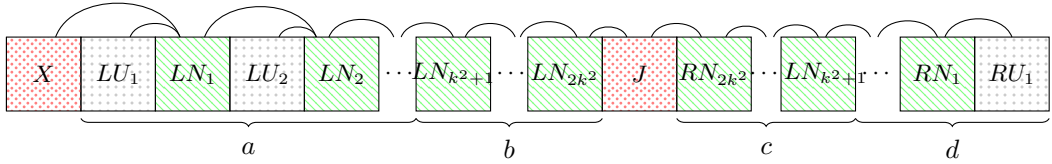

\section{Model checking}

In this section, we tackle the \emph{model-checking} problem for \oicfo. This problem amounts, given a graph and a sentence from \oicfo, to deciding whether this sentence holds in the input graph. Since both input are asymetric in size -- the graph is expected to be much larger than the formula -- it is important for the time complexity to be polynomial in the size of the graph, but less crucial if the dependency in the length of the formula is expensive. This corresponds to deciding whether this model-checking problem is \emph{fixed-parameter tractable} (FPT), where the size of the formula is taken as parameter.

We prove that when the degree is bounded, the model-checking problem for \oicfo is indeed FPT. Altough \FO is known to have an FPT model-checking problem when the degree is bounded~\cite{DBLP:journals/mscs/Seese96}, this result is not an immediate consequence of the inclusion of \oicfo in \FO as we do not know the translation to be computable.

Instead, we use again the tools that were developed in the previous section, namely the notion of contexts, to prove that whether a sentence holds in a given graph depends only on the set of contexts occurring in some ordering of this graph. Note that we must consider a slight extension of the notion of $k$-contexts, in the form of \emph{$k$-outer contexts}, where the $(k-1)$-contexts occurring in the leftmost and rightmost intervals are considered, while they are disregarded in regular $k$-contexts. Given that the number of $k$-outer contexts is bounded, the model-checking can then take place on some data whose size is bounded independently of the size of the input graph.

We then show that the set of $k$-outer contexts realised in an ordered graph can be computed in FPT. When combined, these two results provide an FPT algorithm for the model-checking problem on classes of bounded degree.

\begin{restatable}{theorem}{modelcheckingtheorem}
\label{th:model-checking}
The model-checking problem for \oicfo on a class $\mathcal C$ of bounded degree is fixed-parameter tractable.
\end{restatable}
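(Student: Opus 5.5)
The plan follows the two-step strategy announced at the start of this section. Fix a sentence $\varphi\in\oicfo$ of quantifier rank $k$ and a degree bound $d$ for $\mathcal C$. Since $\varphi$ is order-invariant, we may evaluate it on $(\mathcal A,<)$ for \emph{any} linear order $<$ we like. We therefore choose an order that is cheap to compute, say the order in which the vertices appear in the input encoding. The proof then has two parts. The first is a combinatorial statement: whether $(\mathcal A,<)\models\varphi$ depends only on the set $\mathsf{O}_k(\mathcal A,<)$ of $k$-outer contexts realised in $(\mathcal A,<)$. The second is algorithmic: this set can be computed in time $f(k,d)\cdot|A|^{O(1)}$.

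\textbf{Step 1 (contexts determine \CFO-types).} First I would bound the number of $k$-outer contexts on graphs of degree at most $d$ by a function $N(k,d)$. The argument is an induction on $k$, using the bounded size of $k$-neighbourhoods and the finiteness statement of \Cref{corollary: all contexts in finite}. Next I would prove, by induction on the rank, a game lemma in the spirit of \Cref{main technical theorem}: if two ordered graphs of degree at most $d$ realise the same set of $k$-outer contexts, then Duplicator wins the $k$-round cluster EF game on them. By \Cref{th:CEF_captures_CFO} they are then $\equiv^k_{\CFO}$.

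The Duplicator strategy would go as follows.
\begin{itemize}
\item When Spoiler opens the root cluster at $a$, Duplicator answers with a vertex $b$ having the same $k$-outer context. Such a $b$ exists because both graphs realise the same set of $k$-outer contexts.
\item Guarded moves inside a cluster are answered using the isomorphism of the recorded neighbourhoods, which also respects the order restricted to them.
\item When Spoiler opens a child cluster at $x^0_{w\alpha}$, he places it in some interval between consecutive elements of cluster $w$. Duplicator answers in the corresponding interval, using a vertex with the same $(k-1)$-context. The context records which $(k-1)$-contexts occur in which interval, and the outer version additionally covers the two unbounded end intervals.
\end{itemize}
The point of the outer contexts is precisely that the leftmost and rightmost intervals are not lost.

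This gives a function $\Phi_\varphi$ from sets of $k$-outer contexts to $\{0,1\}$. The function depends only on $\varphi$ and $d$, but it need not be computable merely from this existence argument. To get a computable one, I would tabulate it as follows. For each candidate set $S$ of $k$-outer contexts, enumerate ordered graphs of degree at most $d$ by increasing size until one realising exactly $S$ is found, or until a computable size bound is exceeded. Such a bound can be obtained by a pumping or realisation argument: it suffices to take disjoint copies of witnesses, one for each element of $S$, which gives size at most $N(k,d)\cdot(\text{max context size})$. Then evaluate $\varphi$ on the witness found. All of this work depends on $k$ and $d$ only.

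\textbf{Step 2 (computing the realised set).} Given $(\mathcal A,<)$, I would compute $\mathsf{O}_j$ at each vertex by dynamic programming on $j=0,\dots,k$. For a vertex $v$, the $k$-neighbourhood $N_k(v)$ has at most $d^{k+1}$ elements. Sorting it under $<$ splits $A$ into at most $d^{k+1}+1$ intervals. For each interval we need the set of $(j-1)$-outer contexts of vertices lying in it. To get this, I would sort $A$ once, then precompute, for each of the at most $N(k,d)$ context classes, the sorted list of positions realising it. Each query then becomes a binary search, or a prefix-count lookup, per class and per interval. The total cost is $O(|A|\cdot d^{k+1}\cdot N(k,d)\cdot\log|A|)$ per level, which is FPT.

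\textbf{Main obstacle.} I expect the main difficulty to lie in the game lemma of Step 1, specifically in the child-cluster moves. After several clusters have been opened, the positions chosen earlier constrain the new move simultaneously through the $<$-comparisons with $x^0_{w\alpha}$ and through subsequent guarded moves in the new cluster. The context must carry enough information for Duplicator to find a response with the correct $(k-1)$-outer context inside the right interval. The inductive invariant I would try first is this: every pair of corresponding cluster roots have equal $(k-r)$-outer contexts after $r$ rounds, and the elements of each cluster lie in matching positions within their recorded neighbourhoods.

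A secondary concern is computability of $\Phi_\varphi$. If the realisation bound fails, I would instead evaluate $\varphi$ directly on a bounded "canonical ordered graph" built from $S$ by concatenating witnesses in the order recorded by the contexts. Its correctness would be justified by the same game lemma.
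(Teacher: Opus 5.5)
Your Step 1 game lemma is sound, provided a child root is answered by a vertex with the same \emph{outer} $(k-r-1)$-context. That vertex must sit in the corresponding interval of $\mathrm{Int}^+$ of the \emph{parent root's} recorded neighbourhood, or on the corresponding element of it, rather than ``between consecutive elements of cluster $w$''. Step 2 is also fine.

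The gap is the passage from the existence of $\Phi_\varphi$ to computing it. Your size bound via ``disjoint copies of witnesses, one for each element of $S$'' fails because outer contexts are not local. The outer $k$-context of $v$ records the outer $(k-1)$-contexts of \emph{all} vertices left of $\min N^k(v)$ and right of $\max N^k(v)$, and recursively so. Once you juxtapose witnesses along the order, each vertex of one block sees the other blocks in its end intervals. Its outer context therefore changes in general, and the new ordered graph realises a set different from $S$. Independently, the auxiliary vertices of a witness for a single $c\in S$ may realise outer contexts outside $S$.

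So no computable bound on the size of an ordered graph realising exactly $S$ has been established. The fallback ``canonical ordered graph'' has the same defect, since the game lemma applies only to an ordered graph realising exactly $S$. You could search only for a realisation of the set $S_0$ realised by the input: this terminates, because $\mathcal A$ itself is one, within a time bounded by some function of $k$ and $d$. But nothing shows that function is computable, and FPT as defined here requires a computable $f$.

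The missing idea, which is the paper's route, is that no model is needed. You can evaluate $\varphi$ by structural recursion directly on the data nested inside the outer contexts. Because quantification inside a cluster is guarded, every variable $x_w^i$ in a rank-$k$ sentence satisfies $i\le k-|w|-1$, so cluster $w$ stays inside $N^{k-|w|-1}$ of its root. For each open cluster $w$, the recursion keeps three things:
\begin{itemize}
\item the outer $(k-|w|-1)$-context of its root;
\item that ordered neighbourhood, with the cluster's pebbles as constants;
\item a map recording in which interval of $\mathrm{Int}^+$, or on which element, each child root $x^0_{w\alpha}$ lies.
\end{itemize}
A subformula $\exists x^0_{w\alpha}\psi$ is decided by ranging over these intervals and elements, and over the outer $(k-|w|-2)$-contexts that the parent's outer context records there. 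A guarded quantifier ranges over neighbours in the recorded neighbourhood, and atoms are read off the neighbourhood and the interval map. Only the outermost $\exists x^0_\epsilon$ consults the input, by ranging over $S_0$. The number and size of all recursive calls are bounded by a computable function of $\varphi$ and $d$.

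Your game lemma is essentially the correctness proof of this recursion. With the tabulation replaced by it, the rest of your plan goes through: the arbitrary order, and Step 2, which is even faster than the paper's quadratic computation.
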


\section{Separating example}

We conclude with a result showing that the bounded-degree hypothesis is crucial for \oicfo to be included in \FO:
\begin{restatable}{theorem}{seperatingexample}
\label{theorem: separating example}
 There exist properties definable in \oicfo that are not definable in \FO.
\end{restatable}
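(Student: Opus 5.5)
We adapt Gurevich's classical example separating \oifo from \FO (see \cite[Theorem~5.3]{DBLP:books/sp/Libkin04}). Recall that Gurevich considers finite Boolean algebras, i.e.\ powerset lattices of finite sets, and the property ``the number of atoms is even''. With an order, this is expressible in \FO: one states that there is an element $x$ which contains the $<$-least atom, does not contain the $<$-greatest atom, and contains exactly every other atom in the $<$-ordering of atoms. This is order-invariant, and it is not \FO-definable on Boolean algebras, by an Ehrenfeucht--Fra\"{i}ss\'{e} argument (powersets of sets of size $2^k$ and $2^k+1$ are $k$-equivalent).

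\textbf{Encoding as a graph.} Our signature has only $E$ and colours, and \CFO forces guarded quantification. We therefore encode the Boolean algebra of subsets of an $n$-element set $U$ as a coloured graph $\mathcal G_n$. The vertices are the atoms, coloured $P_1$, and the subsets, coloured $P_2$. An edge joins an atom to a set whenever the atom belongs to the set. To make the graph of diameter at most $3$, we could also add an apex vertex coloured $P_3$ adjacent to everything. By \Cref{lemma:cfo_equals_fo_bounded_diameter}, on this class \CFO is then as expressive as $\FO(\sigma\cup\{<\})$.

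\textbf{Expressing parity in \oicfo.} We translate the Gurevich sentence, using only cluster $\epsilon$ reached through the apex. In plain \FO with $<$, we state the following. There exists a set-vertex $x$ such that:
\begin{itemize}
\item the $<$-least atom is adjacent to $x$;
\item the $<$-greatest atom is not adjacent to $x$;
\item for any two atoms $a<b$ with no atom strictly between them, exactly one of $a,b$ is adjacent to $x$.
\end{itemize}
This holds iff $n$ is even, so it is order-invariant on the class of the $\mathcal G_n$. For a genuine order-invariant sentence over all graphs, we conjoin an \FO-definable axiomatisation $\chi$ of the class, using the disjunction $\neg\chi\vee\psi$. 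Since $\chi$ involves no order, invariance is preserved.

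\textbf{Axiomatising the class.} Extensionality, together with closure of the set-vertices under union and complement, is first-order expressible by quantifying over set-vertices and atoms, all within distance $2$ of the apex. These axioms are finitely many, so $\chi$ is a finite \FO sentence. It forces every finite model to be isomorphic to some $\mathcal G_n$, up to the apex.

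\textbf{Non-definability in \FO.} The remaining step is to show that $\mathcal G_{2^k}\equiv^k_\FO\mathcal G_{2^k+1}$, or some similar pair. This reduces to the classical Boolean-algebra game: the graph $\mathcal G_n$ is first-order interpretable in the Boolean algebra on $U$ (atoms, membership via $\leq$, apex added), so \FO-equivalence transfers with a linear blow-up of quantifier rank. We expect the main obstacle to be carrying out this interpretation argument precisely, and checking that the guarded translation of \Cref{lemma:cfo_equals_fo_bounded_diameter} indeed handles all order atoms in a single cluster.
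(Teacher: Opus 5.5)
Your plan uses the same source as the paper (Gurevich's Boolean-algebra parity property) and the same idea (a hub vertex close to everything, so that guarded quantification in cluster $\epsilon$ sees the whole structure). Two steps, however, do not go through as written.

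\textbf{Invariance on all finite graphs.} \Cref{lemma:cfo_equals_fo_bounded_diameter} only yields a \CFO sentence that agrees with $\theta:=\neg\chi\vee\psi$ on ordered graphs of diameter at most $\delta$. Membership in \oicfo requires order-invariance on \emph{every} finite graph, including disconnected ones and ones of large diameter. Your remark that ``$\chi$ involves no order'' concerns the \FO sentence $\theta$, not its \CFO translation, and you never control that translation off the bounded-diameter class. The missing observation is that the Remark's translation $\widehat\theta$ is equivalent, on every ordered graph, to ``some vertex $v$ is such that the subgraph induced on the $\delta$-ball of $v$, with the restricted order, satisfies $\theta$''. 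That subgraph is itself a finite ordered graph, so order-invariance of $\theta$ on all finite graphs transfers to $\widehat\theta$, and $\widehat\theta$ agrees with $\theta$ on the connected graphs $\mathcal G_n$. Alternatively, you can force the hub to see everything, e.g.\ by stating that every $x^0_\alpha$ equals $x^0_\epsilon$ or some neighbour $x^1_\epsilon$ of it, which the atoms $x^0_{w\alpha}\sim x^i_w$ permit. The paper settles this global issue with the order instead. Since every component has diameter at most $2$, its $<$-minimum is definable, and requiring that such a minimum have no smaller element in the whole structure forces a single component. The relativised Gurevich sentence is then equivalent to the original one on all finite structures.

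\textbf{The axiomatisation $\chi$ is too weak.} Extensionality plus closure under union and complement only makes the set-vertices a Boolean subalgebra of $2^U$. Take atoms $a,b,c,d$ and let the set-vertices be the unions of blocks of $\{\{a,c\},\{b,d\}\}$. Then $\chi$ holds, yet your parity formula is true for $a<b<c<d$ (witness $\{a,c\}$) and false for $a<c<b<d$ (every set containing $a$ contains $c$). So $\neg\chi\vee\psi$ is not order-invariant. You need an extra axiom giving a singleton set-vertex for each atom, or one separating any two atoms by a set-vertex.

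\textbf{Comparison with the paper.} Once these are fixed, your route differs from the paper's in the encoding. The paper uses $(2^X,\subseteq)$ with $\subseteq$ itself as the edge relation and the top element as hub, so non-definability is exactly the classical Boolean-algebra result. That choice sits awkwardly with the paper's loopless symmetric $E$. Your coloured incidence graph with an apex fits that setting properly. The cost is that you must transfer the lower bound through an interpretation. That interpretation must be two-dimensional (or use tags), since atom-vertices and singleton set-vertices are distinct elements of $\mathcal G_n$; the quantifier-rank blow-up remains bounded.
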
The separation follows from a classical example due to Gurevich. The property
$\mathsf{QEven}$ states that a structure is isomorphic to a Boolean algebra
$(2^X,\subseteq)$ where $|X|$ is even. This property is known not to be
$\FO$-definable by an Ehrenfeucht--Fra\"iss\'e argument, but it is definable in
order-invariant first-order logic. We show that the same construction can already be carried
out in order-invariant $\CFO$: the order is used to isolate a single connected
component, and the clustered quantification then simulates the original
definition. Hence $\mathsf{QEven}$ is definable in order-invariant $\CFO$ but not
in $\FO$.


%% file: Preliminaries.tex
\begin{toappendix} 
\section{Preliminaries}   
%
\subsection{Cluster first-order logic.}

Before giving a formal definition, let us describe briefly \emph{cluster first-order logic} (\CFO), a fragment of \FO on ordered graphs. We start by fixing an infinite alphabet $\Sigma$. In a \CFO-formula, variables are indexed as $(x^i_w)_{w \in \Sigma^*}^{i \in \mathbb{N}}$ and organized into sets called \emph{clusters}: a cluster regroups all variables sharing a subscript $w$. The quantification of a variable $x^j_w$ with $j > 0$ is guarded: its interpretation has to be edge-adjacent to the interpretation of a previously introduced variable within the same cluster. Furthermore, in \CFO, whether the interpretation of two variables are related by $E,<$ or $=$ can only be checked if they belong to the same cluster, with one exception: comparisons using the linear order $<$ are also permitted between the first variable $x_{w\alpha}^0$ of cluster $w\alpha$ (for $w\in\Sigma^*$ and $\alpha\in\Sigma$) and variables of cluster $w$. With this intuition in mind, we can give a formal definition of \CFO.

Let $\Sigma$ be an infinite alphabet, and $S\subseteq\Sigma^*\times\N$ be a finite set of indices. We say that $S$ is \emph{valid} if the following holds for all $w\in\Sigma^*,\alpha\in\Sigma$ and $i\in\N$:
\begin{itemize}
\item $(w,i)\in S$ entails $(w,j)\in S$ for every $j<i$,
\item $(w\alpha,0)\in S$ entails $(w,0)\in S$.
\end{itemize}

In the remainder of this paper, given a valid set of indices $S$, $X_S$ denotes the set of variables $\{x_w^i:(w,i)\in S\}$.

Fixing a vocabulary $\sigma:=\{E,P_1,\dots,P_N\}$ where $E$ is binary and the $P$s are unary predicates, as well a an additional binary predicate $<$, we define $\CFO_S[k]$ by induction on $k$ as follows:
\begin{itemize}
\item $\CFO_S[0]$ contains atomic fomulas of the form $P(x_w^i)$, $E(x_w^i,x_w^j)$, $x_w^i\sim x_w^j$ and $x_{w\alpha}^0\sim x_w^i$ for $x_w^i,x_w^j,x_{w\alpha}^0\in X_S$ and ${\sim}\in\{=,<\}$, and is closed under boolean combinations;
\item $\CFO_S[k+1]$ contains formulas of the form
  \begin{itemize}
  \item $\exists x_\epsilon^0\psi$, where $\psi\in\CFO_{\{(\epsilon,0)\}}$ (if $S=\emptyset$)
  \item $\exists x_{w\alpha}^0 \psi$, where $\alpha\in\Sigma$, $(w\alpha,0)\notin S$, $(w,0)\in S$ and $\psi\in\CFO_{S\cup\{(w\alpha,0)\}}[k]$
  \item $\exists x_w^i E(x_w^i,x_w^j)\land \psi$, where $(w,j)\in S$, $i=\min\{l:(w,l)\notin S\}$ (because $S$ is valid, $j<i$) and $\psi\in\CFO_{S\cup \{(w,i)\}}[k]$
  \end{itemize}
  and is closed under boolean combinations.
\end{itemize}
We write $\CFO_S:=\bigcup_{k\in\N}\CFO_S[k]$, and omit the subscript $S$ when $S=\emptyset$. In particular, $\CFO:=\CFO_\emptyset$. 

Note that $\CFO_S$ is a subset of \FO. As such, we endow this logic with the usual semantics: given $\psi\in\CFO_S$, a $\sigma\cup\{<\}$-structure $(\mathcal A,<)$ (which is an ordered graph whose vertices are coloured by predicates $P_1,\dots,P_N$) and a valuation $\nu:X_S\to A$, we write $(\mathcal A,<,\nu)\models\psi$ (or just $(\mathcal A,<)\models\psi$ if $S=\emptyset$) if this holds when $\psi$ is seen as an \FO formula with free variables in $X_S$.

We say that valuation $\nu$ is \emph{consistent} if the $S$ is valid and if for every $(w,i)\in S$, $(\mathcal A,<,\nu)\models E(x_w^i,x_w^j)$ for some $j<i$.

\subsection{Cluster EF games.}
\label{section: CEF game}

As a tool to prove results about the expressive power of cluster first-order logic on ordered graphs, we introduce a variation on Ehrenfeucht--Fra\"{i}ss\'{e} games tailored for \CFO: the cluster EF game. In this game, Spoiler faces Duplicator: the former wants to show that two ordered graphs are different, while the latter tries to show their similarity with respect to \CFO. This two-player game provides a necessary and sufficient condition for the equivalence of two ordered graphs $\mathcal A$ and $\mathcal B$ when it comes to $\cfo$; more precisely, $\mathcal A$ and $\mathcal B$ are $k$-equivalent with respect to $\cfo$ if and only if the second player has a winning strategy for $k$ rounds of the game. This characterisation is a crucial tool for establishing inexpressibility results.

In order to state the winning condition of cluster EF game, we need the notion of partial isomorphisms. Given two structures $\mathcal A$ and $\mathcal B$ be structures over a relational vocabulary $\sigma$, a \emph{partial isomorphism} between $P \subseteq A$ and $Q \subseteq B$ is a bijection $f: P \to Q$ such that for every $n$-ary relation $R \in \sigma\cup\{=\}$ and every $a_1, \dots, a_n \in P$,
\[ (a_1, \dots, a_n) \in R^A \iff \big(f(a_1), \dots, f(a_n)\big) \in R^B\,. \]
In other words, it is a bijection witnessing that $\mathcal A_{|P}$ and $\mathcal B_{|Q}$ are isomorphic.

As a bit of notation, given a valuation $\nu:X\to A$, a variable $x\notin X$ and an element $a\in A$, we write $\nu+x\mapsto a$ to denote the valuation extending $\nu$ to $X\cup\{x\}$ by mapping $x$ to $a$.

Given an integer $k$, a finite set $S$ of indices, two ordered graphs $(\mathcal A,<_A), (\mathcal B,<_B)$ and two consistent valuations $\nu_A:X_S\to A, \nu_B:X_S\to B$, the \emph{$k$-round cluster EF game between $(\mathcal A,<_B,\nu_A)$ and $(\mathcal B,<_B,\nu_B)$} is a two-player game involving \emph{Spoiler} and \emph{Duplicator}. 

A \emph{configuration} of the game is such a tuple $\big(S,(\mathcal A,<_A,\nu_A),(\mathcal B,<_B,\nu_B)\big)$. For each $(w,i)\in S$ we introduce \emph{pebbles} $p_w^i$ and $q_w^i$, and say that $p_w^i$ (resp. $q_w^i$) is placed on $\nu_A(x_w^i)$ (resp. $\nu_B(x_w^i)$). A round of the game, starting from configuration $\big(S,(\mathcal A,<_A,\nu_A),(\mathcal B,<_B,\nu_B)\big)$, proceeds as follows to define the next configuration:

\begin{enumerate}
\item Spoiler picks a structure $(\mathcal A,<_A)$ and $(\mathcal B,<_B)$. In the following, we assume they chose $(\mathcal A,<_A)$ -- the case $(\mathcal B,<_B)$ is defined analogously.
\item Spoiler then makes either a(n)...
  \begin{description}
  \item[...introduction move (if $S=\emptyset$):] Spoiler places pebble $p_\epsilon^0$ on an element of $A$.
  \item[...introduction move (if $S\neq\emptyset$):] Spoiler chooses $w\in\Sigma^*$ and $\alpha\in\Sigma$ such that $(w,0)\in S$ and $(w\alpha,0)\notin S$. Then they place pebble $p_{w\alpha}^0$ on a element of $A$.
  \item[...continuation move:] Spoiler chooses $w\in\Sigma^*$ such that $(w,0)\in S$, and places $p_{w}^i$ (where $i=\min\{j:(w,j)\notin S\}$) on a element of $A$ that is edge-adjacent to some previous $p_w^j$, $j<i$.
  \end{description}
\item Then Duplicator answers by placing the corresponding pebble in the other structure. In the case of a(n)...
  \begin{description}
  \item[...introduction move (if $S=\emptyset$),] Duplicator places $q_\epsilon^0$ on an element of $B$.
  \item[...introduction move (if $S\neq\emptyset$),] Duplicator places $q_{w\alpha}^0$ on an element of $B$.
  \item[...continuation move,] Duplicator places $q_w^i$ on an element of $B$ that is adjacent to $q_w^j$. If such an element does not exist, then Duplicator loses and the game stops.
  \end{description}
\end{enumerate}

In the case of a(n)...
\begin{description}
\item[...introduction move (if $S=\emptyset$):] let $S':=\{(\epsilon,0)\}$, let $\nu_A':=x_\epsilon^0\mapsto p_\epsilon^0$, and let $\nu_B':=x_\epsilon^0\mapsto q_\epsilon^0$.
\item[...introduction move (if $S\neq\emptyset$):] let $S':=S\cup\{(w\alpha,0)\}$, let $\nu_A':=\nu_A+x_{w\alpha}^0\mapsto p_{w\alpha}^0$, and let $\nu_B':=\nu_B+x_{w\alpha}^0\mapsto q_{w\alpha}^0$.
\item[...continuation move:] let $S':=S\cup\{(w,i)\}$, let $\nu_A':=\nu_A+x_w^i\mapsto p_w^i$, and let $\nu_B':=\nu_B+x_w^i\mapsto q_w^i$.
\end{description}
Note that in each case, $S'$ is valid and $\nu_A',\nu_B'$ are consistent valuations. The configuration resulting from this round is $\big(S',(\mathcal A,<_A,\nu_A'),(\mathcal B,<_B,\nu_B')$.

A configuration $\big(S,(\mathcal A,<_A,\nu_A),(\mathcal B,<_B,\nu_B)\big)$ is \emph{winning for Duplicator} if:
\begin{itemize}
\item $\forall w\in\Sigma^*$ such that $(w,0)\in S$, $(p_w^i\mapsto q_w^i)_{i:(w,i)\in S}$ is a partial isomorphism for unordered structures $\mathcal A$ and $\mathcal B$, and
\item $\forall w\in\Sigma^*,\forall\alpha\in\Sigma$ such that $(w\alpha,0)\in S$,
  \[
  \begin{cases}
    p_w^i\mapsto q_w^i\quad\text{for $i$ such that }(w,i)\in S'\\
    p_{w\alpha}^0\mapsto q_{w\alpha}^0
  \end{cases}
  \]
  is a partial isomorphism when restricting the vocabulary to $\{<\}$. 
\end{itemize}
In words, Duplicator wins if each cluster induces isomorphic substructures on both sides (with indices taken into account), and if the first vertices of each non-$\epsilon$ cluster are in the same relative $<$-position with respect to their larger-proper-prefix cluster.

We say that Duplicator wins the $k$-round cluster EF game between $(\mathcal A,<_A,\nu_A)$ and $(\mathcal B,<_B,\nu_B)$ if, starting from configuration $\big(S,(\mathcal A,<_A,\nu_A),(\mathcal B,<_B,\nu_B)\big)$ (where $S$ is the set of indices such that $\nu_A$ and $\nu_B$ have domain $X_S$), they can play $k$ rounds and end up in a winning configuration no matter how Spoiler plays. If $S=\emptyset$, we simply say that \emph{Duplicator wins} (or \emph{has a winning strategy} for) the $k$-round cluster EF game between $(\mathcal A,<_A)$ and $(\mathcal B,<_B)$.

Unsurprisingly, the cluster EF game was tailored to capture exactly the expressive power of \CFO:

\CEFcapturesCFO*

We actually prove the following strengthening of Theorem~\ref{th:CEF_captures_CFO}, and get back the Theorem as the special case $S=\emptyset$:
\begin{lemma}
\label{proof of CEF capturing CFO}
  Let $(\mathcal A,<_A)$ and $(\mathcal B,<_B)$ be two ordered graphs, let $S$ be a valid set of indices and let $\nu_A:X_S\mapsto A$ and $\nu_A:X_S\mapsto A$ be two consistent valuations. Then $(\mathcal{A},<_A,\nu_A)\equiv^k_{\cfo}(\mathcal{B},<_B,\nu_B)$ if and only if Duplicator has a winning strategy in the $k$-round cluster EF game between $(\mathcal A,<_A,\nu_A)$ and $(\mathcal B,<_B,\nu_B)$.
\end{lemma}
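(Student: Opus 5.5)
I will prove this by induction on $k$, in the style of the classical Ehrenfeucht--Fra\"iss\'e theorem, always carrying the valid index set $S$ and the consistent valuations as part of the statement so that the induction hypothesis applies to every configuration reachable in one round.

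\emph{Base case $k=0$.} Here $\CFO_S[0]$ consists of Boolean combinations of the atoms $P_\ell(x_w^i)$, $E(x_w^i,x_w^j)$, $x_w^i\sim x_w^j$ and $x_{w\alpha}^0\sim x_w^i$. I will check that agreement on exactly these atoms is the winning condition. Agreement on the within-cluster atoms (including $=$) says that each $p_w^i\mapsto q_w^i$ is a partial isomorphism. Agreement on the cross-cluster atoms, combined with the within-cluster $<$-atoms, says that the maps restricted to $\{<\}$ in the second clause are partial isomorphisms. The only subtlety is that $=$ between $x^0_{w\alpha}$ and $x^i_w$ is an allowed atom, so injectivity of the mixed maps is also captured.

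\emph{Inductive step, ($\Leftarrow$).} Assume Duplicator wins $k+1$ rounds. It suffices to treat formulas of the form $\exists x\,\psi$, since Boolean combinations are immediate; these are exactly the three quantifier shapes of $\CFO_S[k+1]$. Suppose $(\mathcal A,<_A,\nu_A)\models\exists x^i_w(E(x^i_w,x^j_w)\land\psi)$ with witness $a$. Spoiler plays the continuation move placing $p^i_w$ on $a$, which is legal since $a$ is adjacent to $\nu_A(x^j_w)$. Duplicator's winning answer $b$ must be adjacent to some earlier $q^{j'}_w$. I must make sure it is adjacent to $q^j_w$ specifically. This follows because the resulting configuration still satisfies the winning condition, hence preserves $E$ within cluster $w$, so $E(b,\nu_B(x^j_w))$ holds. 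By the induction hypothesis on the new configuration with $S'=S\cup\{(w,i)\}$ (valid, with consistent valuations), $(\mathcal B,\nu_B+x^i_w\mapsto b)\models\psi$. Opening moves and root introduction are handled the same way, and the symmetric direction is handled by Spoiler choosing $\mathcal B$.

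\emph{Inductive step, ($\Rightarrow$).} Assume the two sides are $\equiv^{k+1}_{\CFO}$. Given Spoiler's move, say $a$ for $p^i_w$ adjacent to $p^j_w$, I use the finiteness of $\CFO_{S'}[k]$ up to logical equivalence. This holds because $S'$ is finite and the relevant variable sets at depth $\le k$ are finite once we identify formulas up to renaming of the cluster letters $\alpha$ not occurring in $S'$. Let $\chi$ be the conjunction of all $\CFO_{S'}[k]$ formulas true of $(\mathcal A,\nu_A+x^i_w\mapsto a)$. Then $\exists x^i_w(E(x^i_w,x^j_w)\land\chi)\in\CFO_S[k+1]$ holds in $\mathcal A$, hence in $\mathcal B$. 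A witness $b$ gives a legal answer, and by the induction hypothesis Duplicator wins the remaining $k$ rounds.

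\emph{Main obstacle.} The hard step is this finiteness argument. Because $\Sigma$ is infinite, $\CFO_{S'}[k]$ is not literally finite up to equivalence: infinitely many fresh letters can open child clusters. I will handle this by showing that renaming fresh letters (those not occurring in $S'$) yields equivalent formulas in the relevant sense. Consequently only $k$ fresh letters per existing cluster matter, and a normal form over a finite letter set can be assumed. Game moves on different fresh letters are symmetric, so the strategy transfers. A secondary point to verify is that consistency of valuations is preserved at each step, so the induction hypothesis genuinely applies.
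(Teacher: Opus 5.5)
Your proposal is correct and follows the paper's proof essentially step for step: induction on $k$ carrying $S$ and the valuations, the base case as agreement on the atoms of $\CFO_S[0]$, the direction from a winning strategy to equivalence by letting Spoiler pebble a witness and applying the induction hypothesis, and the converse via a characteristic formula for the $\CFO_{S'}[k]$-type, obtained from finiteness up to equivalence and closure under Boolean connectives. The ``main obstacle'' you flag is not one, and your claim that $\CFO_{S'}[k]$ is not finite up to equivalence is wrong: every formula of $\CFO_{S'}[k]$ is an $\FO[k]$ formula with free variables in the finite set $X_{S'}$, and such formulas over a finite vocabulary are finitely many up to logical equivalence whatever the bound variables are called (fresh letters $\alpha$ only name bound variables), which is exactly how the paper disposes of it, so you need neither a renaming normal form nor a strategy-transfer argument.
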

Before presenting the proof, let us state the following lemma.
\begin{lemma}
\label{lemma: finitely many sentences}
If $\sigma$ is finite, then up to logical equivalence, $\cfo_S[k]$ over $\sigma\cup\{<\}$ contains only finitely many formulas.
\end{lemma}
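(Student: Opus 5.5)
The plan is an induction on $k$, carried out simultaneously for all valid sets $S$; this mirrors the classical argument for plain \FO. The statement to prove is: for every finite valid $S$ and every $k$, the set of $\cfo_S[k]$-formulas splits into finitely many classes under logical equivalence. There is one subtlety to settle first. $\Sigma$ is infinite, so the opening moves $\exists x^0_{w\alpha}$ can use infinitely many letters $\alpha$. We fix this by working with the set $\Sigma_S\subseteq\Sigma$ of letters occurring in $S$, together with a finite set of "fresh" letters. By the definition of $\cfo_S$, renaming letters not occurring in $S$ yields an equivalent formula.

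\textbf{Base case.} For $k=0$, $S$ is finite, so $X_S$ is finite. The allowed atomic formulas $P_\ell(x_w^i)$, $E(x_w^i,x_w^j)$, $x_w^i\sim x_w^j$ and $x^0_{w\alpha}\sim x_w^i$ therefore form a finite set $\mathrm{At}_S$. Every formula of $\cfo_S[0]$ is a Boolean combination of $\mathrm{At}_S$. Up to equivalence it is determined by a set of truth assignments to $\mathrm{At}_S$, so there are at most $2^{2^{|\mathrm{At}_S|}}$ of them.

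\textbf{Inductive step.} A formula of $\cfo_S[k+1]$ is a Boolean combination of formulas of rank at most $k$ and of formulas of the three quantified forms. It is enough to show that the quantified forms give only finitely many formulas up to equivalence; the Boolean closure of a finite set is then finite up to equivalence, again by the truth-table argument.

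For continuation moves $\exists x_w^i(E(x_w^i,x_w^j)\land\psi)$, the pair $(w,j)$ ranges over the finite set $S$, and $i$ is determined by $w$ and $S$. The set $S'=S\cup\{(w,i)\}$ is therefore one of finitely many sets. By induction, $\psi$ ranges over finitely many classes, and equivalent $\psi$ give equivalent quantified formulas. The root introduction $\exists x^0_\epsilon\psi$ is handled the same way.

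For opening moves $\exists x^0_{w\alpha}\psi$, $w$ ranges over the finitely many words with $(w,0)\in S$, but $\alpha$ ranges over the infinite set $\Sigma$. Here we use an invariance lemma. Let $\pi$ be a permutation of $\Sigma$ fixing every letter occurring in the words of $S$, and extend it letterwise to words and variables. Then $\psi\mapsto\pi(\psi)$ maps $\cfo_{S'}[k]$ onto $\cfo_{\pi(S')}[k]$, and $\exists x^0_{w\alpha}\psi$ is equivalent to $\exists x^0_{w\pi(\alpha)}\pi(\psi)$. The reason is that the bound variables are simply renamed, and the renaming is injective and never touches the free variables in $X_S$. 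This is proved by a routine induction on formulas.

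Consequently, every opening formula is equivalent to one where $\alpha$ is either a letter already appearing in $S$ or a single fixed fresh letter $\alpha_0$. Inside $\psi$, further fresh letters can similarly be normalized to a fixed finite stock, since $\psi$ has rank $k$ and so opens at most $k$ new clusters. This gives finitely many choices of $S'$, and the induction hypothesis finishes the step.

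The main obstacle is making the renaming argument clean. The induction hypothesis must be applied to a finite family of $S'$, yet inside $\psi$ the letters can again be arbitrary. To make this work, I would strengthen the statement to: for each $k$ and $S$, there is a finite set $\Sigma_{S,k}$ of letters (those of $S$ plus $k$ fixed fresh ones) such that every $\cfo_S[k]$-formula is equivalent to one using only letters from $\Sigma_{S,k}$. Finiteness then follows, because only finitely many valid $S'$ of bounded size over $\Sigma_{S,k}$ arise.
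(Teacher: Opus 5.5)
Your proof is correct, but it takes a different route from the paper. The paper's proof is one line. Every formula of $\cfo_S[k]$ is in particular an $\FO[k]$-formula over $\sigma\cup\{<\}$ whose free variables lie in the finite set $X_S$. The classical fact (Lemma 3.13 of \cite{DBLP:books/sp/Libkin04}) that there are only finitely many such $\FO[k]$-formulas up to equivalence therefore bounds the number of $\cfo_S[k]$-classes at once. On that route the infinite alphabet $\Sigma$ never has to be confronted, because the FO lemma is already insensitive to the names of bound variables.

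You instead redo the classical induction inside $\cfo$ itself. Your only genuinely new ingredient is the permutation lemma for letters not occurring in $S$, and it is sound. A letterwise permutation fixing the letters of $S$ preserves validity of index sets and all the syntactic constraints of $\cfo$, since it commutes with taking the child cluster $w\alpha$. It fixes $X_S$ pointwise and is injective, so it is a capture-free renaming of bound variables. Your approach is self-contained and makes visible why the infinite supply of cluster names is harmless. The paper's approach gains brevity by outsourcing exactly this bookkeeping to the FO lemma.

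On your last paragraph: the obstacle you describe is not real. Your induction hypothesis at level $k$ is stated for every valid $S'$, so it already covers all $\psi\in\cfo_{S'}[k]$, whatever letters occur inside $\psi$. Once $\alpha$ is normalised to $\Sigma_S\cup\{\alpha_0\}$, there are only finitely many sets $S'=S\cup\{(w\alpha,0)\}$, and the step is complete. The strengthening to a fixed letter stock $\Sigma_{S,k}$ is true but unnecessary. On its own it would not give finiteness anyway; you would still need the truth-table induction.
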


\begin{proof}
\label{lemma saying that there are finitely many CFO with k quantifiers}
This follows from the inclusion of $\cfo[k]$ in $\FO[k]$ on $\sigma\cup\{<\}$, and from the fact that when the vocabulary is finite, $\FO[k]$ contains only finitely many formulas with free variables in the finite set $X_S$, up to logical equivalence (see \cite{DBLP:books/sp/Libkin04}, Chapter~3, Lemma~3.13).
\end{proof}

\begin{proof}[Proof of \Cref{proof of CEF capturing CFO}]
  Before proving this result, we need to remark that since
  \begin{itemize}
  \item the number non-equivalent formulas of $\CFO_S[k]$ is finite, and
  \item $\CFO_S[k]$ is closed under boolean combinations,
  \end{itemize}
  there exists, for each $(\mathcal A,<_A,\nu_A)$, a formula $\Psi_\CFO^k(\mathcal A,<_A,\nu_A)\in\CFO_S[k]$ that describes precisely the $\CFO_S[k]$-type of $(\mathcal A,<_A,\nu_A)$, in the following sense: any ordered graph and valuation $(\mathcal B,<_B,\nu_B)$ satisfies $\Psi_\CFO^k(\mathcal A,<_A,\nu_A)$ iff $(\mathcal B,<_B,\nu_B)\equiv_\CFO^k(\mathcal A,<_A,\nu_A)$.

  We can now show this lemma by induction on $k$. For $k=0$, note that congiguration $\big(S,(\mathcal A,<_A,\nu_A),(\mathcal B,<_B,\nu_B)\big)$ is winning for Duplicator exactly if $(\mathcal A,<_A)$ and $(\mathcal B,<_B)$ satisfy the same atomic formulas of $\CFO_S[0]$ with respective valuations $\nu_A$ and $\nu_B$ (and thus, all the formulas of $\CFO_S[0]$, which are just boolean combinations of those). In other words, Duplicator wins the $0$-round game iff $(\mathcal{A},<_A,\nu_A)\equiv^0_{\cfo}(\mathcal{B},<_B,\nu_B)$.

  Moving to the inductive case, let us assume that the lemma holds for some $k\geq 0$. We start by showing the left-to-right implication for $k+1$. To that end, let us assume that $(\mathcal{A},<_A,\nu_A)\equiv^{k+1}_{\cfo}(\mathcal{B},<_B,\nu_B)$, and let us try to ensure Duplicator wins the $(k+1)$-round game starting from configuration $\big(S,(\mathcal A,<_A,\nu_A),(\mathcal B,<_B,\nu_B)\big)$. Without loss of generality, let us assume Spoiler plays on $(\mathcal A,<_A)$. To that end, we have to consider the three kinds of moves and make sure Duplicator can answer so as to have a $k$-round winning strategy after this initial round. If Spoiler makes a(n)...
  \begin{description}
  \item[...introduction move (with $S=\emptyset$)] and places pebble $p_\epsilon^0$ in $A$, then we consider formula $\varphi:=\exists x_\epsilon^0\Psi_\CFO^k(\mathcal A,<_A,x_\epsilon^0\mapsto p_\epsilon^0)$.
    By construction, $(\mathcal A,<_A)\models\varphi$ and $\varphi\in\CFO_\emptyset[k+1]$. It follows by hypothesis that $(\mathcal B,<_B)\models\varphi$, and there exists an element of $B$, on which Duplicator will place pebble $q_\epsilon^0$, such that $(\mathcal B,<_B,x_\epsilon^0\mapsto q_\epsilon^0)\models\Psi_\CFO^k(\mathcal A,<_A,x_\epsilon^0\mapsto p_\epsilon^0)$. By definition, this entails $(\mathcal B,<_B,x_\epsilon^0\mapsto q_\epsilon^0)\equiv_\CFO^k(\mathcal A,<_A,x_\epsilon^0\mapsto p_\epsilon^0)$, ensuring by inductive hypothesis that Duplicator has a $k$-round winning strategy from configuration $\big(\{(\epsilon,0)\},(\mathcal A,<_A,x_\epsilon^0\mapsto p_\epsilon^0), (\mathcal B,<_B,x_\epsilon^0\mapsto q_\epsilon^0)\big)$.
  \item[...introduction move (with $S\neq\emptyset$)] and places pebble $p_{w\alpha}^0$ in $A$, then we proceed as above and consider formula $\varphi:=\exists x_{w\alpha}^0\Psi_\CFO^k(\mathcal A,<_A,\nu_A+x_{w\alpha}^0\mapsto p_{w\alpha}^0)$. We argue in the exact same way that Duplicator can place $q_{w\alpha}^0$ in $B$ in such a way as to have a $k$-round winning strategy from configuration $\big(S\cup\{(w\alpha,0)\},(\mathcal A,<_A,\nu_A+x_{w\alpha}^0\mapsto p_{w\alpha}^0), (\mathcal B,<_B,\nu_B+x_{w\alpha}^0\mapsto q_{w\alpha}^0)\big)$.
  \item[...continuation move] and places pebbles $p_w^i$ in $A$ on an element that is edge-adjacent to some $p_w^j$ (with $j<i$), then we consider formula $\varphi:=\exists x_w^iE(x_w^i,x_w^j)\land\Psi_\CFO^k(\mathcal A,<_A,\nu_A+x_w^i\mapsto p_w^i)$. Once again, $(\mathcal B,<_b,\nu_B)\models\varphi$, and there exists a element of $B$, on which Duplicator places $q_w^i$, such that $(\mathcal B,<_B,\nu_B+x_w^i\mapsto q_w^i)\models\Psi_\CFO^k(\mathcal A,<_A,x_w^i\mapsto p_w^i)$. Note that this is a legal answer for Duplicator, as $q_w^i$ is guaranteed by $\varphi$ to be edge-adjacent to $q_w^j$. Again, Duplicator has a $k$-round winning strategy from configuration $\big(S\cup\{(w,i)\},(\mathcal A,<_A,\nu_A+x_w^i\mapsto p_w^i), (\mathcal B,<_B,\nu_B+x_w^i\mapsto q_w^i)\big)$.
  \end{description}
  No matter the choice of Spoiler, Duplicator can answer and guarantee to be able to win for $k$ more rounds: this means Duplicator has a winning strategy in the initial $k+1$-round game, concluding this direction of the inductive case.

  It remains to prove the right-to-left implication of the inductive case; to that end, let us assume that Duplicator has a winning strategy in the $(k+1)$-round game starting in configuration $\big(S,(\mathcal A,<_A,\nu_A),(\mathcal B,<_B,\nu_B)\big)$, and let us show that $(\mathcal{A},<_A,\nu_A)$ and $(\mathcal{B},<_B,\nu_B)$ agree on all formulas of $\CFO_S^{k+1}$. Such a formula is a boolean combination of existential formulas of quantifier rank $k$, and we only need to show that $(\mathcal{A},<_A,\nu_A)$ and $(\mathcal{B},<_B,\nu_B)$ agree on these formulas. They existential formulas can be of three forms:
  \begin{itemize}
  \item $\varphi=\exists x_\epsilon^0\psi$ (only possible if $S=\emptyset$), where $\psi\in\CFO_{\{(\epsilon,0)\}}[k]$. If $(\mathcal A,<_A)\models\varphi$, then Spoiler can place $p_\epsilon^0$ in $A$ (an introduction move with $S=\emptyset$) so that $(\mathcal A,<_A,x_\epsilon^0\mapsto p_\epsilon^0)\models\psi$.  By hypothesis, we know that Duplicator can answer this introduction move by placing $q_w^0$ in $B$ so as to have a $k$-round winning strategy from configuration $\big(\{(\epsilon,0)\},(\mathcal A,<_A,x_\epsilon^0\mapsto p_\epsilon^0),(\mathcal B,<_B,x_\epsilon^0\mapsto q_\epsilon^0)\big)$. By induction hypothesis, this means $(\mathcal A,<_A,x_\epsilon^0\mapsto p_\epsilon^0)$ and $(\mathcal B,<_B,x_\epsilon^0\mapsto q_\epsilon^0)$ agree on $\CFO_{\{(\epsilon,0)\}}[k]$ and, in particular, on $\psi$. Thus $(\mathcal B,<_B)\models\varphi$. Conversely, if $(\mathcal B,<_B)\models\varphi$ then we make Spoiler play on $B$ and conclude that $(\mathcal A,<_A)\models\varphi$.
  \item $\varphi=\exists x_{w\alpha}^0\psi$ where $\psi\in\CFO_{S\cup\{(w\alpha,0)\}}[k]$. As in the case above, if $(\mathcal A,<_A,\nu_A)\models\varphi$ (the converse case being similar) then we make Spoiler play an introduction move (with $S\neq\emptyset$) on an element of $A$ witnessing $\psi$, and use Duplicator's answer in $B$ as a witness to $(\mathcal B,<_B,\nu_B)\models\varphi$.
  \item $\varphi=\exists x_w^iE(x_w^i,x_w^j)\land\psi$, where $j<i$ and $\psi\in\CFO_{S\cup \{(w,i)\}}[k]$. Suppose $(\mathcal A,<_A,\nu_A)\models\varphi$ (the case $(\mathcal B,<_B,\nu_B)\models\varphi$ is symmetric): then Spoiler can place $p_w^i$ in $A$ on a witness to $\varphi$, i.e. so that $(\mathcal A,<_A,\nu_A+x_w^i\mapsto p_w^i)\models E(x_w^i,x_w^j)\land\psi$. This is a valid move, as $p_w^i$ must then be edge-adjacent to $p_w^j$. By hypothesis, Duplicator knows how to answer by placing $q_w^i$ in $B$ on an element adjacent to $q_w^j$ in such a way to be able to win for $k$ more rounds from configuration $\big(S\cup\{(w,i)\},(\mathcal A,<_A,\nu_A+x_w^i\mapsto p_w^i),(\mathcal B,<_B,\nu_B+x_w^i\mapsto q_w^i)\big)$. By induction hypothesis, $(\mathcal A,<_A,\nu_A+x_w^i\mapsto p_w^i)$ and $(\mathcal B,<_B,\nu_B+x_w^i\mapsto q_w^i)$ agree on $\CFO_{S\cup \{(w,i)\}}[k]$, and a fortiori on $\psi$. All in all, we indeed conclude that $(\mathcal B,<_B,\nu_B)\models\varphi$.
  \end{itemize}
\end{proof}
\end{toappendix}


%% file: BoundedDegree.tex
\begin{toappendix}
\section{Expressive power}

In this section, we prove that order-invariant cluster first-order logic is included in first-order logic when the degree is bounded (Theorem~\ref{theorem: main result}). We proceed by introducing orders, called $(k,F)$-orders, such that extending two $\FO$-similar graphs with a suitable $(k,F)$-orders, keeps them $\cfo$-similar (Theorem~\ref{main technical theorem}). Therefore, a large part of this section is devoted to the presentation of these orders.

Before defining the $(k,F)$-order, we must establish the structural building blocks used to measure local similarity under an ordering. We achieve this by recursively defining the \emph{$k$-context} of an element. Unlike a standard neighbourhood, a context summarises the local environment of a vertex by recording its colour and the restricted structure of its $k$-neighbourhood, while simultaneously accounting for the relative positions of elements within the intervals created by the linear order. This recursive approach ensures that the context captures not just the presence of neighbours, but how $(k-1)$-contexts are distributed between them. This refinement is essential for the Duplicator's strategy, as sharing a $k$-context implies sharing all lower-level contexts, allowing for a consistent response across different ordered graphs.

The $(k,F)$-order is a linearisation of a bounded-degree graph. The construction partitions the domain into a sequence of segments that isolate \emph{rare} local configurations, those appearing fewer than a specific threshold, and provides standardised \emph{universal} reservoirs for frequent ones. By organising elements into an extremity of rare types, iterative universal segments that act as copies of realisable contexts, and a central \emph{jungle} for the remaining elements, the order creates a predictable environment. High-frequency contexts are buffered by multiple layers of neighbouring segments to ensure that an element's local neighbourhood remains contained within the ordered parts and does not spill into the jungle, thereby allowing the Duplicator to mirror moves across identical segments in different structures.

Given an ordered graph $(\struct,<)$, a subset $B$ of its domain and two elements $a<b$ of $B$ that are consecutive in $B$ for $<$ (that is, such that no element of $B$ is $<$-between them), we let the \emph{interval} $]a,b[$ be the tuple $(a,b,I)$ where $I$ is the set of elements of $A$ that are strictly contained between $a$ and $b$ in the sense of $<$. We will write ``$x\in]a,b[$'' to mean ``$x\in I$''. We denote by $\Int[B]$ the set of all the intervals of $B$ in $(\struct,<)$; note that $|\Int[B]|=|B|-1$.

\begin{definition}[Context]
\label{def:context}
Let $\sigma$ be a relational vocabulary. We define below the \emph{$k$-context} of an element $v$ in an ordered graph $(\struct,<)$ by induction on $k$. We denote by $\allcontext^k$ the set of all $k$-contexts:
$$\allcontext^{k}:=\{\mathds{C}^k_{(\struct,<)}(v)\text{ for every ordered graph } (\struct,<)\text{ and }v\in A\}.$$
\begin{enumerate}
	\item Let $\mathds{C}^{0}_{(\struct,<)}(v)$ be the atomic type of $v$ in $(\struct,<)$, \ie its colour.
	\item Let $(\struct,<)|_{N^k_{\struct}(v)}$ be the restriction of the ordered graph $(\struct,<)$ to the $k$-neighbourhood of $v$ in $\struct$, let
          \[f:
          \begin{cases}
            \Int[\boule[k][v]]&\rightarrow \mathcal{P}(\allcontext^{k-1})\\
  ]a,b[&\mapsto \{\context[k-1][x]:x\in ]a,b[\}
          \end{cases}
          \]
          and
          \[g:
          \begin{cases}
            \boule[k][v]&\rightarrow \allcontext^{k-1}\\
            a&\mapsto \context[k-1][a]
          \end{cases}\,.
          \]
          The $k$-context $\mathds{C}^k_{(\struct,<)}(v)$ of $v$ in $(\struct,<)$ is defined as
	$$\mathds{C}^k_{(\struct,<)}(v):=\big((\struct ,<)|_{N^k_{\struct}(v)},~g,~f\big).$$
\end{enumerate}

A $k$-context $c\in \allcontext^{k}$ is \emph{realisable} in a graph \struct if there exists an ordering $<$ and an element $v$ of $A$ such that  $c=\con^{k}_{(\struct,<)}(v)$. In this case we say that $v$ \emph{realises $c$ in $(\struct,<)$}. The set of \emph{elements of the context of $v$ in $(\struct ,<)$} is also defined recursively.
\begin{enumerate}
	\item $\econ^0_{(\struct ,<)}(v):=\{v\}$, and
	\item $\econ^{k}_{(\struct,<)}(v):=N^{k}_{\struct}(v)\cup \bigcup_{]a,b[\in \Int[\boule[k][v]]}\bigcup_{x \in ]a,b[} \econ^{k-1}_{(\struct,<)}(x)$.
\end{enumerate}
We say that a $k$-context $c$ is minimally realised if the size of its set of elements is minimal.
Two realisable $k$-contexts $c_1$ and $c_2$ are said to be disjoint if the sets of their elements are disjoint.

\end{definition}

As in this section our main focus is on the classes of graphs with bounded degree, say $d$, by $\allcontext^{k}_{d}$ we denote the set of all $k$-contexts that are realisable in structures with bounded degree $d$ and vocabulary $\sigma$.
By $\nc^k_d$, we denote the number of realisable $k$-contexts in structures with bounded degree $d$ and vocabulary $\sigma$, or equivalently  $\nc^{k}_{d}:=|\allcontext^{k}_{d}|$, and by $\bc^k_d$, we denote the bound on the size of a minimally realisable $k$-context in structures of bounded degree $d$.

\begin{lemma}
	\label{lemma: if k context is the same k-1 context is the same}
	Let $(\mathcal A,<_A)$ and $(\mathcal B,<_B)$ be two ordered graphs. For every $a\in A$, $b\in B$ and integer $k'<k$, $\mathds{C}^k_{(\mathcal A,<_A)}(a)=\mathds{C}^k_{(\mathcal B,<_B)}(b)$ entails $\mathds{C}^{k'}_{(\mathcal A,<_A)}(a)=\mathds{C}^{k'}_{(\mathcal B,<_B)}(b)$.
\end{lemma}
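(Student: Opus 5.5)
The plan is to reduce the statement to the case $k'=k-1$ and then finish by a downward induction on $k-k'$. Indeed, if we know that equality of $k$-contexts always entails equality of $(k-1)$-contexts, then applying this repeatedly gives
\[
\mathds{C}^k_{(\mathcal A,<_A)}(a)=\mathds{C}^k_{(\mathcal B,<_B)}(b)\;\Rightarrow\;\mathds{C}^{k-1}_{(\mathcal A,<_A)}(a)=\mathds{C}^{k-1}_{(\mathcal B,<_B)}(b)\;\Rightarrow\;\cdots\;\Rightarrow\;\mathds{C}^{k'}_{(\mathcal A,<_A)}(a)=\mathds{C}^{k'}_{(\mathcal B,<_B)}(b).
\]
The case $k'=0$ is covered by the same chain: the last step goes from $1$-contexts to $0$-contexts, i.e.\ to colours.

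For the one-step claim, I first make precise what equality of two $k$-contexts means in \Cref{def:context}. Since the first component is the ordered graph $(\struct,<)|_{N^k_{\struct}(v)}$, equality must be read up to isomorphism. I will take it to mean that there is an isomorphism $h$ of ordered coloured graphs from $(\mathcal A,<_A)|_{N^k_{\mathcal A}(a)}$ onto $(\mathcal B,<_B)|_{N^k_{\mathcal B}(b)}$ satisfying three conditions:
\begin{itemize}
\item $h(a)=b$, i.e.\ the centre is part of the data, as it must be for the notion to be meaningful;
\item $g_{\mathcal B}\circ h=g_{\mathcal A}$;
\item $f_{\mathcal B}(]h(x),h(y)[)=f_{\mathcal A}(]x,y[)$ for every interval $]x,y[$. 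This is well defined because $h$ is order-preserving, so it sends consecutive pairs to consecutive pairs.
\end{itemize}
With this reading the one-step claim is immediate. The centre $a$ lies in $N^k_{\mathcal A}(a)$, so $g_{\mathcal A}(a)=\mathds{C}^{k-1}_{(\mathcal A,<_A)}(a)$, and likewise $g_{\mathcal B}(b)=\mathds{C}^{k-1}_{(\mathcal B,<_B)}(b)$. Hence
\[
\mathds{C}^{k-1}_{(\mathcal A,<_A)}(a)=g_{\mathcal A}(a)=g_{\mathcal B}(h(a))=g_{\mathcal B}(b)=\mathds{C}^{k-1}_{(\mathcal B,<_B)}(b).
\]

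The only real obstacle is this definitional point: if equality of contexts were read without the centre being distinguished, $g$ alone would not identify which vertex's $(k-1)$-context to read off. My first resort is the convention above, which I believe is the intended one. As a fallback, I would prove by induction on $k$ that the $(k-1)$-context can be reconstructed directly from the $k$-context without using $g$ at the centre. The reconstruction would go as follows.
\begin{itemize}
\item $N^{k-1}(v)$ is the ball of radius $k-1$ around the centre inside the recorded neighbourhood, and it carries the induced order.
\item Each interval of $N^{k-1}(v)$ is the union of some consecutive intervals of $N^k(v)$ together with the points of $N^k(v)\setminus N^{k-1}(v)$ lying between them.
\item The set of $(k-2)$-contexts in such an interval is the union of the $(k-2)$-projections of the $f$-values of those sub-intervals, together with the $(k-2)$-projections of the $g$-values of those intermediate points. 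These projections are available by the induction hypothesis.
\item The map giving $(k-2)$-contexts on $N^{k-1}(v)$ is likewise the projection of $g$.
\end{itemize}
This route is more laborious but avoids relying on the convention.
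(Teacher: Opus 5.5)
Your proof is correct, and your main route is genuinely shorter than the paper's.

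The paper proves the case $k'=k-1$ by induction on $k$ and rebuilds the $(k-1)$-context component by component. It restricts the ordered neighbourhood from $N^k$ to $N^{k-1}$. It gets the new $g$ by projecting the recorded $(k-1)$-contexts down to $(k-2)$-contexts via the induction hypothesis. It gets the new $f$ on each interval of $N^{k-1}$ from the $f$-values of the finer intervals of $N^k$, together with the $g$-values of the points at distance exactly $k$, again projected by the induction hypothesis. This is exactly your fallback.

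You instead observe that the centre lies in its own $k$-neighbourhood. So $g(a)$ \emph{is} $\mathds{C}^{k-1}_{(\mathcal A,<_A)}(a)$, and a centre-preserving match of the $k$-contexts gives the conclusion with no induction on $k$. Only the iteration down to general $k'$ remains, which the paper also does. Your argument isolates the one hypothesis that actually matters, namely that the centre is part of the data. The paper's reconstruction shows how lower-level contexts are assembled from coarser intervals, but that work is not needed for this lemma.

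One correction to your closing remark: the fallback does not avoid the convention. It uses ``the ball of radius $k-1$ around the centre'', so it needs the centre just as much. The paper's proof also relies on it implicitly, since it asserts $\mathds{C}^0(a)=\mathds{C}^0(b)$ ``by the force of the definition''.

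In fact no proof could avoid it. Take $\mathcal A$ to be a red vertex $a$ adjacent to a blue vertex $a'$ with $a<_A a'$, and $\mathcal B$ a red vertex $b'$ adjacent to a blue vertex $b$ with $b'<_B b$. Then the $1$-contexts of $a$ and $b$ agree up to unpointed isomorphism, because the single interval is empty on both sides, yet their colours differ. So your pointed reading is not merely convenient; it is necessary for the statement to be true.
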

\begin{proof}
We show the lemma for $k'=k-1$ by induction on $k$. The general result for $k'<k$ follows immediately. For the base case $k=1$, it is clear as per the definition we remember the $0$-context of the elements as well. For the induction hypothesis assume that for any two elements if $\mathds{C}^{l}_{(A,<_A)}(a)=\mathds{C}^{l}_{(B,<_B)}(b)$, then $\mathds{C}^{l-1}_{(A,<_A)}(a)=\mathds{C}^{l-1}_{(B,<_B)}(b)$, for all integers $0<l<k$. Now, if $\mathds{C}^{k}_{(A,<_A)}(a)=\mathds{C}^{k}_{(B,<_B)}(b)$, then $\mathds{C}^0_A(a)=\mathds{C}^0_{B}(b)$, again by the force of the definition. The equality of the $k$-contexts imply that the corresponding elements of the $k$-neighbourhood respecting the orders have the same $k-1$-contexts. Hence, as the restriction of the orders to $k-1$-neighbourhood instead of the $k$-neighbourhood is just a subsequence of the same order, we have the same corresponding elements with the same $k-1$-contexts. By induction hypothesis, they have the same $k-2$-contexts. Therefore, looking at $(A,<_A)|_{N^{k-1}_A(a)},g^{k-1}_a$ and $(B,<_B)|_{N^{k-1}_B(b)},g^{k-1}_b$ we see the same order on elements with the same $k-2$-contexts.

For the fourth attribute, we have that in any interval of the $k-1$-neighbourhood, we have the elements at distance exactly $k$, or the same elements that existed in an interval of the $k$-neighbourhood. For both cases, we know by the assumption that they admit the same set of $k-1$-contexts and hence by induction hypothesis the same set of $k-2$-contexts.
\end{proof}

In Definition~\ref{def:context}, function $f$ associates each interval of \Int[\boule[k][v]] with the set of $(k-1)$-contexts that are realised in it.
There, an interval is always included between two elements of \boule[k][v] that are consecutive for $<$. It will be convenient to extend this notion so that it takes into account two new intervals: the leftmost and the rightmost.

\begin{definition}[Outer context]
  \label{def:outer_context}
  We extend the set \Int[B] of intervals of $B$ in $(A,<)$ to \OInt[B] by adding two new intervals $]-\infty,\min_<(B)[$ and $]\max_<(B),\infty[$, where $\min_<(B)$ and $\max_<(B)$ are the $<$-minimal and $<$-maximal elements of $B$. These intervals contain respectively all the elements of $A$ which are strictly smaller that $\min_<(B)$, or larger that $\max_<(B)$.
      
      The \emph{outer $k$-context} \ocontext[k][v] of an element $v$ in $(\struct,<)$ is then defined recursively in the same way as \context[k][v], except that the domain of $f$ is now \OInt[\boule[k][v]] instead of \Int[\boule[k][v]].
\end{definition}

\begin{definition}[Neighbourhood type]
	Let $c$ be a constant symbol, and let $\mathcal{A}\in\mathcal{C}$ be a structure in a class $\mathcal{C}$ of structures. We define, for $k\in\N$ and $a\in A$, the \emph{pointed $k$-neighbourhood $\ntype{k}{A}{a}$} to be the restriction of the structure $\mathcal{A}$ with vocabulary $\sigma\cup\{c\}$ to the $k$-neighbourhood of $a$, i.e., $N^k_{A}(a):=\{b\in A\mid \dist_{A}(a,b)\leq k\}$, where $c$ is interpreted as $a$. To put it differently, the structure $\ntype{k}{A}{a}$ is the substructure induced by the $k$-neighbourhood of $a$ in $A$, where we label $a$ with $c$.

By \emph{$k$-neighbourhood types of $\mathcal{A}$} we mean the \emph{isomorphism class of pointed $k$-neighbourhoods}. If for an element $a\in A$, the pointed $k$-neighbourhood $\ntype{k}{A}{a}$ belongs to the isomorphism class $\tau$, we say that $a$ is an \emph{occurrence} of $\tau$ in $A$. By $|\mathcal{A}|_\tau$ we denote the number of occurrences of $\tau$ in $\mathcal{A}$, i.e., the number of elements $a\in A$ with pointed $k$-neighbourhood $\tau$.

Let $\mathcal{A}_1$ and $\mathcal{A}_2$ be two structures, we write $\llbracket \mathcal{A}_1\rrbracket_k=^{t}\llbracket \mathcal{A}_2\rrbracket_k$ whenever for any $k$-neighbourhood type $\tau$, either $|\mathcal{A}_1|_\tau =|\mathcal{A}_2|_\tau$ or both values are greater than $t$. In this scenario we say that $\mathcal{A}_1$ and $\mathcal{A}_2$ are \emph{$(k,t)$-threshold equivalent}.
By $\neightype^d_k(\mathcal{C})$ we denote the set of different \emph{$k$-neighbourhood types} occurring in such a class $\mathcal{C}$. When the class $\mathcal{C}$ is evident in the context we avoid writing it, instead we use $\neightype^d_k$. 
\end{definition}

\begin{observation}
\label{observation: neighbourhood types are finite}
Let $\mathcal{C}$ be a class of structures with bounded degree $d$, then the set of $k$-neighbour\-hood types occurring in these structures is finite. 
\end{observation}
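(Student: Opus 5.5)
The plan is to bound the size of every $k$-neighbourhood of every vertex by a quantity depending only on $d$ and $k$. Finiteness then follows by counting labelled structures on a bounded domain over the finite vocabulary $\sigma\cup\{c\}$.

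\emph{Step 1: bounding the ball.} We show by induction on $k$ that for every $\mathcal A\in\mathcal C$ and $a\in A$,
\[|N^k_{\mathcal A}(a)|\leq 1+d+d^2+\cdots+d^k=:s(d,k).\]
For $k=0$ the ball is $\{a\}$. Every vertex at distance exactly $i+1$ from $a$ is adjacent to some vertex at distance exactly $i$. Each such vertex has at most $d$ neighbours, so the sphere of radius $i+1$ has at most $d$ times as many elements as the sphere of radius $i$. Hence spheres have size at most $d^i$, and summing gives the bound. For coloured graphs the degree refers to the Gaifman graph, which coincides with $E$ here.

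\emph{Step 2: counting types.} A $k$-neighbourhood type is the isomorphism class of $\ntype{k}{A}{a}$, a $\sigma\cup\{c\}$-structure on at most $s=s(d,k)$ elements. Any such structure is isomorphic to one with domain $\{1,\dots,n\}$ for some $n\leq s$. On a fixed domain of size $n$ there are at most $2^{n^2}$ choices for $E$, $2^{Nn}$ choices for the colours $P_1,\dots,P_N$, and $n$ choices for the interpretation of $c$. So the number of types is at most
\[\sum_{n=1}^{s}2^{n^2}\,2^{Nn}\,n,\]
which is finite and depends only on $d$, $k$ and $N$.

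There is no real obstacle here. The only points to make explicit are that $\sigma$ is finite (it is fixed as $\{E,P_1,\dots,P_N\}$) and that the bound holds uniformly over the whole class $\mathcal C$. This uniformity is exactly what lets the paper later write $\neightype^d_k$ as a finite set, independent of the particular structures considered.
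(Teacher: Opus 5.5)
Your proof is correct and follows the same route as the paper's sketch. Both bound every $k$-neighbourhood by a function of $d$ and $k$, then use finiteness of $\sigma\cup\{c\}$ to get finitely many isomorphism classes of pointed structures of bounded size; you additionally make explicit the bound $1+d+\cdots+d^k$ and the count $\sum_{n\leq s}2^{n^2}2^{Nn}n$, which the paper leaves implicit.
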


\begin{proof}[Proof sketch]
For fixed $k$ and $d$, every $k$-neighbourhood has size bounded by a function of $k$ and $d$. Since the vocabulary $\sigma$ is finite, there are only finitely many structures of this bounded size over $\sigma\cup\{c\}$, up to isomorphism. Hence only finitely many pointed $k$-neighbourhood types can occur.
\end{proof}

\begin{corollary}
\label{corollary: all contexts in finite}
		Let $k$ and $d$ be integers, then $|\allcontext^{k}_{d}|$ and the size of a minimally realised $k$-context are finite and only depends on $k,d$ and $|\sigma|$.
\end{corollary}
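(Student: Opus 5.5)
We argue by simultaneous induction on $k$, proving for every $k$ that (i) $|\allcontext^k_d|$ is bounded by a function of $k,d,|\sigma|$, and (ii) every realisable $k$-context in a graph of degree at most $d$ has a realisation whose element set $\econ^k$ has size bounded by a function $\bc^k_d$ of $k,d,|\sigma|$. For $k=0$, a $0$-context is just a colour, so $\nc^0_d\leq 2^{|\sigma|}$, and $\econ^0(v)=\{v\}$ has size $1$.

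For the inductive step on (i), recall that a $k$-context is a triple $\big((\struct,<)|_{\boule[k][v]},g,f\big)$. The first component is an ordered coloured graph on $\boule[k][v]$ with a distinguished centre. By bounded degree, $|\boule[k][v]|\leq 1+d+\dots+d^k=:s_k$. Up to isomorphism there are therefore finitely many such ordered graphs, at most $s_k!\cdot 2^{s_k^2}\cdot 2^{|\sigma| s_k}$ choices times $s_k$ for the centre. The map $g$ sends at most $s_k$ points into $\allcontext^{k-1}_d$, and $f$ sends at most $s_k-1$ intervals into $\mathcal P(\allcontext^{k-1}_d)$. Both have finitely many choices by the induction hypothesis, which gives a bound of the form
\[
\nc^k_d\leq C(s_k,|\sigma|)\cdot (\nc^{k-1}_d)^{s_k}\cdot 2^{(s_k-1)\nc^{k-1}_d}.
\]
We have to take a little care that contexts are compared up to isomorphism of the restricted ordered structure, with $g$ and $f$ transported accordingly. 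Otherwise $\allcontext^k$ would be a proper class, but this is only a matter of fixing the convention implicit in the definition.

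For (ii), the subtlety is that $\econ^k(v)$ in an arbitrary realisation can be huge, because an interval may contain many elements. We instead show that a small realisation exists, and this is the step that needs the most care. Given a realisation of $c$ by $v$ in $(\struct,<)$, for each interval $]a,b[$ and each $(k-1)$-context $c'\in f(]a,b[)$, we pick one witness $x$. By induction there is a graph of degree at most $d$ realising $c'$ with at most $\bc^{k-1}_d$ elements. We then build a new graph: the disjoint union of $\struct_{|\boule[k][v]}$ together with its edges to a bounded part, and fresh copies of these minimal realisations, each inserted as a block inside the corresponding interval.

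The main obstacle is that the $(k-1)$-contexts of the inserted witnesses and of the neighbourhood points, i.e.\ the values of $g$, must be preserved. These depend on their own neighbourhoods and intervals, and taking disjoint unions changes the graph distances of the points in $\boule[k][v]$ near the boundary. To handle this, we do not take fresh copies but instead take the subgraph induced on $\econ^k(v)$ restricted to chosen witnesses, keeping the original order. This means recursively selecting, for each point $y$ of $\boule[k][v]$ and each witness, a bounded set of witnesses for its own intervals, and closing under this selection for $k$ levels. Each level multiplies the size by a factor bounded in terms of $s_k$ and $\nc^{k-1}_d$.

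We then need to check that in the induced ordered substructure all relevant neighbourhoods are unchanged. Neighbourhoods are unchanged because we include full balls of the appropriate radii. The interval content sets are unchanged because every context originally present still has a selected witness, and no new context appears, since every kept element retains the context it had before. This last point follows by a downward induction on the level. The size of the result is a function of $k,d,|\sigma|$ only, which bounds the size of a minimally realised $k$-context.
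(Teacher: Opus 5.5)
Your part (i) is correct and is essentially the paper's counting argument: a bounded-size ordered ball, then for each of boundedly many intervals a subset of $\allcontext^{k-1}_d$. You are in fact a bit more careful than the paper, since you also count the choices for $g$ and use the right ball size $1+d+\dots+d^k$. Part (ii), however, has a genuine gap. In the induced ordered substructure on your selected set $W$, the value of $f$ on an interval of a ball $\boule[j][y]$ is computed from \emph{every} kept element lying in that interval, not only from the witnesses you selected for it. Many kept elements were added only for lower-level reasons (points of smaller balls, colour witnesses, \dots), and for them only a lower-level context is preserved. So the claim that ``every kept element retains the context it had before'' is false, and new contexts can appear. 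This already happens for $k=2$. Suppose the witness $x$ chosen for an interval $I$ of $\boule[2][v]$ has a neighbour $y\in I$ that has a further neighbour outside $W$. Then $y$ has a truncated ball in the substructure, and its $1$-context, which is now recorded in $f(I)$, need not be one of the original ones. You could try to repair this by requiring every kept element in the span of a level-$j$ ball to keep its $(j-1)$-context. That forces you to add its full ball and interval witnesses, which fall in the same span and must be treated at the same level. The closure then no longer descends in level, so neither ``closing for $k$ levels'' nor the bounded-factor-per-level estimate is justified.

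For the statement as written, you need no surgery at all. Ordered graphs are finite, so every $c\in\allcontext^k_d$ has some realisation, hence a minimal one of some finite size $\mu(c)$. By (i), $\allcontext^k_d$ is a finite set determined by $k,d,\sigma$. Therefore $\bc^k_d:=\max_{c\in\allcontext^k_d}\mu(c)$ is finite and depends only on $k,d,|\sigma|$. The paper instead asserts the explicit recursion $\bc^k_d\leq d^k-1+(d^k-2)\nc^{k-1}_d\bc^{k-1}_d$, obtained by placing one minimal realisation of each $(k-1)$-context in each interval. Your worry that such a composition alters the interval contents is legitimate. It only matters, though, if you want an explicit bound, which the corollary does not claim.
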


\begin{proof}
\label{proof of all contexts in finite}
Proof by induction on $k$. For $k=0$, we have that $\allcontext^{0}_{d}$ is exactly the set of colours therefore $|\allcontext^{0}_{d}|=|\sigma|$. A $k$-context $c\in \allcontext^{k}_{d}$ can be constructed by a vertex $v$ with the same $k$-neighbourhood types as a vertex that realises $c$ in a structure, for which there are $
|\neightype^{k}_{d}|$ possibilities. This neighbourhood is of size at most $d^{k}-1$, therefore there are $(d^{k}-1)!$ ordering on this neighbourhood. Between each consecutive element of this neighbourhood we can have up to $\nc^{k-1}_{d}$ many different $k-1$-contexts. Hence, there are at most
$$\nc^{k}_{d}=|\neightype^{k}_{d}|\cdot(d^{k}-1)!\cdot (2^{\nc^{k-1}_{d}})^{(d^{k}-2)}$$
different $k$-contexts. By Observation~\ref{observation: neighbourhood types are finite}, $|\neightype^{k}_{d}|$ is bounded; hence $\nc^{k}_{d}$ is bounded. Now, the size of a minimally realised $0$-context is $1$, \ie $\bc^0_d=1$. 
In a $k$-context, the vertex that realises this context in a structure has a neighbourhood of size $d^{k}-1$, and there are $d^{k}-2$ gaps between the consecutive elements of this neighbourhood, in each of which we can have at most $\nc^{k-1}_{d}$ many different $k-1$-contexts of size $\bc^{k-1}_{d}$. Therefore, the size of a minimally realised $k$-context, \ie $\bc^k_d$, is bounded by $$d^{k}-1+ (d^{k}-2)\cdot \nc^{k-1}_{d}\cdot \bc^{k-1}_{d}.$$
\end{proof}

When speaking about realisability of a context $c$, it is usually in a structure and it means that there is an order $<$ such that when ordering the said structure with $<$, one can find a vertex in the ordered structure that has the context $c$ (see \Cref{def:context}). However, given a $k$-context $c$ in a structure $(\mathcal A,<)$, one can throw away all the elements of the structure that are not included in the set of elements of $c$ in $(\mathcal A,<)$. The remaining elements, \ie the set of elements of $c$ in $(\mathcal A,<)$, consist of some vertices and their $k$-neighbourhoods that are arranged in a specific way, forming the context $c$. In this way, we can comprehend a $k$-context outside of the scope of a structure, as an ordering on a set of vertices together with their neighbourhoods.

\begin{definition}
Let $i,k$ be integers such that $i\leq k$, and $F$ and $c$ be a set of $k$-neighbourhood types and an $i$-context, respectively. We say that $c$ is realisable with the $k$-neighbourhood types of $F$ if there exists an ordered structure $(\mathcal A,<)$ and a vertex $v\in A$, such that $\mathds{C}^i_{(\mathcal A,<)}(v)=c$	and the set of $k$-neighbourhood types occurring in the unordered structure $\mathcal A$, for the elements of the set of elements of $c$ in $(\mathcal A,<)$, \ie, $\mathds{E}^i_{(\mathcal A,<)}(v)$, is included in $F$. More formally
\[
\left\{ \tau \;\middle|\; a \in \mathds{E}^i_{(\mathcal A,<)}(v)
\text{ such that } \tau \text{ is the } k\text{-neighbourhood type of } a \text{ in } \mathcal A \right\}
\subseteq F.
\]
\end{definition}


\begin{definition}[$(m,r,s)$-frequent types]
	Let $k\in\N$. We say that a $k$-neighbourhood type $\tau$ is $(m,r,s)$-frequent in $\mathcal{A}$ if for any set $B\subseteq A$ where $|B|\leq s$ there are at least $m$ occurrences of $\tau$ where they are mutually at distance at least $r$ from each other and from $B$, i.e., we can find $a_1,\cdots,a_m$ where they all have $k$-neighbourhood type $\tau$ and $\dist(a_i,a_j)> r$, for $1\leq i,j\leq m$ and $\dist(a_i,B)> r$, for $1\leq i\leq m$.
\end{definition}

In \cite{bednarczyk2025expressive}, Bednarczyk and Grange showed that for any given $k,m$ and $r$ there exist an $s$ and a threshold $t$ such that we can partition the $k$-neighbourhood types into $(m,r,s)$-frequent types and the rest of the types, which we call then rare types. Note that elements with $(m,r,s)$-frequent types definitely appear more than $m$ times in the structure; moreover for any fixed set $B$ of size $s$ we can find $m$ copy of them that are mutually at distance greater than $r$, and each of them is at distance greater than $r$ from $B$ as well. They showed this through the following two Lemmas which we restate here without their proofs.
\begin{lemma}[Lemma 5.4 of \cite{bednarczyk2025expressive}]
\label{lemma: lemma 5.4}
Let $\mathcal{C}$ be a class of structures with bounded degree $d$ and $k$ be an integer. Then, there exists a function $g:\N\times\N\times\N\rightarrow\N$ such that for all $A\in\mathcal{C}$ and for any three positive integers $m,r$ and $s$, and any set $B\subseteq A$, with $|B|\leq s$,	and all subsets $C_1,\cdots,C_n\subseteq A$ ($n\leq |\neightype^d_k(\mathcal{C})|$) of size at least $g(m,r,s)$, we can find elements $c^1_j,\cdots,c^m_j\in C_j$ for all $1\leq j\leq n$, such that for all $1\leq j,j'\leq n$ and $1\leq i,i'\leq m$, $\dist_A(c^i_j,B)> r$ and $\dist_A(c^i_j,c^{i'}_{j'})>r$ if $(j,i)\neq (j',i')$.
\end{lemma}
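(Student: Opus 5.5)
The plan is a greedy selection argument that only uses the fact that balls in a graph of degree at most $d$ have bounded size. Let $V(r):=\sum_{i=0}^{r} d^i$; then every $r$-neighbourhood $\boule[r][a]$ in a structure of $\mathcal C$ has at most $V(r)$ elements. Let $N:=|\neightype^d_k(\mathcal C)|$, which is finite by \Cref{observation: neighbourhood types are finite} and depends only on $\mathcal C$ and $k$. I would take
\[
g(m,r,s):=(s+Nm)\cdot V(r)+1 ,
\]
which is a function of $m,r,s$ once $\mathcal C$ and $k$ are fixed, as the statement allows.

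The selection proceeds as follows. Maintain a \emph{forbidden} set $Z\subseteq A$, initialised as $Z_0:=\bigcup_{b\in B}\boule[r][b]$, so that $|Z_0|\leq s\cdot V(r)$. Enumerate the $nm$ pairs $(j,i)$ in any fixed order, say lexicographically. At the step handling $(j,i)$, choose $c^i_j\in C_j\setminus Z$ and then add $\boule[r][c^i_j]$ to $Z$. Before any step, $Z$ is the union of $Z_0$ with at most $nm-1$ balls of radius $r$. Hence
\[
|Z|\leq (s+nm-1)\,V(r) < g(m,r,s)\leq |C_j| ,
\]
using $n\leq N$. So $C_j\setminus Z$ is nonempty and the choice is always possible.

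It remains to verify the distance conditions. Since $c^i_j\notin Z_0$, we have $\dist_A(c^i_j,b)>r$ for every $b\in B$, that is, $\dist_A(c^i_j,B)>r$. For two distinct pairs, the one chosen later avoids the $r$-ball of the one chosen earlier, so their distance exceeds $r$. Distance is symmetric, so this covers all ordered pairs $(j,i)\neq(j',i')$. As $r$ is positive, this also makes all the chosen elements distinct, including across different sets $C_j$, even when the $C_j$ overlap.

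There is no real obstacle here. The only points needing care are the ball-size bound $V(r)$, which is exactly where bounded degree is used, and making sure $g$ does not depend on $n$ itself but only on its upper bound $N$, which is fixed once $\mathcal C$ and $k$ are fixed.
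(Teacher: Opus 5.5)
Your greedy argument is correct: the ball bound $V(r)=\sum_{i=0}^{r}d^i$, the count $|Z|\leq (s+nm-1)V(r)<g(m,r,s)$ before each choice, and the distance checks all go through. The checks include $\dist_A(c^i_j,B)>r$, which follows from avoiding $Z_0$, and $g$ depends only on $m,r,s$ once $d$ and $|\neightype^d_k(\mathcal{C})|$ are fixed. The paper gives no proof of its own here, since it imports the lemma from Bednarczyk and Grange, and your greedy ball-counting selection is the standard way to prove such a statement, so there is nothing to add.
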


For an integer $t$ and a structure $A\in\mathcal{C}$ let $\freq[A]^{\geq t}_k\subseteq \neightype^d_k(\mathcal{C})$ denote the set of $k$-neighbourhood types which have at least $t$ occurrences in $A$.
\begin{lemma}[Lemma 5.6 of \cite{bednarczyk2025expressive}]
\label{lemma: lemma 5.6}
	Let $\mathcal{C}$ be a class of structures with bounded degree $d$ and $k,m,r$ be integers. Then, there exists $\Upsilon\in \N$ such that for every $A\in\mathcal{C}$, there exists $t\leq \Upsilon$, for which
	$$t\geq g\left(m,r,\sum_{\tau\in \neightype^d_k(\mathcal{C})\setminus \freq[A]^{\geq t}_k}|A|_{\tau}\right),$$
	for the function $g$ from \Cref{lemma: lemma 5.4}.
\end{lemma}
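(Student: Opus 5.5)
The plan is a pigeonhole argument over a fixed, finite ladder of candidate thresholds, in the style of the original argument of \cite{bednarczyk2025expressive}. Let $N:=|\neightype^d_k(\mathcal C)|$, which is finite by \Cref{observation: neighbourhood types are finite}.

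\textbf{Step 1: make $g$ monotone.} Assume without loss of generality that $g$ is non-decreasing in its third argument. If it is not, replace $g(m,r,s)$ by $\max_{s'\leq s} g(m,r,s')$. The conclusion of \Cref{lemma: lemma 5.4} still holds for this larger function, since it only asks the sets $C_j$ to be larger while $B$ still has size at most $s$. So this replacement is harmless, and we may take it as the ``function $g$ from \Cref{lemma: lemma 5.4}''.

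\textbf{Step 2: build the ladder.} Define thresholds that depend only on $k,m,r,d$ (through $N$), and not on $\mathcal A$:
\[t_0:=g(m,r,0)+1, \qquad t_{i+1}:=\max\bigl(t_i+1,\ g(m,r,N\cdot t_i)\bigr)\quad\text{for } 0\leq i\leq N.\]
Set $\Upsilon:=t_{N+1}$.

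\textbf{Step 3: pigeonhole.} Fix $\mathcal A\in\mathcal C$ and consider the $N+1$ disjoint half-open intervals $[t_i,t_{i+1})$ for $0\leq i\leq N$. Each of the $N$ types $\tau$ contributes a single value $|\mathcal A|_\tau$. Hence some interval $[t_i,t_{i+1})$ contains no value $|\mathcal A|_\tau$.

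\textbf{Step 4: verify the bound.} Take $t:=t_{i+1}\leq\Upsilon$. A type $\tau\notin\freq[\mathcal A]^{\geq t}_k$ satisfies $|\mathcal A|_\tau<t_{i+1}$. Since no count lies in $[t_i,t_{i+1})$, in fact $|\mathcal A|_\tau<t_i$. There are at most $N$ such types, so the sum of their counts is at most $N\cdot t_i$. By monotonicity of $g$,
\[g\Bigl(m,r,\sum_{\tau\notin\freq[\mathcal A]^{\geq t}_k}|\mathcal A|_\tau\Bigr)\leq g(m,r,N t_i)\leq t_{i+1}=t,\]
which is the required inequality.

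\textbf{Main obstacle.} The only genuinely delicate point is the monotonicity normalisation of $g$ in Step 1, together with checking that it does not break the application of \Cref{lemma: lemma 5.4} downstream. The counting itself is routine once the ladder is chosen so that it is independent of $\mathcal A$.
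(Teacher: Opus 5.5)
Your proof is correct. The paper does not prove this lemma itself but defers to \cite{bednarczyk2025expressive}, whose argument is this same pigeonhole over an $\mathcal A$-independent ladder of thresholds, and your write-up confirms the paper's remark that it goes through for arbitrary $m,r$. The step you flag as delicate is harmless even without changing anything in \Cref{lemma: lemma 5.4}: the envelope $g'(m,r,s):=\max_{s'\leq s}g(m,r,s')$ dominates $g$ pointwise, so the inequality $t\geq g'(m,r,\cdot)$ you prove immediately gives $t\geq g(m,r,\cdot)$ for the original $g$.
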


The proof of Lemma~\ref{lemma: lemma 5.6} in \cite{bednarczyk2025expressive} is given for two specific values of $m$ and $r$, corresponding to those used subsequently in their paper. Nevertheless, a detailed examination of the argument reveals that the result holds in the more general setting stated above.

Now, we are ready to introduce the order.
\input{order.tex}

\begin{lemma}
 \label{main technical theorem}

  Let $<_A$ and $<_B$ be two $(F,k)$-orders of some graphs $\mathcal A$ and $\mathcal B$ of degree at most $d$, for a set $F$ that is a set of $k$-neighbourhood types. If a border-preserving bijection $\phi:A\setminus J^A\rightarrow B\setminus J^B$ exists, 
  then $(\mathcal A,<_A)\equiv_{\CFO}^k(\mathcal B,<_B)$.
\end{lemma}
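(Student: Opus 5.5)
We will apply \Cref{th:CEF_captures_CFO} and give Duplicator an explicit winning strategy for the $k$-round cluster EF game between $(\mathcal A,<_A)$ and $(\mathcal B,<_B)$. The strategy keeps two kinds of pebbles apart. A \emph{rigid} pebble lies in $A\setminus J^A$ (resp.\ $B\setminus J^B$), in the extremity, a universal segment or a neighbouring segment, and is answered through the border-preserving bijection $\phi$. A \emph{flexible} pebble lies in the jungle or is about to enter it. Since $\phi$ preserves borders, it respects the segment decomposition: the segment order $\prec$ and the internal orders of $X$ and of the neighbouring segments are preserved. In universal segments, $\phi$ maps each context copy onto the corresponding copy with its built-in order. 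It also preserves edges, because it preserves the $k$-neighbourhoods of the non-jungle parts; this is what ``border-preserving'' should guarantee. Consequently, any cluster whose pebbles all stay rigid is answered isomorphically, including all order comparisons with child roots that are also rigid.

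For flexible pebbles we maintain a level invariant. After round $j$, every pebble $p$ of a cluster that has entered the jungle is matched with a pebble $q$ satisfying $\mathds{C}^{k-j}_{(\mathcal A,<_A)}(p)=\mathds{C}^{k-j}_{(\mathcal B,<_B)}(q)$. In addition, the pebbled vertices of each such cluster lie inside the $(k-j)$-neighbourhood of their root, and that neighbourhood is mapped by the isomorphism contained in the context.

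A continuation move inside such a cluster stays within the recorded neighbourhood. By \Cref{lemma: if k context is the same k-1 context is the same} the invariant drops by one level, so Duplicator answers through the neighbourhood isomorphism.

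An introduction move $x^0_{w\alpha}$ can fall into an interval between consecutive elements of the parent's neighbourhood. In that case the map $f$ of the context guarantees that an element with the same $(k-j-1)$-context exists in the corresponding interval on the other side, and Duplicator plays it. The move can also fall outside the parent's neighbourhood entirely (leftmost or rightmost), or the parent may itself be rigid. This is where the universal segments are used. There are $k^2$ of them on each side, each containing a copy of every context realisable with types in $F$, and every jungle vertex has its $k$-neighbourhood type in $F$. So whatever side and level is required, Duplicator can find, in a universal segment not yet touched by pebbles, an element with the right context placed in the right $\prec$-position relative to the parent cluster. The $2k^2$ segments leave enough spare copies, since each of the $k$ rounds consumes at most $k$ segments.

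The case analysis proceeds by the type of move (root, child root, continuation) crossed with the positions of the relevant pebbles (rigid or flexible, and which segment). In every case we check that the winning condition holds: each cluster is a partial isomorphism, and each child root compares with its parent cluster correctly under $<$.

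The main obstacle is the mixed case: a flexible cluster whose child root has to be compared in order with rigid ancestors, or a rigid cluster whose child lands in the jungle. Here the $k$-neighbourhood of a vertex near a segment border can spill into the adjacent segment. The neighbouring segments $LN_i$, $RN_i$, including the $k^2$ purely neighbouring layers on each side of $J$, should be exactly what prevents a continuation walk started in a universal segment from reaching the jungle within $k$ steps. We would first prove this as a distance claim: a path of length at most $k$ starting in $LU_i$ stays within $LU_i\cup LN_i\cup\dots\cup LN_{i+k}$. We would then do the bookkeeping of levels, formalising the invariant by induction on the number of remaining rounds.
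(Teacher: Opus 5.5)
There is a genuine gap. It sits in the mixed case you defer, and it comes from how you draw the rigid/flexible line.

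\textbf{Why the fixed rigid zone fails.} You make all of $A\setminus J^A$ rigid, justified by the claim that $\phi$ ``preserves the $k$-neighbourhoods of the non-jungle parts''. A border-preserving bijection does not give this. By definition it is only an isomorphism between $(\mathcal A,<_A)_{|A\setminus J^A}$ and $(\mathcal B,<_B)_{|B\setminus J^B}$ that respects the parts. Take $a$ in a neighbouring segment next to the jungle, e.g.\ $a\in LN^A_{2k^2}$. The hypotheses relate nothing about the jungle neighbours of $a$ to those of $\phi(a)$.
\begin{itemize}
\item If Spoiler opens $x^0_\epsilon$ on $a$ and you answer $\phi(a)$, continuation moves into $J^A$ expose any difference between these neighbourhoods. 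So does an introduction move into an interval of $N^{k-1}_{\mathcal A}(a)$ lying in $J^A$.
\item Your level invariant cannot be installed at that point, because the root was matched by $\phi$ and not by context.
\item Your distance claim only covers walks starting in universal segments, whereas the problematic rigid pebbles are those in the neighbouring layers adjacent to $J$.
\item Conversely, the universal-segment fallback cannot handle every move outside the rigid case. An element in or near $X$ has contexts involving rare types. Such contexts are not in $C_F$, so no universal segment contains a copy. Those moves must be answered by $\phi$.
\end{itemize}

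\textbf{The missing idea.} The paper uses round-dependent safety parts
\[
S^{k-r}_A=X^A\cup\bigcup_{i=1}^{(k-r)k}(L^A_i\cup R^A_i),
\]
together with the invariant $\mathsf{S}_r$: $p(l)\in S^{k-l}_A$ iff $q(l)\in S^{k-l}_B$, and in that case $q(l)=\phi(p(l))$.
\begin{itemize}
\item At round $r+1$, a move is answered by $\phi$ exactly when it lands in $S^{k-r-1}$. By Contraction, this set is far enough from $J$ in segment distance that everything Spoiler can still reach stays in the domain of $\phi$.
\item Every other move is answered by context. This includes moves in the $k^2$ purely neighbouring layers around $J$ and in the inner universal segments.
\item Answering by context is possible because Tameness puts the $(k-r-1)$-context into $C^{k-r-1}_F$, and Universality supplies a copy in the universal segment just outside $S^{k-r-1}_B$. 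Universality is stratified by level, so not every universal segment is usable at every round.
\item Placing context-answers outside the current safety part keeps the two answering modes from ever being confused.
\item A segment-distance argument shows that a child in $S^{k-r-1}$ lying in an interval of its parent's neighbourhood forces the parent into $S^{k-l}$. The parent is therefore $\phi$-matched, so the interval data $h$ and the order comparisons are preserved.
\end{itemize}
Your criterion ``a universal segment not yet touched by pebbles'' provides neither of these guarantees, and the mixed case you flag as the main obstacle is exactly where this invariant does the work.
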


\begin{proof}

To show $(A,<_A)\equiv^k_{\cfo}(B,<_B)$ we require a winning strategy for Duplicator. By induction on the number of rounds $r$, we describe a strategy satisfying the following properties after $r$ rounds. Let $\mathcal{I}$ be the indexing set of the game.
	\begin{itemize}
	\item[$\mathsf{S}_r$:] \emph{If a pebble has been played near the minimum or maximum, then it has been answered similarly}. More precisely, $$\forall l\leq r, p(l)\in S^{k-l}_A \iff q(l)\in S^{k-l}_B,$$ and in this case $q(l)=\phi(p(l))$.

	\item[$\mathsf{C}_r$:] \emph{Corresponding pebbles played at round $j$ are placed on elements with the same $(k-j)$-context.} Say pebbles $p(1),\cdots,p(r)$ and $q(1),\cdots,q(r)$ are played on $A$ and $B$, respectively. Then $\mathds{C}^{k-j}_{(\mathcal A,<_A)}(p(j))= \mathds{C}^{k-j}_{(\mathcal B,<_B)}(q(j))$ for $1\leq j\leq r$. Recall that in $\cfo$ indexing matters and each pebble is played with specific indices. For example $p(i)$ is of form $p^{j}_w$, for an integer $j$ and a word $w\in\Sigma^{*}$, and so $q(i)$ should be indexed $q^j_w$. 
          
        \item[$\mathsf{I}_r$:]  For every $w\in \mathcal I$, let $l\leq r$ be such that $p(l)=p_w^0$. Let

          \begin{itemize}
          \item $\mathcal N^+(p(l))$ be the structure $(\mathcal A,<_A)|_{\boule[k-l][p(l)][(\struct,<_A)]}$ enriched with a constant symbol $c_w^i$ interpreted as $p_w^i$ for each $i\in\N$ that as been played so far (note that $\boule[k-l][p(l)]$ must contain all such $p_w^i$ so $\mathcal N^+(p(l))$ is well-defined), and
          \item $h(p(l))$ be the function associating
            \begin{itemize}
            \item with each interval $I\in\OInt[\boule[k-l][p(l)]][(\struct,<_A)]$ the set of $\alpha\in\Sigma$ such that $w\alpha\in\mathcal I$ and $p_{w\alpha}^0\in I$, and
            \item with each element $a$ of $\neigh[k-l][p(l)][(\struct,<_A)]$ the set of $\alpha\in\Sigma$ such that $w\alpha\in\mathcal I$ and $p_{w\alpha}^0=a$,
            \end{itemize}
          \item and define similarly $\mathcal N^+(q(l))$ and $h(q(l))$.
          \end{itemize}

          Then $\mathcal N^+(p(l))$ and $\mathcal N^+(q(l))$ are isomorphic, and $h(p(l))$ and $h(q(l))$ coincide.
        \end{itemize}
	  Note that in round $k$, condition $\mathsf{I}_k$ is enough for Duplicator to win and hence what we are asking is more than the winning conditions.

	\textbf{\textit{Base case:} }
	Assume that Spoiler selects $\mathcal A$ to play the first pebble $p^0_\epsilon$ on; this assumption is for notational simplicity and the other case is symmetric.

        If Spoiler places $p^0_\epsilon$ on an element of $S^k_A$, the Duplicator replies by placing $q^0_\epsilon$ on $\phi(p^0_\epsilon)$, meaning that it plays a tit-for-tat move. As $A\setminus J^A$ and $B\setminus J^B$ are isomorphic under $\phi$, and \textbf{contraction} of the orders, corresponding elements of $S^k_A$ and $S^k_B$ have the same $k$-context.
        
        Otherwise, if Spoiler places $p^0_\epsilon$ on an element of $A\setminus S^k_A$, i.e., on the union of the jungle and the $2k^2$ neighbouring segments adjacent to the jungle, equivalently $J^A\cup\bigcup_{k^2+1}^{2k^2}LN^A_i\cup RN^A_i$, then Duplicator replies by placing $q^0_\epsilon$ on an element of $LU^B_{(k-1)^2}$ whose $(k-1)$-context is the same as that of $p^0_\epsilon$. By \textbf{universality}, we know that such an element exists. 
	
	Clearly $\mathsf{S}_1$ holds, $\mathsf{C}_1$ holds by choice of $q^0_\epsilon$, and the equality of the $k$-contexts of $p^0_\epsilon$ and $q^0_\epsilon$ entails $\mathsf{I}_1$.
	
	\textbf{\textit{Induction hypothesis:}}  The hypothesis is that $\mathsf{S}_r,\mathsf{C}_r$ and $\mathsf{I}_r$ hold.

	\textbf{\textit{Induction step:}}  Assume pebbles $p(1),\cdots,p(r)$ and $q(1),\cdots,q(r)$ have been played on $(\mathcal A,<_A)$ and $(\mathcal B,<_B)$, respectively. Again, we assume that Spoiler picks $\mathcal A$ as the structure it will play on, and the other case is symmetric. The proof is by a case study that has two major cases: one for introduction moves and one for continuation moves.

        \paragraph*{Introduction move.} Assume $p(r+1)=p_{w\alpha}^0$, and let $l\leq r$ be such that $p_w^0=p(l)$. We distinguish between two cases:
        \begin{itemize}
        \item Let $p^0_{w\alpha} \in S^{k-r-1}_A$. Duplicator responds with a tit-for-tat move by placing $q^0_{w\alpha}$ at $\phi(p^0_{w\alpha})$. Condition $\mathsf{S}_{r+1}$ is satisfied since a move in the safety part is answered by a move in the safety part. Condition $\mathsf{C}_{r+1}$ also holds, as the $k$-contexts of elements in the safety parts are entirely contained in $A \setminus J^A$ and $B \setminus J^B$, respectively. Consequently, the image of an element in the safety part of $A$, such as $p^0_{w\alpha}$, has the same $k$-context as $p^0_{w\alpha}$ itself. In particular, they share the same $(k-r-1)$-context.
       If $p^0_{w\alpha}$ is either in an interval $I\in\Int[\boule[k-l][p_w^0]][(\mathcal A,<_A)]$, or is an element of \boule[k-l][p_w^0], we can infer than $p^0_w\in S^{k-l}_A$, because the segment distance between $p^0_w$ and $p^0_{w\alpha}$ is then less than $2(k-l)$ which is less then $2k(r-l)$. Therefore, by $\mathsf{S}_l$, $q^0_w$ belongs to $S^{k-l}_B$ as well. The condition $\mathsf{I}_{r+1}$ is implicated as by playing according to $\phi$, on the contexts of $p^0_w$ and $q^0_w$, Duplicator ensures isomorphism on the substructures induced by the set of elements of $p^0_w$ and $q^0_w$. On the other hand, if $p^0_{w\alpha} $ does not belong to $I\in\Int[\boule[k-l][p_w^0]][(\mathcal A,<_A)]$, or \boule[k-l][p_w^0], with the same segment distance argument we infer that $p^0_w\notin S^{k-l}_A$ and therefore by $\mathsf{S}_{l}$ neither does $q^0_w$ belong to $S^{k-l}_B$. Accordingly, the $\mathsf{I}_{r+1}$ holds, as both $p^0_{w\alpha}$ and $q^0_{w\alpha}:=\phi(p^0_{w\alpha})$ are in the same relative position (right or left) with respect to the substructures induced by the set of elements of $p^0_w$ and $q^0_w$.
       
        \item Let $p^0_{w\alpha}\notin S^{k-r-1}_A$, we break this case into the following three subcases:
        \begin{itemize}
        	\item If $p^0_{w\alpha}$ is either in an interval $I\in\Int[\boule[k-l][p_w^0]][(\mathcal A,<_A)]$, or is an element of \boule[k-l][p_w^0], Duplicator places $q^0_{w\alpha}$ on the corresponding element of the corresponding interval of \Int[\boule[k-l][q_w^0][\mathcal B]][(\mathcal B,<_B)], or the corresponding element of \boule[k-l][q_w^0][\mathcal B] under $h$ given by $\mathsf{I}_{l}$. By this choice, $\mathsf{I}_{r+1}$ holds, by the definition of context, $p^0_{w\alpha}$ and $q^0_{w\alpha}$ have the same $k-l-1$-context and hence the same $k-r-1$-context. Therefore, $\mathsf{C}_{r+1}$ holds too. To show that $\mathsf{S}_{r+1}$ holds we need to argue that $q^0_{w\alpha}$ does not belong to $S^{k-r-1}_B$. Assume that $q^0_{w\alpha}\in S^{k-r-1}_B$. As $q^0_{w\alpha}$ is in an interval of \Int[\boule[k-l][q_w^0][\mathcal B]][(\mathcal B,<_B)], or is an element of \boule[k-l][q_w^0][\mathcal B], its segment distance from $q^0_w$ is less than $2(k-l)$, this forces $q^0_w$ to be in $S^{k-r}_B$ and then by $\mathsf{S}_{l}$, not only $p^0_w$ also belongs to $S^{k-r}_A$, but also they are each others image under $\phi$ and $\phi^{-1}$. Therefore, $p^0_{w\alpha}$ will have the same position as $q^0_{w\alpha}$ and so belongs to $S^{k-r-1}_A$ which is contradictory by the assumption we began with. Therefore, $\mathsf{S}_{r+1}$ holds too.
        	\item Assume now $p_{w\alpha}^0$ is in the leftmost interval of \OInt[\boule[k-l][p_w^0]][(\mathcal A,<_A)], that is $]-\infty,a[$ for $a=\min_{<_A}\boule[k-l][p_w^0]$, and that $p^0_{w\alpha}$ is in the left part of $S^{k-l}_A$ (i.e., $X^A\cup \bigcup_{i=1}^{(k-l)k} L^A_i$). This means that $p^0_{w\alpha}\in A\setminus J^A$. 
         Duplicator plays by placing $q^0_{w\alpha}$ on an element with the same $(k-r-1)$-context as $p^0_{w\alpha}$ in the leftmost universal segment that is adjacent to and comes after $S^{k-r-1}_B$, i.e., $LU^A_{k(k-r-2)+1}$. This is possible by the \textbf{universality} of the orders. By the way $q^0_{w\alpha}$ is chosen, $\mathsf{C}_{r+1}$ holds. As $q^0_{w\alpha}$ is outside of $S^{k-r-1}_B$(it is picked in the universal segment just after $S^{k-r}_B$), $\mathsf{S}_{r+1}$ holds too.
         
         As $p^0_w\in S^{k-l}_A$ by $\mathsf{S}_{l}$, we have that $q^0_{w}\in S^{k-l}_B$ 
         and so $q^0_{w\alpha}$ will be to the left of $q^0_{w}$. Assume that $q^0_{w\alpha}$ is not in the leftmost interval of \OInt[\boule[k-l][q_w^0][\mathcal B]][\mathcal B], which means it is in either an interval of \Int[\boule[k-l][q_w^0][\mathcal B]][\mathcal B] or is an element of \boule[k-l][q_w^0][\mathcal B]. As the segment distance between $q^0_{w\alpha}$ and $q^0_{w}$ will be at most $2(k-l)$ is this scenario, we can conclude that $q^0_{w}\in S^{k-r}_B$ and then by $\mathsf{S}_{r}$, we infer $p^0_w\in S^{k-r}_A$. This forces $p^0_{w\alpha}$ that is to the left of $p^0_w$ and in the left most interval of $\OInt[\boule[k-l][p_w^0]][(\mathcal A,<_A)]$ to be in $S^{k-r-1}_A$ which is contradictory with the assumption. Hence, $q^0_{w\alpha}$ is in the left most interval of  \OInt[\boule[k-l][q_w^0][\mathcal B]][\mathcal B] and $\mathsf{I}_{r+1}$ holds too.
         
        	\item Assume now $p_{w\alpha}^0$ is in the leftmost interval of \OInt[\boule[k-l][p_w^0]][(\mathcal A,<_A)], that is $]-\infty,a[$ for $a=\min_{<_A}\boule[k-l][p_w^0]$, and that $p^0_{w\alpha}$ is in the right part of $S^{k-l}_A$ (i.e., $X^A\cup \bigcup_{i=1}^{(k-l)k} R^A_i$) or does not belong to $S^{k-l}_A$. Duplicator plays by placing $q^0_{w\alpha}$ on an element with the same $(k-r-1)$-context as $p^0_{w\alpha}$ in the leftmost universal segment that is adjacent to and comes after $S^{k-r-1}_B$, i.e., $LU^A_{k(k-r-2)+1}$. This is possible by the \textbf{universality} of the orders. The conditions $\mathsf{C}_{r+1}$ and $\mathsf{S}_{r+1}$ hold with the exact same arguments as the previous case. 
        
        As $p^0_w\in S^{k-l}_A$ by $\mathsf{S}_l$, not only $q^0_w$  belongs to $S^{k-l}_B$ but also they are each others image under $\phi$ and $\phi^{-1}$. Therefore, $q^0{w}$ is also in the right part of $S^{k-l}_B$. Accordingly, $q^0_{w\alpha}$ will be in the leftmost interval of \OInt[\boule[k-l][q_w^0][\mathcal B]][(\mathcal B,<_B)]. Hence, $\mathsf{I}_{r+1}$ holds too.
        \end{itemize}
        The cases where $p^0_{w\alpha}$ is on the rightmost interval of \OInt[\boule[k-l][p_w^0]][(\mathcal A,<_A)] are symmetric to the last two subcases we just studied.

        \end{itemize}
        \paragraph*{Continuation move.}
        Assume $p(r+1)=p_{w}^i$ with $i>0$, and let $l\leq r$ be such that $p_w^0=p(l)$. By $\mathsf{I}_r$, the two structures $\mathcal N^+(p(l))$ and $\mathcal N^+(q(l))$ are isomorphic, and the adjacency constraint of continuation moves in cluster EF games enforce $p_w^i\in \mathcal N^+(p(l))$. Duplicator places $q_w^i$ on the image of $p_w^i$ under this isomorphism. This obviously maintains invariant $\mathsf{I}_{r+1}$. To show that $\mathsf{C}_{r+1}$ holds, note that $\boule[k-(r+1)][p(r+1)]$ and $\boule[k-(r+1)][q(r+1)]$ induce isomorphic subgraphs of $(\mathcal A,<_A)|_{\boule[k-l][p(l)]}$ and $(\mathcal B,<_B)|_{\boule[k-l][q(l)][\mathcal B]}$, which are themselves isomorphic in view of $\mathsf{C}_r$. It is an easy verification to check that functions $f$ and $g$ in $\mathds C^{k-(r+1)}_{(\mathcal A,<_A)}(p(r+1))$ and $\mathds C^{k-(r+1)}_{(\mathcal B,<_B)}(q(r+1))$ coincide (as a consequence of $f$ and $g$ coinciding between $\mathds C^{k-l}_{(\mathcal A,<_A)}(p(l))$ and $\mathds C^{k-l}_{(\mathcal B,<_B)}(q(l))$), thus equating these contexts. Finally, we show that $\mathsf{S}_{r+1}$ holds. Assume $p(r+1)\in S^{k-(r+1)}_A$ -- the other direction is symmetrical. The \textbf{Contraction} property ensures that $p(l)$, which is in the $k$-neighbourhood of $p(r+1)$, belongs to the $2k$-segment neighbourhood of $p(r+1)$. In particular, $p(l)\in S^{k-r}_A\subseteq S^{k-l}_A$. By $\mathsf{S}_r$, we get that $q(l)\in S^{k-l}_B$ and $q(l)=\phi(p(l))$. Note that $\phi$ is the only isomorphism between the linearly ordered structures $\mathcal N^+(p(l))$ and $\mathcal N^+(q(l))$, thus $q(r+1)=\phi(p(r+1))$, and we conclude by property of $\phi$ being a border-preserving bijection.
        
        \medskip
		
	After $k$ rounds, $\mathsf{I}_{k}$ guarantees Duplicator's winning conditions, and thus $(A,<_A)\equiv^k_{\cfo}(B,<_B)$.
\end{proof}

	We are ready the state the main result of the section, but before that let us state the  following theorem.
\begin{theorem}[Fagin, Stockmeyer, Vardi \cite{fagin1995monadic}]
\label{theorem: fagin, stockmeyer, vardi}
	Let $\mathcal{C}$ be a class of structures with bounded degree $d$, then there exists a function $f_{\mathsf{FSV}}:\N\times\N\rightarrow \N$ such that for any two structures $A_1,A_2\in\mathcal{C}$, if $A_1\equiv_{\FO}^{f_{\mathsf{FSV}}(k,t)} A_2$ then $\llbracket A_1\rrbracket _k=^t \llbracket A_2\rrbracket_k$.
\end{theorem}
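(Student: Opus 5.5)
The plan is to express, for each $k$-neighbourhood type $\tau$ and each threshold $n\leq t+1$, the statement ``$\tau$ has at least $n$ occurrences'' by a first-order sentence whose quantifier rank depends only on $k$, $t$ and $d$. We then take $f_{\mathsf{FSV}}(k,t)$ to be the maximum of these ranks. Everything rests on \Cref{observation: neighbourhood types are finite}: with degree bound $d$, only finitely many $k$-neighbourhood types occur in $\mathcal C$, and each has at most $s_{k,d}:=1+d+\dots+d^k$ elements.

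\emph{Step 1 (distance formulas).} Write formulas $\delta_{\leq j}(x,y)$ expressing $\dist(x,y)\leq j$. The naive version introduces intermediate path vertices and has rank $j$. This is enough for our purpose, although a halving trick would bring it down to $O(\log j)$.

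\emph{Step 2 (defining a type).} Fix $\tau$ with $s\leq s_{k,d}$ elements, and let $e_1=c,e_2,\dots,e_s$ be its elements. Let $\varphi_\tau(x)$ be the formula
\[
\exists y_1\cdots\exists y_s\Big(y_1=x\land\bigwedge_{i}\delta_{\leq k}(x,y_i)\land \mathrm{diag}_\tau(\bar y)\land\forall z\big(\delta_{\leq k}(x,z)\rightarrow\textstyle\bigvee_i z=y_i\big)\Big),
\]
where $\mathrm{diag}_\tau$ lists all equalities, inequalities, edges, non-edges and colours among the $y_i$ matching $\tau$. Since the $y_i$ are forced to enumerate exactly $N^k(x)$, and the induced substructure on them, pointed at $x$, is isomorphic to $\tau$, the element $x$ satisfies $\varphi_\tau$ iff it is an occurrence of $\tau$. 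Its quantifier rank is at most $s_{k,d}+1+k$, a bound depending only on $k$ and $d$.

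\emph{Step 3 (counting).} For $n\geq1$ let $\chi_{\tau,n}:=\exists x_1\cdots\exists x_n\big(\bigwedge_{i<j}x_i\neq x_j\land\bigwedge_i\varphi_\tau(x_i)\big)$. Its rank is at most $n+s_{k,d}+k+1$. Set $f_{\mathsf{FSV}}(k,t):=t+1+s_{k,d}+k+1$. If $A_1\equiv^{f_{\mathsf{FSV}}(k,t)}_{\FO}A_2$, then $A_1$ and $A_2$ agree on every $\chi_{\tau,n}$ with $n\leq t+1$. So for each $\tau$, either both counts $|A_1|_\tau$ and $|A_2|_\tau$ exceed $t$, or both are at most $t$ and then equal, since they agree on every ``at least $n$'' statement with $n\leq t+1$. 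This is exactly $\llbracket A_1\rrbracket_k=^t\llbracket A_2\rrbracket_k$.

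The only delicate point is Step 2. The formula must pin down the whole $k$-ball and not merely contain a copy of $\tau$; the universally quantified $z$ handles this. We must also check that the resulting rank is uniform over all types, which holds because the bounded degree bounds $s$. No other obstacle is expected; the theorem is essentially a Hanf-style counting argument with threshold $t$.
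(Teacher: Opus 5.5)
Your proof is correct. The paper gives no proof of this statement; it imports it as a black box from Fagin, Stockmeyer and Vardi. Your argument is the standard self-contained proof of this direction, which is the elementary half of the Hanf/FSV picture. The substantive theorem usually attributed to Fagin, Stockmeyer and Vardi is the converse, that threshold equivalence of $k$-neighbourhood type counts implies \FO-equivalence, and the paper never needs it.

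What your route adds over the citation is an explicit bound, $f_{\mathsf{FSV}}(k,t)=t+s_{k,d}+k+2$. It also makes clear that bounded degree is used in exactly one place: to bound $|N^k(x)|$, and hence the rank of $\varphi_\tau$, uniformly over all types. Without this bound, stars $K_{1,n}$ and $K_{1,n+1}$ for large $n$ would be counterexamples.

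One small point should be stated explicitly. A $k$-neighbourhood type $\tau$ with more than $s_{k,d}$ elements has no occurrences in either structure, so $|A_1|_\tau=|A_2|_\tau=0$ holds trivially for it. This is why restricting your formulas to types of size at most $s_{k,d}$ loses nothing.
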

Theorem \ref{theorem: fagin, stockmeyer, vardi} asserts that when two structures of bounded degree are sufficiently similar -- namely, when they are $\FO$-equivalent up to a sufficiently large quantifier rank $f_{\mathsf{FSV}}(k,t)$ -- then the counts of their $k$-neighbourhood types are equal up to threshold $t$.

	\begin{theorem}[Main theorem]
	\label{theorem: main result}
		 Let $\CCC$ be a class of structures with bounded degree $d$, for an integer $d$. Then, $<$-invariant $\cfo\subseteq \FO$ on $\CCC$.

\end{theorem}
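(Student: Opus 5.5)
Fix an order-invariant sentence $\varphi\in\CFO$ of quantifier rank $k$. The plan is to find a function $f$, depending only on $k$, $d$ and $\sigma$, such that $\mathcal A\equiv^{f(k)}_{\FO}\mathcal B$ implies $\mathcal A\equiv^k_{\oicfo}\mathcal B$ for all $\mathcal A,\mathcal B\in\CCC$. The theorem then follows by a standard argument. By \Cref{lemma: finitely many sentences} applied to plain \FO, there are finitely many $\FO[f(k)]$-classes, each defined by a single $\FO[f(k)]$ sentence. The model class of $\varphi$ is a union of such classes, so it is defined by the finite disjunction of the corresponding characteristic sentences. Since $\varphi$ is order-invariant, to show $\mathcal A\models\varphi\iff\mathcal B\models\varphi$ it suffices to exhibit \emph{some} orders $<_A,<_B$ with $(\mathcal A,<_A)\equiv^k_\CFO(\mathcal B,<_B)$. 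By \Cref{main technical theorem}, this reduces to producing $(F,k)$-orders on both graphs together with a border-preserving bijection $\phi:A\setminus J^A\to B\setminus J^B$.

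To choose $F$ and the threshold, I would apply \Cref{lemma: lemma 5.6} with $m$ and $r$ chosen large enough for the universal and neighbouring segments. Concretely, $m$ must allow $4k^2\cdot\nc^k_d\cdot\bc^k_d$ disjoint context copies, and $r$ must be of order $2k\cdot\bc^k_d$ plus the $k$-neighbourhood radius, so that copies of contexts and their neighbouring layers do not interact. This yields a bound $\Upsilon$ and, for each graph $\mathcal A$, some $t_A\le\Upsilon$ such that the types occurring at least $t_A$ times are frequent enough. I then set
\[ f(k):=f_{\mathsf{FSV}}\bigl(k',\,\Upsilon'\bigr), \]
where $k'\ge k$ is a radius large enough to also record the types of all elements in the extremity and the neighbouring layers, and $\Upsilon'$ exceeds $\Upsilon$ plus the number of elements used in the safe part. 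By \Cref{theorem: fagin, stockmeyer, vardi}, $\mathcal A\equiv^{f(k)}_\FO\mathcal B$ gives $\llbracket\mathcal A\rrbracket_{k'}=^{\Upsilon'}\llbracket\mathcal B\rrbracket_{k'}$. The key consequence is that the same $t:=t_A$ works for $\mathcal B$: the set $F:=\freq[A]^{\ge t}_k$ coincides with $\freq[B]^{\ge t}_k$, and the rare types have identical counts in both graphs. Hence the inequality of \Cref{lemma: lemma 5.6} holds for $\mathcal B$ with the same $t$.

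Next, I would invoke \Cref{good graphs admit good orders} to build an $(F,k)$-order $<_A$ on $\mathcal A$: the rare elements go in the extremity, the universal segments are filled with well-separated copies of every context realisable with $F$ (possible by \Cref{lemma: lemma 5.4} with $B$ the extremity, of size at most $s$), then come the neighbouring layers, and the jungle last. I would then apply \Cref{transfer lemma} to obtain $<_B$ and the border-preserving bijection $\phi$. The idea is that the non-jungle part of $\mathcal A$ is a bounded-size union of bounded-radius neighbourhoods. Equality of $k'$-type counts up to $\Upsilon'$ lets us embed an isomorphic copy of this part into $\mathcal B$. On the rare part the copy is forced, since the counts are equal; the universal copies are chosen far apart, again by \Cref{lemma: lemma 5.4}. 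We then order $\mathcal B$ by transporting $<_A$ along $\phi$ and putting everything else into $J^B$.

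The main obstacle is this transfer step. One must check that the image in $\mathcal B$ of the extremity together with its neighbouring layers is exactly the set of rare elements together with its neighbourhoods: no extra rare element of $\mathcal B$ may be left over for the jungle, and neighbourhoods must not be cut at the border with the jungle. This is why the radius $k'$ must exceed $k$ by roughly $2k^2$ layers, so that the $k'$-type of a rare element determines all of its layered neighbourhood and equal counts give a matching. I would first try matching elements type-by-type on $k'$-types, using the fact that the rare part is closed under the relevant neighbourhood operation. Once $\phi$ is in hand, \Cref{main technical theorem} gives $(\mathcal A,<_A)\equiv^k_\CFO(\mathcal B,<_B)$, which concludes the proof.
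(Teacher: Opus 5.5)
Your overall architecture is the paper's. You reduce to exhibiting suitable orders, build a $(k,F)$-order on $\mathcal A$ by \Cref{good graphs admit good orders}, transfer it to $\mathcal B$, and conclude with \Cref{main technical theorem}. Your handling of the threshold is more careful than the paper's, which treats $t_{k,d}$ as independent of $\mathcal A$. Using the level $\Upsilon'\geq\Upsilon$ is exactly what makes $F$ and the rare counts coincide for whatever $t_A\le\Upsilon$ \Cref{lemma: lemma 5.6} produces.

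The gap is the transfer step. You invoke \Cref{transfer lemma}, but your $f(k)=f_{\mathsf{FSV}}(k',\Upsilon')$ is never shown to give its hypothesis $\mathcal A\equiv^{\bar J_{k,d}+1}_{\FO}\mathcal B$. The justification you offer instead is matching elements by counts of $k'$-types with $k'\approx k+2k^2$, and that does not work.

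The rare part can have up to about $|\neightype^d_k|\cdot\Upsilon$ elements, forming clusters of diameter far beyond $2k^2$, so $k'$-types do not determine its isomorphism type. For example, let the rare vertices (say, carrying a special colour) form one cycle of length $2n$ in $\mathcal A$ and two cycles of length $n$ in $\mathcal B$, with $2k'+2\le n<t_A/2$. Then all $k'$-type counts agree, yet no border-preserving bijection exists: extremality forces $\phi(X^A)=X^B$ to be an isomorphism, and $C_{2n}\not\cong C_n\sqcup C_n$.

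The universal copies have a similar problem. Applying \Cref{lemma: lemma 5.4} in $\mathcal B$ to occurrences of the $k'$-types of the centres needs those sets to be large. But a frequent $k$-type may split into $k'$-types of small multiplicity.

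The missing idea is that the transfer comes from the EF game itself, not from type counts. Include $\bar J_{k,d}+1$ in $f$ (alongside $f_{\mathsf{FSV}}(k,\Upsilon)$ for the threshold part). Then Spoiler pebbles all of $A\setminus J^A$, and Duplicator's answers give the isomorphic copy $\phi$ however the rare elements cluster. The one extra round shows that no element of $B\setminus(J^B\cup LN^B_{2k^2}\cup RN^B_{2k^2})$ has a neighbour in $J^B$: otherwise Duplicator would have to answer in $J^A$, contradicting contraction of $<_A$. This gives contraction for $<_B$. Extremality then follows from the exact equality of rare counts together with $\phi$ preserving $k$-types on $X^A$. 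Universality and tameness follow from $\phi$ being an ordered isomorphism on a region that contains the relevant contexts, plus the segment-distance bounds.
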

\begin{proof}
To show the inclusion $<$-invariant $\cfo\subseteq \FO$ on $\CCC$, it is sufficient to show that there is a function $f:\N\rightarrow\N$ such that
 for any two structures $A,B\in\CCC$, if $A\equiv^{f(k)}_\FO B$, then $A\equiv^k_{<\textit{-invariant }\cfo}B$. In other words, we need to show that $\equiv^{f(k)}_\FO$ is a more fine-grained equivalence relation than $\equiv^k_{<\textit{-invariant }\cfo}$. This is because each of these equivalence relations partition the class $\CCC$ into equivalence classes. A formula $\varphi\in <-\textit{invariant }\cfo$, captures some of these equivalence classes, i.e., it is modelled by the structures in the union of some these classes. Now, it is folklore that there are only finitely many equivalence classes for $\equiv^{f(k)}_{\FO}$, each of which is definable by an $\FO$-sentence $\psi$. Therefore, if $c_1,\dots,c_l$ are the equivalence classes imposed by $\equiv^{f(k)}_\FO$ such that their union is equal to the union of the equivalence classes captured by $\varphi$, and $\psi_{c_1},\cdots,\psi_{c_l}$ are the $\FO$-sentences defining $c_1,\dots,c_l$, then $\psi_{c_1}\vee\cdots\vee\psi_{c_l}$ is modelled by the same structures modelling $\varphi$. Hence the inclusion.

 As an $<$-invariant $\cfo$ formula is invariant with respect to the interpretation of the order, showing $(A,<)\equiv^k_{\cfo}(B,<')$ for any orders $<$ and $<'$ would entail $A\equiv^k_{<\textit{-invariant }\cfo}B$. Thus, once again, the problem is reduced to showing that if $A\equiv^{f(k)}_\FO B$ then there are two orders $<_A$ and $<_B$ such that $(A,<_A)\equiv^k_{\cfo}(B,<_B)$.
 
Now, let $t_{k,d}$ and $\bar J_{k,d}$ be the integers given by \cref{good graphs admit good orders} for integers $k$ and $d$. We define 
 $f:\N\rightarrow\N$ as 
 $$f(k):= \max\{f_{\mathsf{FSV}}(k,t_{k,d}), \bar J_{k,d}\}\,.$$
 
 As $f(k)$ is greater than $f_{\mathsf{FSV}}(k,t_{k,d})$ we conclude that $\mathcal{A}$ and $\mathcal B$ are $(k,t_{k,d})$-threshold equivalent. Hence $\freq[\mathcal A]^{t_{k,d}}_k$ and $\freq[\mathcal{B}]^{t_{k,d}}_k$ are equal. 
 
 First, noice that by \cref{good graphs admit good orders}, the graph $\mathcal A$ admits an $(k,F)$-order $<_A$ where $F:=\freq[\mathcal A]^{t_{k,d}}_k$, this is by the choice of $t_{k,d}$. Now, as $f(k)\geq \bar J_{k,d}$ as well, all the conditions of \cref{transfer lemma} are met and so there exists an $(k,F)$-order $<_B$ of $\mathcal B$ and a border-preserving bijection between $A\setminus J^A$ and $B\setminus J^B$. The existence of  these orders and the order-preserving bijection between  $A\setminus J^A$ and $B\setminus J^B$ is enough to invoke \cref{main technical theorem} and conclude $$(\mathcal A,<_A) \equiv ^k_{\cfo}(\mathcal B,<_B)\,.$$ 
\end{proof}

\end{toappendix}

%% file: order.tex
\begin{definition}[$(k,F)$-order]
\label{definition:order}

Let $F$ be a set of $k$-neighbourhood types and $C^i_F$ be the set of $i$-contexts that are realisable with $k$-neighbourhood types of $F$. 

Let $\mathcal A$ be a graph and 
\begin{equation}\label{equation: the prec order on the parts}
\begin{aligned}
X^A \prec\;&
(LU^A_1 \prec LN^A_1)\prec (LU^A_2 \prec LN^A_2)\prec \cdots \prec (LU^A_{k^2} \prec LN^A_{k^2}) \\
&\prec LN^A_{k^2+1}\prec LN^A_{k^2+2}\prec \cdots \prec LN^A_{2k^2} \\
&\prec J^A \\
&\prec RN^A_{2k^2}\prec RN^A_{2k^2-1}\prec \cdots \prec RN^A_{k^2+1} \\
&\prec (RN^A_{k^2} \prec RU^A_{k^2})\prec (RN^A_{k^2-1} \prec RU^A_{k^2-1})\prec \cdots \prec (RN^A_1 \prec RU^A_1)
\end{aligned}
\end{equation}
be a partition of $A$ into $6k^2+2$ parts. We use the following terminology for the parts of this partition:
\begin{itemize}
	\item $X$ is called the \emph{extremity};
	\item $LN^A_i$, for $1\leq i\leq 2k^2$, are called \emph{left neighbouring segments};
	\item $LU^A_i$, for $1\leq i\leq k^2$, are called \emph{left universal segments};
	\item $RN^A_i$, for $1\leq i\leq 2k^2$, are called \emph{right neighbouring segments};
	\item $RU^A_i$, for $1\leq i\leq k^2$, are called \emph{right universal segments}; and
	\item $J^A$ is called the \emph{jungle}.
\end{itemize}
Let $\prec$ be the order on the parts of this partition, as depicted in \Cref{equation: the prec order on the parts}. We call this partition a \emph{$k$-partition} of $\mathcal A$.
Let $P = (\mathcal{B}_1, \mathcal{B}_2, \dots, \mathcal{B}_{6k^2+2})$ be the sequence of segments 
of a $k$-partition of $\mathcal{A}$ arranged according to the order $\prec$, \ie, $\mathcal{B}_1:= X, \mathcal{B}_2:= LU^A_1, \dots, \mathcal{B}_{6k^2+2}:= RU^A_1$.
For any element $x \in A$, let $\text{pos}_P(x)$ denote the index $j$ 
such that $x \in \mathcal{B}_j$. The \emph{segment distance} between two elements 
$x, y \in A$, denoted by $\bdist_P(x, y)$, is defined as:
\[
\bdist_P(x, y) = |\text{pos}_P(x) - \text{pos}_P(y)|
\]

For example, if $x$ and $y$ belong to different segments, $\bdist_P(x, y) = n + 1$, 
where $n$ is the number of segments strictly between the segment containing $x$ 
and the segment containing $y$. If $x$ and $y$ belong to the same segment, 
$\bdist_P(x, y) = 0$. For an integer $b\geq 0$, the $b$-segment neighbourhood of an element $x$ is the set of elements at segment distance at most $b$ from $x$ in $P$.

For $1\leq i\leq k^2$, by $L^A_i$ we mean $LN^A_i\cup LU^A_i$ and by $R^A_i$ we mean $RU^A_i\cup RN^A_i$. For a $k$-partition of $\mathcal A$ and $0\leq r\leq k$, we define \emph{safety parts} as
  $$S^{k-r}_A:= X^A \cup\bigcup_{i=1}^{(k-r)k} (L^A_i\cup R^A_i).$$

We say that an order $<_A$ is a
$(k,F)$-order of $\mathcal A$ if it is a $k$-partition of $\mathcal A$, meaning that for elements in different segments we use $\prec$ to compare them, or equivalently if $\text{pos}_P(x)< \text{pos}_P(y)$ then $x<_A y$.  Additionally, it satisfies the following properties:
\begin{description}
	\item[Universality: ] for every $1\leq r\leq k$, in each universal segment included in $S^{k-r+1}_A\setminus S^{k-r}_A$ there is an occurrence of every $(k-r)$-context $c\in C^{k-r}_F$.
	\item[Extremality: ] all elements whose $k$-neighbourhood type is not in $F$ are in $X_A$
        \item[Contraction:] the $k$-neighbourhood of an element $x$ is included in its $2k$-segment neighbourhood.
        \item[Tameness:] for every $1\leq r\leq k$, and every element $x\notin S^{k-r}_A$, we have $\mathds{C}^{k-r}_{(\mathcal A,<_A)}(x)\in C^{k-r}_F$.          
\end{description}

\end{definition}

Depending on $F$ and $k$, not all graphs admit a $(k,F)$-ordering. In the following lemma, we characterise the values of $k$ and the sets $F$ for which such an ordering exists. 
\begin{lemma}
\label{good graphs admit good orders}
Let $k,d$ be integers. 
There exists an integer $t_{k,d}$ such that every graph $\mathcal A$ with degree at most $d$ admits a $(k,F)$-order for $F:=\freq[\mathcal A]^{\geq t_{k,d}}_k$. Furthermove, for this order, the size of $|A\setminus J^A|$ can be bounded independently of $\mathcal A$ by a constant $\bar J_{k,d}$.
\end{lemma}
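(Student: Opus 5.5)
We construct the order explicitly, segment by segment, once the threshold is large enough for the needed well-separated occurrences to exist. We first fix the combinatorial parameters. The total number of vertices that must be placed outside the jungle is
\[
M:=2k^2\cdot\nc^k_d\cdot\bc^k_d ,
\]
which bounds the number of vertices used to build one copy of every context of $\bigcup_{i\le k}C^i_F$ in each of the $2k^2$ universal segments. Around these we need neighbouring layers of depth $2k^2$, and each layer multiplies size by at most $d+1$. We therefore set $r:=$ a suitable multiple of $k^3$ (large enough that distinct universal segments, together with their $2k^2$ layers of $k$-neighbourhoods, never touch) and $m:=M$. We then apply \Cref{lemma: lemma 5.6} with these $k,m,r$. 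It yields $\Upsilon$ and, for each $\mathcal A$, some $t\le\Upsilon$. The threshold we want is really $t_{k,d}:=\Upsilon$, but $F$ should be $\freq[\mathcal A]^{\geq t}_k$ for the $t$ given by the lemma. I would handle this mismatch by iterating: either restate with $t_{k,d}$ chosen as the lemma's value, or note that the types with between $t$ and $\Upsilon$ occurrences are boundedly many and can be placed in $X$. Either way, the set $B$ of rare-type vertices satisfies $|B|\le |\neightype^d_k|\cdot\Upsilon$, and $t\ge g(m,r,|B|)$.

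Next we realise the contexts. By \Cref{lemma: lemma 5.4}, applied to the classes $C_\tau$ of occurrences of each $\tau\in F$, we obtain $m$ occurrences of every type of $F$, pairwise at distance $>r$ and at distance $>r$ from $B$. Each context $c\in C^i_F$ is realised in some ordered structure whose elements have types in $F$. Since a context is determined by its neighbourhood types and the arrangement of order, I want to copy a minimal realisation of $c$ into $\mathcal A$. For this I pick fresh far-apart occurrences of the required types for the vertices of $\econ^i(v)$ whose neighbourhoods are needed; the disjointness of their $k$-balls makes the copies non-interfering. I then place these copies consecutively inside the appropriate universal segment, giving \textbf{universality}. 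Having done this, we put $X:=B\cup N(B)$-closure as required, giving \textbf{extremality}. We then define $LN_i$ (resp. $RN_i$) as the set of not-yet-placed neighbours of $LU_i\cup LN_{i-1}$, following \Cref{fig:ktl-order}, with $LN_1:=N(LU_1\cup X)$. All remaining vertices go to $J$. Inside the non-universal segments we order arbitrarily.

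Finally we verify the properties. \textbf{Contraction} follows because any neighbour of a vertex outside $J$ lies in the next neighbouring layer, so a path of length $\le k$ crosses at most about $2k$ segments. The jungle is shielded by $k^2$ layers, so no edge joins $J$ to a universal segment. \textbf{Tameness} is the delicate point. For $x\notin S^{k-r}_A$, its $(k-r)$-context must involve only vertices of types in $F$, i.e.\ it must not see $X$. The elements of $\econ^{k-r}(x)$ lie in intervals of its ball, which could be long. However, contraction confines the ball to a $2k$-segment window, and if that window avoids $X$ (guaranteed by the $(k-r)k$ buffer segments of $S^{k-r}$), every interval lies strictly inside the window's span; I expect an induction on $r$ to close this. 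The bound $\bar J_{k,d}$ is then $|X|+\sum|LU_i|+\sum|LN_i|\le \Upsilon|\neightype^d_k|(d+1)^{2k^2+1}+M(d+1)^{2k^2}$. I expect the main obstacles to be the tameness verification and making the threshold $t$ uniform.
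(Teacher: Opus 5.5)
Your construction and checks follow the paper's proof closely. You use the same $m=2k^2\cdot\nc^k_d\cdot\bc^k_d$ and a separation radius; the paper takes $4k^2+2(k-1)$, and any larger value such as yours also works. As in the paper, $X$ consists of the rare-type vertices, the universal segments hold copies built on far-apart occurrences from \Cref{lemma: lemma 5.4}, the neighbouring segments are layered, the rest is the jungle, and tameness follows from segment distance.

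\textbf{The step that fails: your second option.} \Cref{lemma: lemma 5.6} only gives $t\geq g(m,r,|B_t|)$, where $B_t$ is the set of vertices whose type has fewer than $t$ occurrences. Suppose you push the types with between $t$ and $\Upsilon$ occurrences into $X$. The new $B$ can then have almost $|\neightype^d_k|\cdot\Upsilon$ elements. Meanwhile the types of $\freq[\mathcal A]^{\geq\Upsilon}_k$ are only guaranteed $\Upsilon$ occurrences, which need not reach $g(m,r,|B|)$. Your sentence ``either way, $t\geq g(m,r,|B|)$'' silently replaces $B_t$ by $B_\Upsilon$.

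\textbf{This cannot be repaired, because no uniform threshold exists.} Fix $k\geq1$, $d\geq2$ and $t\geq2$, and let $\mathcal A$ consist of $\lceil t/2\rceil$ disjoint paths on three vertices.
The centres occur fewer than $t$ times, so extremality puts them in $X^A$.
The leaves occur at least $t$ times, so their $0$-context lies in $C^0_F$, and universality forces a leaf into $RU^A_1$.
That leaf's neighbour lies in $X^A$ at segment distance $6k^2+1>2k$, which contradicts contraction.
For $t\leq1$, a single vertex already fails, since it cannot occupy both $LU^A_1$ and $RU^A_1$.

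\textbf{So you must take your first option.} Let $t\leq\Upsilon$ be the $\mathcal A$-dependent value from \Cref{lemma: lemma 5.6}, and set $F:=\freq[\mathcal A]^{\geq t}_k$ and $B:=B_t$. Then $|B|<|\neightype^d_k|\cdot\Upsilon$, which keeps $\bar J_{k,d}$ uniform. This is also the only reading under which the paper's own proof is valid: it takes ``the threshold given by Lemma 5.6'' and then uses its separation property. It is all the main theorem needs. Apply the Fagin--Stockmeyer--Vardi theorem with threshold $\Upsilon$. Then for every $t\leq\Upsilon$, the sets $\freq^{\geq t}_k$ and the counts of the types outside $F$ agree in $\mathcal A$ and $\mathcal B$.

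\textbf{Two smaller points.}
First, take $X:=B$ exactly. The layers $LN^A_i$ already absorb the neighbours of $B$, whereas closing $B$ under adjacency could swallow whole components and destroy the bound.
Second, tameness needs no induction beyond a direct count. Each level of the recursion defining $\econ^{k-r}_{(\mathcal A,<_A)}(x)$ stays within $2k$ segments of its centre: the ball by contraction, and the interval elements because they lie between the ball's extreme elements. So the whole set lies within $2k(k-r)$ segments of $x$. Every $x\notin S^{k-r}_A$ is at segment distance at least $2k(k-r)+1$ from $X^A$. Hence all elements of that context have types in $F$, and $(\mathcal A,<_A)$ itself witnesses $\con^{k-r}_{(\mathcal A,<_A)}(x)\in C^{k-r}_F$.
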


\begin{proof}
 Let $m_k:=2\cdot k^2 \cdot \bc^{k}_{d}\cdot\nc^{k}_{d}$ and
$r_k:=4k^2+2(k-1)$, and let $t_{k,d}$ be the threshold given by
\cref{lemma: lemma 5.6} for $d,k$, $m:=m_k$ and $r:=r_k$.
By the definition of $\nc^k_d$ and $\bc^k_d$, if a $k$-neighbourhood
type $\tau$ belongs to $F$, then $\mathcal A$ contains enough
occurrences of $\tau$
allowing us to realise all the possible $k$-contexts that appear in the
class of graphs with bounded degree $d$, each of them at least $2k^2$
times.
Moreover, these occurrences are pairwise at distance greater than
$r_k=4k^2+2(k-1)$.
As a consequence, the vertices with $k$-neighbourhood type $\tau$,
together with their whole $(k-1)$-neighbourhoods, are at distance
greater than $4k^2$.

More precisely, let $v_1$ and $v_2$ be two vertices with
$k$-neighbourhood type $\tau\in F$, chosen among these $m_k$
occurrences that are at distance $r_k$.
If $v_1^{+}$ and $v_2^{+}$ denote $v_1$ and $v_2$ together with their
respective $(k-1)$-neighbourhoods, then $v_1^{+}$ and $v_2^{+}$ are at
distance at least $4k^2$ in $\mathcal A$.

We construct a $(k,F)$-order of $\mathcal A$ by filling each part of a $k$-partition as following.
\begin{description}
	\item[$X^A$:] In $X^A$,  we put the set of all the vertices with $k$-neighbourhood types that do not belong to $F$. 
	
	We pick an arbitrary order on $X^A$.
	
      \item[$LU^A_i$ and $RU^A_{i}$:] For $0 \leq j\leq k-1$ and $jk+1\leq i\leq jk+k$,  in $LU^A_i$ and $RU^A_{i}$ we put a copy of all the contexts of $C^j_F$, i.e., the first $k$ universal segments contain all the $0$-contexts realisable with $F$, the second $k$ universal segments contain all the $1$-contexts realisable with $F$, and so on. 
      
      More precisely, let $l^c_{j,i}$ (resp. $r^c_{j,i}$), be a copy of the realised context $c\in C^j_F$ in $\mathcal A$, where all the elements of the set of elements of $l^c_{j,i}$ are amongst those particular $m_k$ copies of their $k$-neighbourhood types in $F$. We put $l^c_{j,i}$ (resp. $r^c_{j,i}$) in $LU^A_{i}$ (resp. $RU^A_{i}$).
      
      As we have chosen the elements of the set of elements of these contexts from those particular $m_k$ copies of them, we know that the distance between the elements of the set of elements of these contexts is greater than $4k^2$. In addition to that, by the definition of $t_{k,d}$, these occurrences of those neighbourhood types are also $4k^2+2(k-1)$ far apart from the \emph{rare} neighbourhood types, i.e., the elements with the neighbourhood types that are not in $F$. Hence, the set of elements of these contexts are also $4k^2+2(k-1)$ far from the elements of $X$.

       		Let $<_{C^j_F}$ be an arbitrary order on the $j$-contexts in $C^j_F$. That is if there are $c_1,\cdots,c_l\in C^j_F$, then we have $c_1<_{C^j_F}\cdots<_{C^j_F}c_l$.
       		       		
       		Accordingly, we define $<_A$ in $LU^A_i$ (resp. $RU^A_i$) to be $l^{c_1}_{j,i}<_Al^{c_2}_{j,i}<_A\cdots <_Al^{c_l}_{j,i}$ (resp. $r^{c_1}_{j,i}<_Ar^{c_2}_{j,i}<_A\cdots <_Ar^{c_l}_{j,i}$) which means that all the elements of the set of elements of $l^{c_1}_{j,i}$ are before all the elements of the set of elements of $l^{c_2}_{j,i}$ and so on up to $l^{c_l}_{j,i}$, and the same holds in $RU^A_{i}$.
	      \item[$LN^A_i$:] The construction of $LN^A_i$ for $1\leq i\leq 2{k^2}$,  is iterative and is done after all the universal segments are filled. In each $LN^A_i$ for $1\leq i\leq k^2$, we put the neighbours of its previous universal and  neighbouring segment, i.e., $LN^A_1:= N(X\cup LU^A_1)$ and $LN^A_i:=N(LN^A_{i-1}\cup LU^A_i)$ for $2\leq i\leq k^2$. In $LN^A_i$ for $k^2+1\leq i\leq 2k^2$, we put the neighbours of its previous neighbouring segment, i.e., $LN^A_i:=N(LN^A_{i-1})$.
                Note that as we required that the centre of all the neighbourhoods used in the construction of the contexts we put in the universal segments to be $r_k=4k^2+2(k-1)$ far-apart, implying the distance between any pair of elements from different neighbourhoods (and thus, in the set of elements of different contexts) is greater than $4k^2$. Therefore, we can safely fill the neighbouring segments following a universal segment, whose number is at most $2k^2$.

	 On all the $LN^A_i$, for $1\leq i\leq 2k^2$, we pick an arbitrary order $<_{LN^A_i}$.
	\item[$RN^A_i$:] We fill these segments exactly as we did for $LN^A_i$.
	On $RN^A_i$, for $1\leq i\leq 2k^2$, we pick an arbitrary order $<_{RN^A_i}$.

	\item[$J^A$:] In $J^A$, we put the rest of the elements of the structure, after we have filled all the other $6k^2+2$ parts. On $J^A$ we pick an arbitrary order $<_{J^A}$. 
\end{description}

By the construction, the \textbf{Universality} and \textbf{Extremality} conditions are satisfied.

Let $c$ be an $i$-context realised in $(\mathcal A,<_A)$ by a vertex $v$. The $i$-neighbourhood of $v$ is included in the $2i$-segment neighbourhood of $v$. Elements in an interval of the $i$-neighbourhood of $v$ are also in the $i$-context of $v$, hence, in $2i+2(i-1)$-segment neighbourhood of $v$, we may have an element of the $i-1$-neighbourhood of an element in an interval of $v$. Therefore, in the worst case scenario, an element of the set of elements of the $i$-context of $v$ can be as far as $2(\frac{i(i+1)}{2})$-segment distance from $v$. We conclude that all the elements of an $i$-context are in $i(i+1)$-segment neighbourhood from the realising vertex.

To show that \textbf{Tameness} condition is satisfied, note that $(k-r)$-context of an elements $x$ is included in $(k-r)^2+(k-r)$-segment neighbourhood of $x$, and for any element not in $S^{k-r}_A$, its segment distance from $X$ is at least $k(k-r)$, i.e., there are at least $k(k-r)$ segments between the segment of an element not in $S^{k-r}_A$ and $X$, which is always greater than or equal to $(k-r)^2+(k-r)$, for $r\leq k$. Therefore, the $(k-r)$-context of $x$ does not intersect $X$ and hence it only contains elements from $F$. To put it differently, the $(k-r)$-context of $x$ is realised using element of $k$-neighbourhood types in $F$, i.e., $\mathds{C}^{k-r}_{(\mathcal A,<_A)}(x)\in C^{k-r}_F$. The \textbf{Contraction} condition is satisfied by the construction and the way we fill neighbouring segments.

Looking at the material added in the various segments, a bound in terms of $k$ and $d$ on the size of all segments can be given, except for $J_A$. Crucially, this relies on the fact that the degree is bounded. This remark ensures the existence of a bound $\bar J_{k,d}$ for $|A\setminus J^A|$.
\end{proof}

Let $\mathcal A$ and $\mathcal B$ be two ordered graphs, and for $\varsigma\in \{A,B\}$, let
\begin{equation}
\begin{aligned}
X^\varsigma \prec\;&
(LU^\varsigma_1 \prec LN^\varsigma_1)\prec (LU^\varsigma_2 \prec LN^\varsigma_2)\prec \cdots \prec (LU^\varsigma_{k^2} \prec LN^\varsigma_{k^2}) \\
&\prec LN^\varsigma_{k^2+1}\prec LN^\varsigma_{k^2+2}\prec \cdots \prec LN^\varsigma_{2k^2} \\
&\prec J^\varsigma \\
&\prec RN^\varsigma_{2k^2}\prec RN^\varsigma_{2k^2-1}\prec \cdots \prec RN^\varsigma_{k^2+1} \\
&\prec (RN^\varsigma_{k^2} \prec RU^\varsigma_{k^2})\prec (RN^\varsigma_{k^2-1} \prec RU^\varsigma_{k^2-1})\prec \cdots \prec (RN^\varsigma_1 \prec RU^\varsigma_1).
\end{aligned}
\end{equation}
be a $k$-partition of $\varsigma$. Consider two orders $<_A$, $<_B$ on $\mathcal A$ and $\mathcal B$ that are compatible with their respective $k$-partition, meaning that if the part of element $a$ is $\prec$-smaller than the part of element $b$, then $a<_\varsigma b$.

A \emph{border-preserving bijection} $\phi$ between $(\mathcal A, <_A)$ and $(\mathcal B,<_B)$ is an isomorphism between $(\mathcal A,<_A)_{|A\setminus J^A}$ and $(\mathcal B,<_B)_{|B\setminus J^B}$ such that $x$ and $\phi(x)$ are in corresponding parts, which respects the partition ($x\in X^A$ iff $x\in X^B$, etc.).

\begin{lemma}
\label{transfer lemma}
Let $k$ be an integer and $\mathcal A$ and $\mathcal B$ be two graphs of bounded degree $d$ and $<_A$ be a $(k,F)$-order on $\mathcal A$, where $F:=\freq[\mathcal A]^{\geq t_{k,d}}_k$ ($t_{k,d}$ is given by Lemma~\ref{good graphs admit good orders}) such that
\begin{itemize}
\item $\mathcal A\equiv^{\bar J_{k,d}+1}_{\FO}\mathcal B$ (where $\bar J_{k,d}$ comes from Lemma~\ref{good graphs admit good orders})
\item $\mathcal A$ and $\mathcal B$ are $(k,t_{k,d})$-threshold equivalent.
\end{itemize} 
Then there exist a $(k,F)$-order $<_B$ of $\mathcal B$ and a border-preserving bijection between $(\mathcal A, <_A)$ and $(\mathcal B,<_B)$.
\end{lemma}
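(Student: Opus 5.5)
The plan is to transfer the finite, bounded-size part $A\setminus J^A$ of the $(k,F)$-order of $\mathcal A$ into $\mathcal B$ using the first hypothesis. We then fill the rest of $\mathcal B$ as a jungle and check the four properties of \Cref{definition:order}.

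\emph{Step 1: a finite copy of the non-jungle part.} By \cref{good graphs admit good orders}, $D:=A\setminus J^A$ has at most $\bar J_{k,d}$ elements. We would like an injection $\phi:D\to B$ that is an isomorphism onto its image, sends the $k$-neighbourhood type of each $a\in D$ to the same type, and makes $D$ and $\phi(D)$ ``look the same from outside'' up to distance $k$.

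The first attempt is an Ehrenfeucht--Fra\"iss\'e argument. Since $\mathcal A\equiv^{\bar J_{k,d}+1}_\FO\mathcal B$, Duplicator survives $|D|$ rounds with one round to spare, so the map $a\mapsto\phi(a)$ given by her answers is a partial isomorphism. A bare partial isomorphism does not preserve neighbourhood types, however. So the better route is to rely on Gaifman locality: enlarge $D$ to $D^+:=N^{k}(D)$, whose size is still bounded by a function of $k,d$. Then $\mathcal A\models\exists \bar x\,\theta(\bar x)$, where $\theta$ describes the isomorphism type of $\mathcal A_{|D^+}$ together with the fact that the elements of $D^+$ at distance $<k$ from $D$ have no further neighbours. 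This is expressible with quantifier rank about $|D^+|+1$. Such a sentence transfers to $\mathcal B$ if the rank bound is read generously (if necessary, one would redefine $\bar J_{k,d}$ as the size of the closure $N^k(A\setminus J^A)$, which is still bounded). A witness in $\mathcal B$ gives $\phi$, with $\mathcal B_{|N^k(\phi(D))}\cong\mathcal A_{|N^k(D)}$ extending $\phi$. In particular, $k$-neighbourhood types are preserved on $D$.

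\emph{Step 2: defining $<_B$.} Set $X^B:=\phi(X^A)$, $LU^B_i:=\phi(LU^A_i)$, and so on for every non-jungle part, ordered so that $\phi$ is order-preserving. Put $J^B:=B\setminus\phi(D)$, ordered arbitrarily. Then $\phi$ is border-preserving by construction.

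\emph{Step 3: checking the properties.} Several properties are inherited through $\phi$.
\begin{itemize}
\item[] \emph{Contraction.} For elements outside the jungle, contraction follows because $k$-neighbourhoods of elements in $D$ close to the jungle map to $k$-neighbourhoods. For jungle elements $y\in J^B$, we need their $k$-neighbourhood to avoid $X^B\cup$ the universal and early neighbouring segments. In $\mathcal A$ this holds because the $LN_i$ are defined as successive neighbourhood closures, so $N^{k^2}(X^A\cup\bigcup LU^A_i)\subseteq A\setminus J^A$, and $\phi$ preserves this closure property by Step 1.
\item[] \emph{Universality.} Contexts of elements of the universal segments are contained in the segment neighbourhoods established in the proof of \cref{good graphs admit good orders}. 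They therefore live inside $D$, and $\phi$ copies them.
\item[] \emph{Extremality.} Every element of $\mathcal B$ whose type is not in $F$ must lie in $X^B$. Threshold equivalence gives $|\mathcal B|_\tau=|\mathcal A|_\tau$ for $\tau\notin F$ (these counts are below $t_{k,d}$). Since $\phi$ maps the $|\mathcal A|_\tau$ occurrences in $X^A$ to occurrences of $\tau$ in $\mathcal B$, these exhaust all of them.
\item[] \emph{Tameness.} This then follows as in \cref{good graphs admit good orders}: the $(k-r)$-context of an element outside $S^{k-r}_B$ avoids $X^B$, by contraction and segment counting. Hence it uses only types in $F$.
\end{itemize}

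The main obstacle is Step 1 together with contraction in $\mathcal B$. We need the copy $\phi(D)$ to be ``closed'' exactly as $D$ is, so that the neighbour-closure definition of the $LN$ segments remains valid in $\mathcal B$ and no jungle vertex of $\mathcal B$ becomes adjacent to a universal or early neighbouring segment. I would handle this by including in $\theta$ the exact degrees of the elements of $D$ (and of $N^{k}(D)$), which is expressible at bounded rank given degree at most $d$. This is the step where the precise quantifier-rank requirement $\bar J_{k,d}+1$ may need to be adjusted.
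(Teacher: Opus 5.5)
Your proof is correct and has the same skeleton as the paper's. You copy $A\setminus J^A$ into $\mathcal B$ using $\FO$-equivalence, pull the $k$-partition back along $\phi$, let the rest of $\mathcal B$ be the jungle, and check the four properties, with extremality coming from threshold equivalence exactly as you do.

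The difference is in how the copy is made closed. The paper plays the EF game on $D=A\setminus J^A$ itself and uses the spare round that you noticed but set aside. Once Spoiler has pebbled all of $D$, suppose some vertex of $\phi(D)$ outside $LN^B_{2k^2}\cup RN^B_{2k^2}$ had an unpebbled neighbour in $\mathcal B$. Spoiler would play that neighbour, and Duplicator would need an unpebbled, hence jungle, neighbour of the corresponding vertex in $\mathcal A$. The layered construction of the neighbouring segments forbids this. Because $D$ already consists of $2k^2$ layers of neighbour closure around $X$ and the universal segments, this closure is all that is needed: it gives preservation of $k$-neighbourhood types on $X^A$ and the copying of contexts. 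So rank $|D|+1\leq\bar J_{k,d}+1$ suffices, and no constant has to be changed.

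Your sentence $\exists\bar x\,\theta$ is the formula-level form of the same trick, with its ``no further neighbours'' conjuncts playing the role of the spare round. Applied to $N^k(D)$ instead of $D$, it costs rank $|N^k(D)|+1$. This is harmless, since \cref{good graphs admit good orders} only asserts the existence of some bound $\bar J_{k,d}$, which may be enlarged.

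The enlargement does buy something. For every $a\in D$, $N^k_{\mathcal B}(\phi(a))$ is the image of $N^k_{\mathcal A}(a)$ under your witness map, using only the definition of a $(k,F)$-order. In particular, your contraction argument for jungle vertices of $\mathcal B$ can rest on contraction of $<_A$ alone rather than on the layered shape of $D$.
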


\begin{proof}
  
  The hypothesis ensures Duplicator can win the $(|A\setminus J^A|+1)$-round EF game between $\mathcal A$ and $\mathcal B$. In particular, if Spoiler plays only on $\mathcal A$ and covers $A\setminus J^A$ with pebbles, Duplicator's answer yields an isomorphism $\phi$ from $\mathcal A_{|A\setminus J^A}$ onto $\mathcal B_{S}$, where $S$ is the set of elements on which a pebble is played. We will need to refer to this configuration of the game later in the proof -- call it \texttt{Conf}. Note that Duplicator can win for one more round starting from \texttt{Conf}.

  We now transfer $<_A$ to $\mathcal B$ alongside $\phi$: the $k$-partition of $\mathcal B$ is given as $P^B:=\phi(P^A)$ for every part $P\neq J$, and $J^B:=B\setminus S$. As $<_B$, we choose any order that is compatible with this $k$-partition and for which $(\mathcal B,<_B)_{|B\setminus J^B}$ is isomorphic to $(\mathcal A,<_A)_{|A\setminus J^A}$. In other words, the images of $\phi$ are ordered in the same way as their preimages, and $<_B$ is completed arbitrarily on $J^B$.

  By construction, $\phi$ is a border-preserving bijection between $(\mathcal A,<_A)$ and $(\mathcal B,<_B)$.
  
  It only remains to prove that $<_B$ is a $(k,F)$-order on $\mathcal B$. Let us start by noting that the hypothesis ensures $F=\freq[\mathcal A]^{\geq t_{k,d}}_k = \freq[\mathcal B]^{\geq t_{k,d}}_k$.
  \begin{description}
  \item[Contraction:] since $\phi$ is a border-preserving bijection between $(\mathcal A,<_A)$ and $(\mathcal B,<_B)$, one only needs to show every neighbour of an element of $J^B$ is in $J^B\cup LN_{2k^2}^B \cup RN_{2k^2}^B$. Assume towards a contradiction that there exists an edge in $\mathcal B$ between some elements $a\in J^B$ and $b\in B\setminus(J^B\cup LN_{2k^2}^B \cup RN_{2k^2}^B)$. Remark that in \texttt{Conf}, there is a pebble placed on $b$. If Spoiler plays on $a$, Duplicator (who should win for one more round starting from \texttt{Conf}) would need to answer on an element of $J^A$ that is adjacent to the pebble corresponding to $b$ in $\mathcal A$. This is absurd, as this contradicts the contraction property of $<_A$.
  \item[Universality: ] let $1\leq r\leq k$. By universality of $<_A$, every universal segment included in $S^{k-r+1}_A\setminus S^{k-r}_A$ contains an occurrence $a_c$ of every $(k-r)$-context $c\in C^{k-r}_F$. To find an occurrence of $c$ in $S^{k-r+1}_B\setminus S^{k-r}_B$, we just need to consider the element $b_c$ corresponding to $a_c$ in \texttt{Conf}. Indeed, the contraction property of $<_B$ ensures that the context of $b_c$ is included in $B\setminus J^B$, that is, in the part of the structure where $\phi$ is an isomorphism.
  \item[Extremality and Tameness: ] every occurrence of a $k$-neighbourhood types $\tau\notin F$ are in part $X^B$. Indeed,
    \begin{itemize}
    \item $\mathcal A$ and $\mathcal B$ have the same number of occurrences of $\tau$,
    \item in $\mathcal A$, all these occurrences are in $X^A$, by tameness of $<_A$, and
    \item by contraction, an element of $X^A$ and its image by $\phi$ have the same $k$-neighbourhood type.
    \end{itemize}
    Let $1\leq r\leq k$, and let $b$ be an element outside of $S^{k-r}_B$. The $(k-r)$-context of $b$ in $(\mathcal B,<_B)$ is fully included in parts at segment distance at most ${(k-r)^2+(k-r)}$ from that of $b$. In particular, it does not intersect $X^B$, which means that all elements in it have a $k$-neighbourhood type in $F$, i.e. $\mathds{C}^{k-r}_{(\mathcal B,<_B)}(b)\in C^{k-r}_F$.
  \end{description}

\end{proof}

%% file: ModelChecking.tex
\begin{toappendix}
\section{Model Checking}
\label{section: model checking}

Let us now turn to the \emph{model checking} problem for \oicfo on a class $\mathcal C$ of graphs of bounded degree, which asks, on input consisting of a graph $\mathcal A$ and a sentence $\varphi\in\ \oicfo$, whether $\mathcal A\models\varphi$.

A well-known result by Seese~\cite{DBLP:journals/mscs/Seese96} states that on a class of bounded degree, \FO model checking is \emph{fixed-parameter tractable} (FPT for short), meaning that it can be solved in time $f(|\varphi|)\cdot O(|A|^c)$ for some computable function $f$ and some constant $c$ (in the aforementionned case, it even gets down $c=1$).

We have shown that when the degree is bounded, any sentence of \oicfo is equivalent to some sentence in \FO, but this does not immediately implies that the model checking problem for \oicfo is FPT on classes of bonuded degree, since the existence of a computable translation from \oicfo to \FO is not clear.
However, with a bit more work we are able to prove that this problem is indeed FPT:

\modelcheckingtheorem*

Let us first make an easy observation:

\begin{lemma}
  \label{lem:variable_namespace}
  Let $\varphi$ be a sentence of \oicfo of quantifier rank at most $k$. Then any variable $x_w^i$ appearing in $\varphi$ satisfies $i\leq k-|w|-1$.
\end{lemma}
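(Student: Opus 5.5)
We argue by structural induction on the formula, proving a slightly more general statement for formulas of $\CFO_S$ with free variables: if $\psi\in\CFO_S[k']$ sits in a position where the variables of $X_S$ are already bound, then every variable $x_w^i$ quantified or occurring inside $\psi$ satisfies $i\leq k'+c(S)-|w|-1$. Here $c(S)$ is a bookkeeping quantity, essentially the number of quantifiers already used. The cleaner route is to count quantifiers along a branch of the syntax tree.

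Fix an occurrence of a variable $x_w^i$ in $\varphi$, with $w=\alpha_1\cdots\alpha_n$. Since $\varphi$ is a sentence and the sets of indices grow along the syntax tree exactly as in the inductive definition of $\CFO_S[k]$, the occurrence lies in the scope of quantifiers binding each of the following variables:
\begin{itemize}
\item $x_\epsilon^0$ (validity forces $(\epsilon,0)\in S$ whenever $(w,0)\in S$);
\item $x_{\alpha_1}^0, x_{\alpha_1\alpha_2}^0,\dots,x_{w}^0$ (by the second validity condition applied repeatedly along the prefixes of $w$);
\item $x_w^1,\dots,x_w^i$ (by the first validity condition, since guarded continuations introduce indices in increasing order, $i$ being the least index not yet in $S$).
\end{itemize}

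All these variables are distinct and each is introduced by its own quantifier, as the three quantifier rules each add exactly one new index to $S$. They are all ancestors of the occurrence on one branch of the syntax tree, so they are nested quantifiers. There are $1+n+i$ of them, which gives $1+|w|+i\leq k$, i.e. $i\leq k-|w|-1$.

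The only point requiring care is that Boolean combinations do not reset the set $S$. An occurrence inside a subformula $\psi\in\CFO_{S'}$ has all of $X_{S'}$ bound above it, because $\varphi\in\CFO_\emptyset$ and each step from $S$ to $S'$ corresponds to exactly one quantifier on the path. This is the precise sense of \emph{nested}, and it gives the rank bound directly. Order-invariance plays no role; the lemma holds for every sentence of $\CFO[k]$.
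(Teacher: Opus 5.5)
Your proof is correct and follows the paper's argument. The paper builds the forest of quantifier occurrences, which has depth at most $k-1$, and notes that the root branch of a node labelled $x_w^i$ must contain the quantifications of $x_{w'}^0$ for every prefix $w'$ of $w$ and of $x_w^0,\dots,x_w^{i-1}$, giving $|w|+i\leq k-1$. That is your count of $1+|w|+i$ nested quantifiers, which you justify slightly more explicitly through the validity conditions on $S$; the opening paragraph with the undefined quantity $c(S)$ is an abandoned detour and can be deleted.
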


\begin{proof}
  In the following, we define the depth of a node in a tree (or forest) as the distance to the root, and the depth of a tree (or forest) as the maximal depth among its nodes.

  Let us consider the labelled forest $F_\varphi$ that has a node labelled $x$ for each occurrence of a quantification on variable $x$ in $\varphi$, and where a node $u$ is the child of $v$ if the quantification $u$ is in the subformula rooted in $v$, with no quantifier in between.

  By hypothesis, $F_\varphi$ has depth at most $k-1$.
  
  In $F_\varphi$, a node labelled $x_w^0$ must have depth at least $|w|$, since labels $x_{w'}^0$, for all prefixes $w'$ of $w$, must appear in a branch from $x_w^0$ to its root.

  For similar reasons, the depth of a node labelled $x_w^i$ is at least $i$ more than the depth of $x_w^0$.

  All in all, a node labelled $x_w^i$ must have depth at least $|w|+i$. Combined with the fact that the depth of $F_\varphi$ is bounded by $k-1$, we get $|w|+i\leq k-1$.
\end{proof}

In the following, when we refer to the intervals of an outer context, we mean the intervals of the neighborhood underlying this outer context.

Let us prove the following lemma by induction on $n<k$:

\begin{lemma}
  \label{lem:mc_induction}
  Let $n<k$.
  There exists a procedure $\textsc{mc}_n$ that takes as input
  \begin{itemize}
  \item a finite non-empty set $S\subseteq\Sigma^\star\times\N$,
  \item a formula $\varphi\in\CFO[n]$ with free variables in $X:=(x_w^i)_{(w,i)\in S}$,
  \item a family $(o_w)_{(w,0)\in S}$, where $o_w$ is an outer $(k-|w|-1)$-context,
  \item a family $\big(\mathcal N^+_w\big)_{(w,0)\in S}$ of structures and
  \item a family of functions $(h_w)_{(w,0)\in S}$,
  \end{itemize}
  answers \textsc{yes} or \textsc{no}, and has the following property.

  Let $S\subseteq\Sigma^\star\times\N$, $(\mathcal A,<)$ be an ordered graph and let $\nu:x_w^i\mapsto a_w^i$ be a valuation from $X$ to $A$ such that for every $(w,i)\in V,a_w^i\in\boule[k-|w|-1][a_w^0]$. Let, for each $(w,0)\in S$,
  \begin{itemize}
  \item $o_w:=\ocontext[k-|w|-1][a_w^0]$,
  \item $\mathcal N^+(a_w^0)$ be the structure $(\mathcal A,<)|_{\boule[k-|w|-1][a_w^0]}$ enriched with a constant symbol $c_w^i$ interpreted as $a_w^i$ for each $i\in\N$ such that $(w,i)\in S$ ($\boule[k-|w|-1][a_w^0]$ must contain all such $a_w^i$ for a valid valuation $\nu$, according to Lemma~\ref{lem:variable_namespace}), and
  \item $h_w$ be the function associating
    \begin{itemize}
    \item with each interval $\mathcal I\in\OInt[\boule[k-|w|-1][a_w^0]]$ the set of $\alpha\in\Sigma$ such that $(w\alpha,0)\in S$ and $a_{w\alpha}^0\in\mathcal I$, and
    \item with each element $a$ of $\neigh[k-|w|-1][a_w^0]$ the set of $\alpha\in\Sigma$ such that $(w\alpha,0)\in S$ and $a_{w\alpha}^0=a$.
    \end{itemize}
  \end{itemize}

  Then $\textsc{mc}_n$, on input $S$, $\varphi$, $(o_w)_{(w,0)\in S}$, $\big(\mathcal N^+_w\big)_{(w,0)\in S}$ and $(h_w)_{(w,0)\in S}$, outputs \textsc{yes} if and only if $(\mathcal A,<,\nu)\models\varphi$.
\end{lemma}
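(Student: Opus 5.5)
We construct $\textsc{mc}_n$ by induction on $n$, following the grammar of $\CFO[n]$. Since $\CFO_S[n]$ is closed under Boolean combinations, the procedure first splits $\varphi$ into its Boolean structure. Atomic formulas and existential subformulas are evaluated separately, and the results are combined. Throughout, the key invariant is the one already used as $\mathsf I_r$ in \Cref{main technical theorem}: the data $(o_w,\mathcal N^+_w,h_w)$ record exactly what \CFO can still see about cluster $w$.

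\textbf{Base case $n=0$.} Atomic formulas are read off the input directly.
\begin{itemize}
\item $P_\ell(x_w^i)$, $E(x_w^i,x_w^j)$, and $x_w^i\sim x_w^j$ are decided from $\mathcal N^+_w$, which carries the colours, edges, order and the constants $c_w^i$.
\item $x_{w\alpha}^0\sim x_w^i$ is decided from $h_w$. We locate the interval of $\OInt[\boule[k-|w|-1][a_w^0]]$, or the element, to which $h_w$ assigns $\alpha$. We then compare its position with that of $c_w^i$ in $\mathcal N^+_w$.
\end{itemize}
Only the input data are used, so correctness is immediate.

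\textbf{Inductive step.} Let $\varphi=\exists x\,\psi$ with $\psi\in\CFO[n-1]$. The procedure enumerates, in each of the three syntactic cases, finitely many candidate updated inputs. It calls $\textsc{mc}_{n-1}$ on each and answers \textsc{yes} iff some call does.
\begin{description}
\item[Continuation $x=x_w^i$ with $E(x_w^i,x_w^j)$.] Candidates are the neighbours $b$ of $c_w^j$ inside $\mathcal N^+_w$. By \Cref{lem:variable_namespace}, every such $b$ still lies in $\boule[k-|w|-1][a_w^0]$. For each $b$ we add a constant $c_w^i$ on $b$ and leave the other data unchanged.
\item[Introduction $x=x_{w\alpha}^0$.] Candidates are pairs consisting of a position and an outer context.
  \begin{itemize}
  \item The position is either an element $a$ of $\mathcal N^+_w$ or an interval $I$ of $\OInt[\boule[k-|w|-1][a_w^0]]$.
  \item If it is an element, the outer $(k-|w|-2)$-context $o_{w\alpha}$ is derived from $o_w$ by restriction, as in \Cref{lemma: if k context is the same k-1 context is the same}.
  \item If it is an interval, we choose $o_{w\alpha}$ among the $(k-|w|-2)$-contexts that $o_w$ records as occurring in $I$ (via $f$), with the appropriate outer refinement.
  \item In both cases we set $\mathcal N^+_{w\alpha}$ to the underlying neighbourhood of $o_{w\alpha}$ with only $c^0_{w\alpha}$, set $h_{w\alpha}$ empty, and add $\alpha$ to $h_w$ at the chosen position.
  \end{itemize}
\item[Root introduction.] The case $S=\emptyset$ is excluded by non-emptiness of $S$. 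It is handled outside the lemma by ranging over the outer $(k-1)$-contexts realised in $(\mathcal A,<)$.
\end{description}
The number of candidates is bounded by a function of $k$, $d$ and $|\sigma|$ by \Cref{corollary: all contexts in finite}. Hence the recursion runs in time depending only on $n$, $k$, $d$.

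\textbf{Correctness: the main obstacle.} Two directions must be checked.
\begin{itemize}
\item \emph{Every real witness matches a candidate.} If $b\in A$ witnesses $\exists x\,\psi$, its updated data form one of the enumerated candidates. For introduction moves this holds because the position of $b$ relative to $\boule[k-|w|-1][a_w^0]$ is either an element or an interval. The outer context of $b$, restricted to depth $k-|w|-2$, is then recorded in $o_w$; here the outer intervals are precisely what makes $b$ to the left or right of the ball visible.
\item \emph{Every candidate is realised.} Conversely, any candidate accepted by $o_w$ is realised by some actual $b$ in the right position, so the inductive hypothesis applies to it. Moreover, for the new valuation the data must be exactly those described in the lemma.
\end{itemize}
The delicate point is this last requirement: the new data must match the lemma's description exactly, and only the data for cluster $w$ and its new child may change. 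The other clusters' $h_{w'}$ are unaffected, because \CFO compares $x_{w\alpha}^0$ only with cluster $w$. The remaining worry is that the context of $b$ is only recorded up to depth $k-|w|-2$, while the nested clusters below $w\alpha$ need depth $k-|w\alpha|-1=k-|w|-2$. The indices match exactly, and \Cref{lem:variable_namespace} guarantees that no deeper information is ever required. Verifying this depth bookkeeping carefully, including the outer-interval refinement, is where I expect most of the work.
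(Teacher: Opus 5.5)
Your proposal is correct and follows the paper's proof essentially step for step. The shared structure is: induction on $n$; atomic formulas evaluated from $\mathcal N^+_w$ and $h_w$; continuation moves ranging over the neighbours of $c_w^j$ in $\mathcal N^+_w$; introduction moves ranging over pairs (element of the ball or outer interval of $N^{k-|w|-1}(a_w^0)$, outer $(k-|w|-2)$-context recorded there by $o_w$); and correctness resting on the fact that the recursive call's input depends only on that pair, so one realised witness stands for all. The one thing to fix is where the child's context comes from: it is read directly off the $g$-component (for an element) or the $f$-component (for an interval) of $o_w$, which in an outer context already take values in outer $(k-|w|-2)$-contexts. So no ``restriction'' via \Cref{lemma: if k context is the same k-1 context is the same} is needed (that lemma relates two depths of contexts of the \emph{same} centre, not of a neighbour), and no extra ``outer refinement'' either.
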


\begin{proof}
  $\textsc{MC}_0$ receives as input a quantifier-free formula $\varphi$. Each atomic formula is of the form
  \begin{itemize}
  \item $E(x_w^i,x_w^j)$ or $x_w^i\sim x_w^j$ for $(w,i),(w,j)\in S$ and ${\sim}\in\{<,=\}$, which can be evaluated in $(\mathcal N^+_w)$, or
  \item $x_{w\alpha}^0\sim x_w^i$ for $(w\alpha,0),(w,i)\in S$ and ${\sim}\in\{<,>,=\}$, which can be evaluated according to $h_w$.
  \end{itemize}
  $\textsc{MC}_0$ then computes and returns the boolean combination of these evaluations, which indeed corresponds to whether $(\mathcal A,<,\nu)\models\varphi$.

  Let us turn to $\textsc{MC}_{n+1}$, assuming the existence of a suitable procedure $\textsc{MC}_n$. The input formula $\varphi\in\CFO[n+1]$, with free variables indexed in $S$, is a boolean combination of formulas of the form
  \begin{enumerate}[(a).]
  \item\label{enum:intro} $\exists x_{w\alpha}^0 \psi$, where $\alpha\in\Sigma$ and $\psi\in\CFO[n]$ has free variables indexed in $S\cup\{(w\alpha,0)\}$, and 
  \item\label{enum:cont} $\exists x_w^i, E(x_w^i,x_w^j)\land \theta$, where $(w,j)\in S$.
  \end{enumerate}
  Once again, we only need to show that $\textsc{mc}_{n+1}$ answers correctly on formulas of type (\ref{enum:intro}) and (\ref{enum:cont}), the end result being computed by evaluating a boolean combination of the results of these subprocedures.

  \begin{description}
  \item[Case (\ref{enum:intro}):] assume first $\varphi=\exists x_{w\alpha}^0 \psi$. Remark, given an interval $\mathcal I\in\OInt[\context[k-|w|-1][a_w^0]]$ and two elements $a_{w\alpha}^0, b_{w\alpha}^0\in\mathcal I$ such that $\ocontext[k-|w|-2][a_{w\alpha}^0]=\ocontext[k-|w|-2][b_{w\alpha}^0]$, that the induction hypothesis guarantees \[(\mathcal A,<,\nu+x_{w\alpha}^0\mapsto {a_{w\alpha}^0})\models\psi\text{ iff }(\mathcal A,<,\nu+x_{w\alpha}^0\mapsto b_{w\alpha}^0)\models\psi\,,\] since the arguments given to $\textsc{mc}_n$ are the same in both cases, and thus its answer is shared. In other words, to know whether an element in a given interval is a good candidate to witness $\exists x_{w\alpha}^0 \psi$, one only needs looking at its outer $(k-|w|-2)$-context.

    With that in mind, $\textsc{mc}_n$ does the following:
    \begin{itemize}
    \item for each interval $\mathcal I$ in $o_w$ and each outer $(k-|w|-2)$-context $o_{w\alpha}$ which is realised in $\mathcal I$ (as specified by the data contained in $o_w$), it makes a call to $\textsc{MC}_n$ with arguments
      \begin{itemize}
      \item $S\cup\{(w\alpha,0)\}$
      \item $\psi$
      \item $(o_{w'})_{(w',0)\in S} + o_{w\alpha}$
      \item $\big(\mathcal N^+_{w'}\big)_{(w',0)\in S} + \mathcal N^+_{w\alpha}$, where $\mathcal N^+_{w\alpha}$ is extracted from $o_{w\alpha}$ and contains the $(k-|w|-2)$-neighborhood of the center of $o_{w\alpha}$, with constant $c_{w\alpha}^0$ interpreted as its center
      \item $(h_{w'})_{(w',0)\in S, w'\neq w} + h_w:
        \begin{cases}
          \mathcal I&\mapsto h_w(\mathcal I)\cup\{\alpha\}\\
          x\neq\mathcal I&\mapsto h_w(x)
        \end{cases}
          $
      \end{itemize}
      It follows from the previous remark that this call to $\textsc{MC}_n$ answers \textsc{yes} exactly when some (equivalently, every) occurrence $a_{w\alpha}^0$ in $\mathcal I$ with outer context $o_{w\alpha}$ is such that $(\mathcal A,<,\nu+x_{w\alpha}^0\mapsto a_{w\alpha}^0)\models\psi$. 

    \item for each element $e$ of the $(k-|w|-1)$-neighborhood of the center of $o_w$ with outer $(k-|w|-2)$-context $o_{w\alpha}$ (as specified by the data contained in $o_w$), it makes a call to $\textsc{MC}_n$ with arguments
      \begin{itemize}
      \item $S\cup\{(w\alpha,0)\}$
      \item $\psi$
      \item $(o_{w'})_{(w',0)\in S} + o_{w\alpha}$
      \item $\big(\mathcal N^+_{w'}\big)_{(w',0)\in S} + \mathcal N^+_{w\alpha}$, where $\mathcal N^+_{w\alpha}$ is extracted from $o_{w\alpha}$ and contains the $(k-|w|-2)$-neighborhood of the center of $o_{w\alpha}$, with constant $c_{w\alpha}^0$ interpreted as its center
      \item $(h_{w'})_{(w',0)\in S, w'\neq w} + h_w:
        \begin{cases}
          e&\mapsto h_w(e)\cup\{\alpha\}\\
          x\neq e&\mapsto h_w(x)
        \end{cases}
          $
      \end{itemize}
    \end{itemize}
    $\textsc{MC}_{n+1}$ then answers \textsc{yes} if and only if at least one of these calls to $\textsc{mc}_n$ returns \textsc{yes}.

    Since any witness to $\exists x_{w\alpha}^0 \psi$ must either fall within an interval of $\OInt[\ocontext[k-|w|-1][a_w^0]]$ or in \boule[k-|w|-1][a_w^0], $\textsc{mc}_{n+1}$ answers \textsc{yes} if and only if $(\mathcal A, <,\nu)\models\exists x_{w\alpha}^0 \psi$.

  \item[Case (\ref{enum:cont}):] it now remains to study the case where $\varphi=\exists x_w^i, E(x_w^i,x_w^j)\land \theta$ for some $(w,j)\in S$.

    In that case, for every element $e$ from $\mathcal N^+_w$ that is adjacent to the constant $c_w^j$, $\textsc{mc}_{n+1}$ does a call to $\textsc{mc}_n$ with input
    \begin{itemize}
    \item $S\cup\{(w,i)\}$
    \item $\theta$
    \item $(o_{w'})_{(w',0)\in S}$
    \item $\big(\mathcal N^+_{w'}\big)_{(w',0)\in S,w'\neq w} + \mathcal N^\star_{w}$, where $\mathcal N^\star_{w}$ is obtained from the structure $\mathcal N^+_w$ by adding the constant $c_{w}^i$, interpreted as $e$
    \item $(h_{w'})_{(w',0)\in S}$
    \end{itemize}
    $\textsc{MC}_{n+1}$ then answers \textsc{yes} if and only if at least one of these calls to $\textsc{mc}_n$ returns \textsc{yes}.

    Since the neighbors of $a_w^j$ are the only potential witnesses to $\exists x_w^i, E(x_w^i,x_w^j)\land \theta$, $\textsc{mc}_{n+1}$ answers \textsc{yes} if and only if $(\mathcal A, <,\nu)\models\exists x_w^i, E(x_w^i,x_w^j)\land \theta$.
  \end{description}
\end{proof}

Before concluding the proof of Theorem~\ref{th:model-checking}, we need the following lemma:

\begin{lemma}
  \label{lem:outer_contexts}
  Given an ordered graph $(\struct,<)$ and an integer $k$, the set of outer $k$-contexts realised in $(\struct,<)$ can be computed in FPT time.
\end{lemma}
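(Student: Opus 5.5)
We compute outer contexts bottom-up in $k$, processing every vertex of the ordered graph. The bounded degree is what keeps the work per vertex bounded. Write $n:=|A|$ and let $d$ be the degree bound of the class. We then maintain, for each level $j\leq k$ and each vertex $v\in A$, the outer $j$-context $\ocontext[j][v][(\struct,<)]$, stored as a finite object whose size depends only on $j$ and $d$.

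We store it as the ordered restriction to $\boule[j][v]$, together with $g$ (the outer $(j-1)$-context of each element of the ball) and $f$ (the set of outer $(j-1)$-contexts realised in each interval of $\OInt[\boule[j][v]]$). The set of realised outer $k$-contexts is then read off at the end. For level $0$ this is just the colour of each vertex, computed in linear time. Since the outer $j$-contexts of a bounded-degree graph range over a finite set whose size is bounded as in \Cref{corollary: all contexts in finite}, every vertex's level-$j$ record can be encoded by an index in a table of size $\nc^j_d$ (or its outer analogue), which is computable from $j$, $d$ and $\sigma$.

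For the inductive step from $j-1$ to $j$, we first sort $A$ by $<$ once, in $O(n\log n)$ time. Then, for each label $c$ among the boundedly many outer $(j-1)$-contexts, we build a prefix-count array $P_c[t]$ recording the number of vertices among the first $t$ positions whose outer $(j-1)$-context is $c$.

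Fix a vertex $v$. We compute $\boule[j][v]$ by BFS in time $O(d^{j})$ and sort it by position. The restriction of $<$ and $E$ to the ball, together with $g$, are read off directly from the level-$(j-1)$ labels. For each interval of $\OInt[\boule[j][v]]$, including the two unbounded ones, we use the prefix arrays to test in $O(1)$ per label whether $c$ occurs strictly between the two consecutive endpoints. This yields $f$. The cost per vertex is therefore bounded by a function of $j$, $d$ and $|\sigma|$, and a level costs $h(j,d)\cdot n$ plus the sort. Repeating this for $j=1,\dots,k$ gives total time $h'(k,d)\cdot n\log n$, which is FPT with parameter $k$.

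The main obstacle is correctness rather than running time. The outer $(j-1)$-contexts stored for elements inside intervals must be exactly those used in the recursive definition (\Cref{def:outer_context}), i.e. computed with outer intervals at every level. The bottom-up scheme guarantees this, since $f$ at level $j$ only refers to level-$(j-1)$ labels of arbitrary vertices, and all of these were computed in the previous pass. The remaining care goes into the canonical encoding: two vertices should receive the same index exactly when their outer contexts coincide. To achieve this, we represent the ordered ball as the list of its elements in $<$-order, with adjacency, colours, the centre, the $g$-labels and the $f$-sets attached. Because the order is linear, this representation is canonical and no isomorphism testing is needed. Equality of records is then literal equality of these bounded-size lists, and hashing them into indices is routine.
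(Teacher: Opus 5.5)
Your proof is correct and follows the paper's route. You compute $v\mapsto\ocontext[j][v]$ level by level, read $g$ off the previous level, obtain $f$ by collecting the outer $(j-1)$-contexts realised in each interval of $\mathrm{Int}^+(N^j(v))$, and finally collect the images at level $k$; the only difference is that the paper scans each interval directly (time linear in $|A|$ per vertex, quadratic overall), whereas your prefix-count arrays make the per-vertex cost independent of $|A|$ and bring the total down to $O(|A|\log|A|)$.
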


\begin{proof}
  Let us exhibit a recursive procedure to compute the function $v\mapsto\ocontext[n][v]$ in a time that is quadratic in $|A|$.

  If $n=0$, we just have to check the colour of $a$: one pass through \struct is enough, and the time taken is linear in $|A|$.

  For $n>0$, let us first make a recursive call in order to compute the function $a\mapsto\ocontext[n-1]$. Then to compute $\ocontext[n][v]$, we go through each interval of \OInt[\boule[n][v]], and collect the set of outer $(n-1)$-contexts realised in this interval. This allows us to compute function $f$. Computing $g$ is trivial given the result of the recursive call.

  Doing this computation for vertex $v$ requires time linear in $|A|$. All in all, the function $v\mapsto\ocontext[n][v]$ can be computed in time quadratic in $|A|$.

  To compute the set of outer $k$-contexts realised in $(\struct,<)$, one just needs to collect the images of $v\mapsto\ocontext[k][v]$.
\end{proof}

We are now ready to prove that model checking \oicfo is FPT on classes of bounded degree.

\begin{proof}[Proof of Theorem~\ref{th:model-checking}]
  Let us consider an ordered graph $\struct\in\mathcal C$, where $\mathcal C$ is a class of bounded degree, and a sentence $\varphi\in\oicfo$.

  Since one can decompose $\varphi$ as a boolean combination of formulas of the form $\exists x_\epsilon^0\psi$, where $\psi\in\CFO[k-1]$, it is enough to be able to model check formulas of this form. In what follows, we thus consider $\varphi=\exists x_\epsilon^0\psi$, where $\psi\in\CFO[k-1]$ has only $x_\epsilon^0$ as free variable.

  We start by endowing $\struct$ with an arbitrary order $<$, and computing the set $O$ of outer $(k-1)$-contexts realised in $(\struct,<)$. This can be done in FPT time according to Lemma~\ref{lem:outer_contexts}.
  
  For each outer $(k-1)$-context $o_\epsilon^0\in O$, we make a call to $\textsc{mc}_{k-1}$ with input
  \begin{itemize}
  \item $S=\{(\epsilon,0)\}$
  \item $\psi$
  \item $o_\epsilon^0$
  \item $\mathcal N^+_{(\epsilon,0)}$, which contains the $(k-1)$-neighborhood of the center of $o_\epsilon^0$ (which can be gathered from $o_\epsilon^0$), with constant $c_\epsilon^0$ interpreted as its center
  \item $h_\epsilon:x\mapsto\emptyset$
  \end{itemize}
  We then answer \textsc{yes} if and only if at least one of these calls answers \textsc{yes}.  According to Lemma~\ref{lem:mc_induction}, whether an element is a suitable witness to $\exists x_\epsilon^0\psi$ depends only on its outer $(k-1)$-context, which ensure the correction of the model checking procedure.
  
  For the running time, remark that the number of these calls does not depend on \struct, as it is bounded by the number of $(k-1)$-contexts. Similarly, the time taken by each individual call is independent on \struct, as the size of the input depends only on $\varphi$ and $\mathcal C$.

  We have thus exhibited a model checking procedure for $\oicfo[k]$ on $\mathcal C$ that runs in time quadratic in $|A|$.
\end{proof}

\end{toappendix}

%% file: OtherProperties.tex
\begin{toappendix}
\section{Separating example}
\label{section: further properties of cfo}

We have shown that \oicfo is included in \FO on classes of graphs of bounded degree.
This raises the question whether \oicfo is able to go beyond \FO when the degree is not assumed to be bounded.

We answer this question by the positive, by relying on the (unpublished, but folklore) property introduced by Gurevich to prove that $\FO\subsetneq\ \oifo$:

\seperatingexample*

\begin{proof}[Proof sketch]
To prove that \oifo is stricly more expressive than plain \FO, Gurevich exhibited a sentence of \oifo defining a property which is not \FO-definable. In the following, we refer to the presentation of this construction that can be found in \cite[Section 5.2]{DBLP:books/sp/Libkin04}.

The property \qeven separating \oifo from \FO is defined over vocabulary $\{\subseteq\}$, where $\subseteq$ is a binary relation symbol.\footnote{We refer in this proof to the binary relation symbol as $\subseteq$ instead of $E$ in the interest of uniformity of notation with \cite[Section 5.2]{DBLP:books/sp/Libkin04}. This obviously has no consequence on the expressive power of \oicfo.} A structure satisfies property \qeven iff it is isomorphic to a boolean algebra $(2^X,\subseteq)$ for some finite set $X$ of even size. 

An Ehrenfeucht-Fra\"{i}ssé argument shows that \qeven is not definable in \FO \cite[Lemma 5.5]{DBLP:books/sp/Libkin04}.

To prove the theorem, we need only show that \qeven is not only definable by some sentence $\phi$ of \oifo \cite[Lemma 5.4]{DBLP:books/sp/Libkin04}, but can already be defined in \oicfo. To that end, let us first consider a formula $\psi$ in $\cfo$ stating that all the elements are related by the relation $\subseteq$ with  $x_\epsilon^0$: such a formula can be written down by 
\begin{enumerate}[(i)]
\item first ensuring that all elements at distance at most $2$ from $x_\epsilon^0$ in the Gaifman graph are in fact related by the relation $\subseteq$ with $x_\epsilon^0$ (making of $x_\epsilon^0$ a central vertex of its connected component, which has diameter at most $2$),
\item\label{enu:cc_bounded_diam} then by stating, in a similar fashion, that every connected component has diameter at most $2$,
\item finally by making sure, using the order, that the connected component of $x_\epsilon^0$ is the only one. This can be enforced by saying that the smallest element of a connected component (a definable property in view of (\ref{enu:cc_bounded_diam})) cannot have a smaller element in the whole structure.
\end{enumerate}

With $\psi$ defined, property \qeven is then captured by the \oicfo sentence \[\Phi:=\exists x^0_\epsilon\ \psi\land\phi[\cdot\subseteq x^0_\epsilon]\,,\]
where $\phi[\cdot\subseteq x^0_\epsilon]$ is the formula obtained from $\phi$ by replacing each subformula \[\exists x\ \theta\quad\text{by}\quad\exists x_\epsilon^i\ x_\epsilon^i\subseteq x_\epsilon^0\land\theta\] and each \[\forall x\ \theta\quad\text{by}\quad\forall x_\epsilon^i\ x_\epsilon^i\subseteq x_\epsilon^0\to\theta\,,\] $i$ being the smallest integer for which $x_\epsilon^i$ is a new variable.
Indeed, sentence $\Phi$ belongs to $\cfo$ by construction; it then suffices to notice the equivalence of $\phi$ (which is itself in \oifo) and $\Phi$ to get that $\Phi$ is in \oicfo.
\end{proof}
\end{toappendix}